\pgfplotsset{compat=1.5}
\newtheorem{theorem}{Theorem}
\newtheorem{corollary}{Corollary}
\newtheorem{lemma}{Lemma}
\newtheorem{proposition}{Proposition}
\def\bh{\mathbf{h}}
\def\bH{\mathbf{H}}
\def\bx{\mathbf{x}}
\def\ih{\widetilde{\mathbf{h}}}
\def\hH{\widehat{\mathbf{H}}}
\def\hh{\widehat{\mathbf{h}}}
\def\bI{\mathbf{I}}
\def\be{\begin{equation}}
\def\ee{\end{equation}}
\def\bea{\begin{align}}
\def\eea{\end{align}}
\def\bean{\begin{align*}}
\def\eean{\end{align*}}
\def\Tr{\text{Tr}}
\def\ba{\left[\begin{array}}
\def\ea{\end{array}\right]}
\def\as{\overset{a.s}{\longrightarrow}}
\def\ip{\overset{i.p.}{\longrightarrow}}
\def\ms{\overset{L_2}{\longrightarrow}}
\newcommand{\mb}[1]{
\mathbf{#1}
}
\newcommand{\TR}[1]{
\text{Tr}\left(#1\right)
}
\def\tSINR{\text{SINR}}
\def\bs{\mathbf{s}}
\def\bu{\mathbf{u}}
\def\bg{\mathbf{g}}
\def\sG{\mathsf{G}}
\def\bd{\mathbf{d}}
\def\sD{\mathsf{D}}
\def\bv{\mathbf{v}}
\def\bp{\mathbf{p}}
\def\by{\mathbf{y}}
\def\bw{\mathbf{w}}
\def\hw{\hat{\mathbf{w}}}
\def\bO{\mathbf{O}}
\def\bA{\mathbf{A}}
\def\bX{\mathbf{X}}
\def\bg{\mathbf{g}}
\def\bQ{\mathbf{Q}}
\def\b1{\mathbf{1}}
\def\bPi{\mathbf{\Pi}}
\def\bL{\mathbf{\Lambda}}
\def\bPhi{\mathbf{\Phi}}
\def\half{\frac{1}{2}}
\def\bryu{\breve{\gamma}_u}
\begin{document}
\title{Base Station Cooperation with Feedback Optimization: A Large System Analysis}

\author{\IEEEauthorblockN{Rusdha Muharar, Randa Zakhour and Jamie Evans}
\thanks{The material in this paper was presented in part at the IEEE International Conference on Communications (ICC), Ottawa, Canada, June 2012 and at the IEEE International Symposium on Information Theory (ISIT), Cambridge, MA, USA, July 2012.}
\thanks{Rusdha Muharar is with the Department of Electrical and Computer Systems Engineering, Monash University, Clayton, VIC 3800, Australia and Syiah Kuala University, Banda Aceh, Indonesia (e-mail: rusdha.muharar@monash.edu; r.muharar@unsyiah.ac.id).}%
\thanks{Randa Zakhour is part-time faculty at the Schools of Engineering of the American University of Science and Technology and of the Lebanese International University, Lebanon  (email: randa.zakhour@gmail.com).}
\thanks{Jamie Evans is with the Department of Electrical and Computer Systems Engineering, Monash University, Clayton, VIC 3800, Australia (email: jamie.evans@monash.edu).}
}

\maketitle \thispagestyle{plain} \pagestyle{plain}

\begin{abstract}
%Channel state information (CSI) at the base stations (BSs) can significantly 
%increase the spectral efficiency in single and multi-cell broadcast channels. 
 %: the quality of 
%the CSI estimates at the BSs will obviously affect the performance of the 
%system. %It is interesting to investigate how a user allocates its resources 
%optimally to feedback its channel information. 
In this paper, we study feedback optimization problems that maximize the 
users' signal to interference plus noise ratio (SINR) in a two-cell MIMO 
broadcast channel. Assuming the users learn their direct and interfering 
channels perfectly, they can feed back this information to the base stations 
(BSs) over the uplink channels. The BSs then use the channel information to 
design their transmission scheme. Two types of feedback are considered: 
analog and digital.   In the analog feedback case, the users send their 
unquantized and uncoded CSI over the uplink channels. In this context, given 
a user's fixed transmit power, we  investigate how he/she should optimally 
allocate it to feed back the direct and interfering (or cross) CSI 
for two types of base station cooperation schemes, namely, Multi-Cell Processing (MCP) and 
Coordinated Beamforming (CBf). In the digital feedback case, the direct and cross link channel vectors of each user are quantized separately, each using RVQ,
with different size codebooks. The users then send the index of the quantization vector in the corresponding codebook to the BSs. Similar 
to the feedback optimization problem in the analog feedback, we investigate 
the optimal bit partitioning for the direct and interfering link for both 
types of cooperation.

We focus on regularized channel inversion precoding structures and perform 
our analysis in the large system limit in which the number of users per cell ($K$) 
and the number of antennas per BS ($N$) tend to infinity with their 
ratio $\beta=\frac{K}{N}$  held fixed. 
We show that for both types of cooperation, for some values of interfering 
channel gain, usually at low values, no cooperation between the base stations 
is preferred: This is because, for these values of cross channel gain, the 
channel estimates for the cross link are not accurate enough for their knowledge 
to contribute to improving the SINR and there is 
no benefit in doing base station cooperation under that condition. We also show that for the MCP scheme, unlike in the perfect CSI case, 
the SINR improves only when the interfering channel gain is 
above a certain threshold.      
\end{abstract}

\section{Introduction}
\subsection{Background}
Recently, many applications that require high data rates such as high quality 
video streaming and huge volume data transfers through wireless
communication systems have emerged. MIMO communication systems have 
arisen as a promising candidate to support this requirement and have been adopted for existing 
and future wireless communication standards such as in IEEE 802.11n and 4G networks.
Current MIMO technological advancements can be considered as 
the results of research works started about fifteen years ago.
So far, there has been a considerable amout of work focusing on single 
user and single-cell multiuser MIMO systems.
Only recently, researchers have started to put more attention to investigate 
how to maximize data rates in \emph{multi-cell} MIMO networks, particularly 
in the downlink \cite[and references therein]{Gesbert_jsac10}.

The main challenge that limits the spectral efficiency in the downlink of 
multi-cell networks, besides intra-cell interference, is the inter-cell 
interference (ICI). The conventional approach to mitigate this interference 
is to use spatial reuse of resources such as frequency and time \cite{Gesbert_
 jsac10}. The move towards aggressive frequency or time reuse will cause the 
networks to be interference limited especially for the users at the cell 
edge. The current view is to mitigate ICI through base station (BS) 
cooperations. Within this scheme, the BSs share the control signal, channel 
state  information (CSI) and data symbols for all users via a central 
processing unit or wired backhaul links \cite{Ramya_tsp11}. 

It has been established in \cite{Shamai_vtc01, Zhang_eura04, Karakayali_wc06 
,Foschini_iee06, Jing_eura08, Somekh_it09}, to name a few,  that  MIMO 
cooperation schemes  provide a significant increase in spectral efficiency 
compared to conventional cellular networks. BS cooperation can be implemented 
at different levels \cite{Gesbert_jsac10}.  In the \textit{Multi-Cell Processing} 
setup, also known as \textsl{Network MIMO} or \textit{Coordinated Multi-Point} (CoMP) transmission, the BSs 
fully cooperate and share both the channel state information (CSI) and 
transmission data. This full cooperation requires high capacity backhaul 
links which are sometimes not viable in practical settings. To alleviate this 
requirement, only CSI (including direct and interfering channels) is shared 
amongst base stations in the \textit{interference coordination} scheme \cite{
Gesbert_jsac10}. Several works have addressed  coordinated beamforming 
and power control schemes to improve the spectral efficiency in  interference-limited downlink multi-cell networks. Detailed discussions regarding these 
topics can be found in \cite{Gesbert_jsac 10} and references therein. 

In both base station cooperation schemes, the CSI at the base stations plays 
an important role in maximizing the system performance. The  
base stations use this information to adapt their transmission strategies to the channel 
conditions.  The benefit 
of having CSI at the transmitter (CSIT) with respect to the capacity in single and 
multi-cell multi-antenna systems is nicely summarized in \cite{Goldsmith_jsac
03, Biglieri_book07}. However, these advantages are also accompanied by the 
overhead cost for the CSI acquisition via channel training and feedback in 
frequency division duplex (FDD) systems. It needs to scale proportionally to 
the number of transmit and  receive antennas and the number of users in the 
system  in order to maintain a constant gap of the sum-rate with respect to the full CSI case 
\cite{Love_jsac08}. Moreover, in practical systems, the backhaul-link 
capacity for CSI and user data exchanges and feedback-link bandwidth are 
limited \cite{Ramya_tsp11}.  Considering the CSI signaling overhead from 
channel training and CSI feedback, references \cite{Ramprashad_pimrc09, 
Ramprashad_asil09} (see also \cite{JZhang_arxiv11}) suggested that the 
conventional single-cell processing (SCP) without coordination may outperform the 
cooperative systems, even the MCP scheme. In this paper, to reduce 
the complexity in the analysis, we ignore the (important) constraints of 
limited backhaul-link and CSI training overhead. We assume a perfect CSI 
training so that all users know their CSI perfectly. We focus on studying 
how to allocate feedback resources, that depend on the feedback schemes, to 
send the CSI for the direct channel and interfering (cross) channel to BSs so 
that the users' SINR are maximized. Two feedback schemes are considered in 
our study: the analog feedback scheme, introduced in \cite{Marzetta_tsp06} 
and the limited (quantized) feedback via random vector quantization (RVQ), 
introduced in \cite{Santipach_it09}. In the analog feedback scheme, each user 
sends its unquantized and uncoded channel state information through the 
uplink channel. Hence, we ask the question, for a given uplink power 
constraint, what fraction of this uplink power is allocated optimally to 
transmit the direct and interfering channel information? For the digital 
feedback scheme, the number of feedback bits determines the quality of the 
CSI. Hence, we can ask, how many bits are optimally needed to feedback the 
direct and cross CSI? 

\subsection{Contributions} 

The main goal of this paper is to optimize and 
investigate the effect of feedback  for MCP and CBf cooperation schemes under analog 
and quantized feedback (via RVQ). We consider a \emph{symmetric} two-cell 
Multi-Input Single-Output (MISO) network where the base stations have multiple antennas and each 
user has a single antenna. We assume that the users in each cell know their 
own channel perfectly: they feed back this information through the uplink 
channel and the base stations form the users' channel estimates. The BSs 
use these estimates to construct a regularized channel inversion (RCI) 
type beamformer, also called regularized zero-forcing (RZF), to precode 
the data symbols of the users. The precoders follow the structures proposed in  \cite{Zakhour_it12}.
Unlike \cite{Ramya_tsp11,JZhang_arxiv11}, we 
assume several users are simultaneously active in each cell so that the users 
experience both intra- and inter-cell interference. To mitigate ICI through 
base station cooperation, we consider both full cooperation (MCP) and 
interference coordination via CBf.  

Our contributions can be summarized as follows. First, under both feedback 
models and both cooperation schemes, we derive the SINR expression in the 
large system limit, also called the \textit{limiting SINR}, where the number 
of antennas at base stations and the number of users in each cell go to 
infinity with their ratio kept fixed: As our numerical results will show, \emph{this is indicative of the average 
performance for even finite numbers of antennas}. Then, we formulate a joint 
optimization problem that performs the feedback optimization for both 
feedback models and  both cooperation schemes and finds the optimal 
regularization parameter of the corresponding RCI-structured precoder.  
The regularization parameter is an important design parameter for 
the precoder because it controls the amount of interference introduced 
to the users. Optimizing this parameter, as discussed later, 
will allow the precoder to adapt to the changes of the CSIT quality and 
consequently produces a '\textit{robust beamformer}'.

We analyze the behavior of the maximum limiting SINR as a function of the cross channel gains and the available feedback resources, and identify, for both the analog and quantized feedback models, regions where SCP processing is optimal. We also show that whereas in the perfect CSI case, MCP performance always improves with epsilon, this only occurs after a certain threshold is crossed in both analog and limited feedback cases.

Parts of this work appeared in \cite{Muharar_icc12,Muharar_isit12}, but  without the proofs.

\subsection{Related Works}
In the last decade, there has been a large volume of research discussing
feedback schemes in multi-antenna systems. A summary of digital feedback (also known as limited or finite-rate feedback) 
schemes  in multi-antenna (also single-antenna) and multi-user systems in the single-cell setup can be found 
in \cite{Love_jsac08}. Since the optimal codebook for the limited feedback is 
not known yet \cite{Santipach_it09, Jindal_it06, Caire_it10}, the use of RVQ, 
which is based on a random codebook, as the feedback scheme becomes popular. 
Furthermore, the RVQ-based system performance analysis is also more tractable.  
In multiple-antenna and multiuser systems, works on the analog feedback commonly 
refer to \cite{Marzetta_tsp06} (sometimes \cite{Samar
_comm06}).      

The paper by Jindal \cite{Jindal_it06} sparked the use of RVQ in analyzing 
broadcast channels. Considering a MISO broadcast channel with a zero-forcing 
(ZF) precoder and  assuming that each user knows its own channel, the main 
result in the paper is that the feedback rate should be increased linearly 
with the signal-to-noise ratio (SNR) to maintain the full multiplexing 
gain.   Caire et al. in \cite{Caire_it10} investigate achievable ergodic sum rates of BC with ZF precoder
under several practical scenarios. The CSI acquisition involves four steps; downlink training, CSI 
feedback, beamformer selection and dedicated training where each user will 
try to estimate the coupling between its channel and the beamforming vectors. 
They derive and compare the lower bound and upper bound of the achievable ergodic sum- 
rate of the analog feedback as in \cite{Marzetta_tsp06} and RVQ-based digital feedback under different considerations, e.g., 
feedback transmission over AWGN and MAC channel, feedback delay and feedback errors for the digital 
feedback scheme. A subsequent work by Kobayashi et al. in \cite{Koba_comm11} 
studies training and feedback optimizations for the same system setup as in 
\cite{Caire_it10} except without dedicated training. The optimal period for 
the training and feedback that minimized achievable rate gap (with and 
without perfect CSI) are derived under different scenarios as in \cite{Caire_ 
it10}. The authors also show that the digital feedback can give a significant 
advantage over the analog feedback.   In the same spirit as \cite{Jindal_it06}
, reference \cite{Wagner_it12} discusses the feedback scaling (as SNR 
increases) in order to maintain a constant rate gap for a broadcast channel 
with  RCI precoder. The analysis has been done in 
the large system limit since the analysis the finite-size turns out to be 
difficult \cite{Jindal_it06}. Moreover, besides analyzing for the case $K=N$, 
as in \cite{ Jindal_it06}, the authors also investigate 
the case $K < N$.         

While channel state feedback in the single-cell system has received a 
considerable amount of attention so far, fewer works have addressed this 
problem in multi-cell settings. The effect of channel uncertainty, 
specifically the channel estimation error, in the multi-cell setup is studied 
in \cite{Huh_it12, Jose_wcomm11}. In \cite{Jose_wcomm11}, the authors 
conclude that when channel estimates at one base station contain 
interferences from the users in other cells, also called as pilot 
contamination phenomenon, the inter-cell interference increases. Thus, this phenomenon 
could severely impact the performance of the systems. Huh et al. in \cite{Huh_it12}
investigate optimal user scheduling strategies to reduce the feedback and also the 
effects of channel estimation error on the ergodic sum-rate of the clustered 
Network MIMO systems. They consider the ZF  precoder at the base 
stations and derive the optimal power allocation that maximizes the 
weighted sum-rate. In deriving the results, it is assumed that 
the BSs received perfectly (error-free) the CSI fed back by the users.  
%Moreover, they also propose a user scheduling algorithm where the 
%users are randomly selected based on the optimal fraction of users in each 
%cluster which is obtained from the large system analysis. 
The overhead caused by the channel training is also investigated and they observe that there is a trade-off 
between the number of cooperating antennas and the cost of estimating the 
channel. Based on the trade-off, the optimal cooperation 
cluster size can be determined. By incorporating 
the channel training cost, no-coordination amongst the base stations could be 
preferable. The same conclusion is also obtained in \cite{Ramprashad_asil09 
,Ramprashad_pimrc09}.  

For the interference coordination scheme, \cite{Ramya_tsp11} presents the 
RVQ-based limited feedback in an infinite  Wyner cellular model using 
generalized eigenvector beamforming at the base stations. The work adopts the 
intra-cell TDMA mechanism where a single user is active in each cell per time 
slot. Each user in each cell is also assumed to know its downlink channel 
perfectly. Based on that system model, an optimal bit partitioning strategy 
for direct and interfering channels that minimizes the sum-rate gap is 
proposed. Explicitly, it is a function of the received SNR from the direct 
and cross links. It is observed that as the received SNR from the cross link 
increases, more bits are allocated to quantize the cross channel. A better 
quality of the cross channel estimate will help to reduce 
the inter-cell interference. The authors also show that the proposed bit 
partitioning scheme reduces the average sum-rate loss. Also in the 
interference coordination setting, \cite{JZhang_arxiv11} takes into account 
both CSI training and feedback in analyzing the system what they called the inter-cell 
interference cancellation (ICIC) scheme. In ICIC, the precoding vector of a user is 
the projection of its channel in the null-space of the others users' channels 
in other cells so that the transmission from this user will not cause interference 
to the users in other cells. The work also assumes the intra-cell TDMA and 
presents the training optimization and feedback optimization for both analog 
and digital feedback (RVQ). Based on that system setup, the most interesting 
result is  that the training optimization is more important than the feedback 
optimization for the analog feedback while the opposite holds for the digital feedback.

For different levels of cooperation, i.e., MCP, CBf and SCP, 
 \cite{Zakhour_it12} investigates an optimization 
problem to minimize the total downlink  transmit power while satisfying a 
specified SINR target. The authors derived the optimal transmit power, 
beamforming vectors, cell loading and achieved SINR for those different 
cooperation schemes in a symmetric two-cell network. The resulted optimal 
beamforming vectors have a structure related to RCI. 

The current work is closely related to \cite{Zakhour_it12} in the sense we 
use the same cooperative schemes and precoder structure. We extend the work 
by analyzing the optimal feedback strategies for analog and digital feedback 
under MCP and  CBf schemes. The results in this work are obtained by 
performing the analysis in the large system limit where the dimensions of the 
system i.e., the numbers of users and transmit antennas tend to infinity with 
their ratio being fixed. The large system analysis mainly exploits the 
eigenvalue distribution of large random matrices. For examples, it has 
been used to derive the asymptotic performance of linear multiuser receivers 
in CDMA communications in early 2000 (see \cite{Tulino_FnT_it04}),  single-cell broadcast channels with RCI for various channel conditions \cite{Nguyen_
globecom08, Muharar_ausctw11, Muharar_icc11, Wagner_it12}, base station 
cooperations in downlink multi-cell networks (see e.g., \cite{Huh_it12,Zakhour_
it12}). The asymptotic performance measure becomes a deterministic quantity 
and can have close-form/compact expressions. Hence, it can be used to derive 
the optimal parameters for the system design. Moreover, it can provide a good 
approximation of, hence insights on, the performance of the finite-size (or even small-size) systems. 

Similar to \cite{Ramya_tsp11} and \cite{JZhang_arxiv11}, we perform the feedback optimization in interference-coordination scheme (CBf). As in \cite{JZhang_arxiv11}, we also investigate the feedback optimization for the analog and digital feedback schemes. However, different from those works, we do not assume the intra-cell TDMA in each cell, and hence each user experiences both intra-cell and inter-cell interference. We also consider a different type of precoder i.e., the RCI. Moreover, we also analyze the feedback optimization for different level of cooperations between the base stations, including the MCP setup, and try to capture how we allocate resources available at the user side as the the interfering channel gain varies.

\subsection{Paper Organization and Notation}
The rest of the paper is structured as follows. The system model is described in Section \ref{sec:system_model}. It starts with the channel model, and the expressions of the transmit signal, precoder and the corresponding SINR for each MCP and CBf. In the end of the section, the feedback schemes and true channel model in term of the channel estimate at the BSs and the channel uncertainty for the analog and digital feedback are presented. The main results for the noisy analog feedback and digital feedback and for different types of coordination are discussed in Section \ref{sec:ana_feedback} and \ref{sec:RVQ_feedback}, respectively. In each section, we begin by discussing the large system result of the SINR for the MCP and CBf and then followed by deriving the corresponding optimal feedback allocation; optimal (uplink) power for the analog feedback and optimal bit partitioning for the digital feedback. The optimal regularization parameter for the RCI precoder is also derived for both types of feedback and cooperation. The end of each section provides numerical results that depict how the optimal feedback allocation and the SINR of each user behave as the interfering channel gain varies.  In Section \ref{sec:compare_ana_rvq}, we provide some numerical simulations that compare the performance of the system under the analog feedback and digital feedback. The conclusion are drawn in the Section \ref{sec:conclusion} and some of the proofs go to the appendices.

Throughout the paper, the following notations are used. $\mathbb{E}[\cdot]$ denotes the statistical expectation. The almost sure convergence, convergence in probability, and mean-square convergence are denoted by $\as$, $\ip$, $\ms$ respectively. The partial derivative of $f$ with respect to (w.r.t.) $x$  is denoted by $\frac{\partial f}{\partial x}$. The circularly symmetric complex Gaussian (CSCG) vector with mean $\pmb{\mu}$ and covariance matrix $\mathbf{\Sigma}$ is denoted by $\mathcal{CN}(\pmb{\mu},\mathbf{\Sigma})$. $|a|$ and $\Re[a]$ denote the magnitude and the real part of the complex variable $a$, respectively. $\|\cdot\|$ represents the Euclidean norm and $\TR{\cdot}$ denotes the trace of a matrix. $\bI_N$ and $\mathbf{0}_N$ denote an $N\times N$ identity matrix and a $1\times N$ zero entries vector, respectively. $(\cdot)^T$ and $(\cdot)^H$ refer to the transpose and Hermitian transpose, respectively. The angle between vector $\bx$ and $\mathbf{y}$ is denoted by $\angle(\bx,\mathbf{y})$. LHS and RHS refer to the left-hand and right-hand side of an equation, respectively.

\section{System Model}\label{sec:system_model}
We consider a symmetric two-cell broadcast channel, as shown in Figure \ref{system_model}, where each cell has $K$ single antenna users and a base station equipped with $N$ antennas. The channel between user $k$ in cell $j$ and the BS in cell $i$ is denoted by row vector $\bh_{k,j,i}$ where $\bh_{k,j,j} \sim \mathcal{CN}(\mathbf{0},\mathbf{I}_N)$
and $\bh_{k,j,\bar{j}} \sim \mathcal{CN}(\mathbf{0},\epsilon\mathbf{I}_N)$, for $j = 1, 2$ and $\bar{j} = \mod(j, 2) + 1$.
We refer to the $\bh_{k,j,j}$ as direct channels and $\bh_{k,j,\bar{j}}$ as cross or ``interfering" channels.
We find it useful to group these into a single channel vector $\bh_{k,j}=[\bh_{k,j,1}\ \bh_{k,j,2}]$.

%Here we will considers two types of cooperation between the two base stations.
%A full cooperation or Network MIMO is a scheme where each base station has the knowledge of the channel and transmitted symbols (data sharing) of all users in its own cell and users in the neighboring cell. 
%While in the second scheme, called Coordinated Beamforming, both base stations only shared the CSI  (non-data sharing).

We consider an FDD system and assume that the users have perfect knowledge of their downlink channels, $\bh_{k,j,j}$ and $\bh_{k,j,\bar{j}}$. %know perfectly their downlink channels from the same cell BS ($\bh_{k,i,i}$) and their channels from the neighboring BS ($\bh_{k,i,\bar{i}}$).  Then e
Each user feeds back the channel information to the direct BS and neighboring BS through the corresponding uplink channels. The BSs  estimate or recover these channel states and use them to construct the precoder. %In the analog feedback, the unquantized channel information is sent through the uplink channel. 
%We would like to investigate the power allocation schemes as function of $\epsilon$ (cross-channel gain) subject to a power constraint to feedback $\bh_{k,i,i} $ and $\bh_{k,i,\bar{i}}$.  

The received signal of user $k$ in cell $j$ can be written as
\[
y_{k,j} = \bh_{k,j,1}\bx_{1} + \bh_{k,j,2}\bx_{2} + n_{k,j}, %= \bh_{k,i} \bx  + n_{k,i},
\]
where $\bx_i \in \mathbb{C}^{N\times 1}$, $i=1,2$ is the transmitted data from BS $i$,  and $n_{k,j} \sim \mathcal{CN}(0,\sigma^2_d)$ is the noise at the user's receiver. 
The transmitted data $\bx_i$ depends on the level of cooperation assumed, and will be described in more details in Sections \ref{sec:NMIMO} and \ref{sec:CBf}: we restrict ourselves to linear precoding schemes, more specifically RCI precoder.
%The transmitted data $\bx_j$ can be expressed as
%\[
%\bx_j=c_j\sum_{k=1}^K \hat{\bw}_{kj}s_{kj}, \quad\quad c_j^2=\frac{P_j}{\sum_{k=1}^K \|\hat{\bw}_{kj}\|^2}.
%\]
%where $\bw_{kj}=c_j\hat{\bw}_{kj}$ is the beamforming vector or precoder of the data symbol for user $k$ in cell $j$, $s_{kj}$.
%The constant $c_j$ are determined to satisfy the power constraint $\mathbb{E}\left[\|\bx_j\|^2\right]=P_{j}$.
We assume each BS's transmission is subject to a power constraint $\mathbb{E}\left[\|\bx_i\|^2\right]=P_{i}$.  In the MCP case, we relax this constraint to a sum power constraint so that $\mathbb{E}\left[\|\bx\|^2\right]=\sum_{i=1}^2 P_{i} = P_t$. In the analysis, we assume $P_1=P_2=P$ and denote $\gamma_d=P/\sigma^2_d$.

As already mentioned, in practical scenarios, perfect CSI is difficult to obtain and the CSI at the BSs is obtained through feedback from the users. We are particularly interested in the channel model where we can express the downlink channel between the user $k$
in cell $j$ and BS $i$ as
\be\label{eq:ch_model}
\bh_{k,j,i}=\sqrt{\phi_{k,j,i}}\hh_{k,j,i}+ \ih_{k,j,i},
\ee
where $\hh_{k,j,i}$ represents the channel estimate, and $\ih_{k,j,i}$ the channel uncertainty or estimation error. 
%The entries of each vector $\hh_{k,i,j}$ and $\ih_{k,i,j}$ are independent and identically distributed (i.i.d) and  distributed according to $\mathcal{CN}(0,\omega_{k,i,j})$ and $\mathcal{CN}(0,\delta_{k,i,j})$, respectively. 
The channel estimates are used by the BSs to construct the precoder. 

The transmitted signal, precoder and SINR for each user for each cooperation scheme will be presented in the following subsections. 

\begin{figure}[!t]
\centering
\begin{tikzpicture}[scale=1.5]
%%Draw BS 1
\node at (0,1) {\includegraphics[scale=0.4]{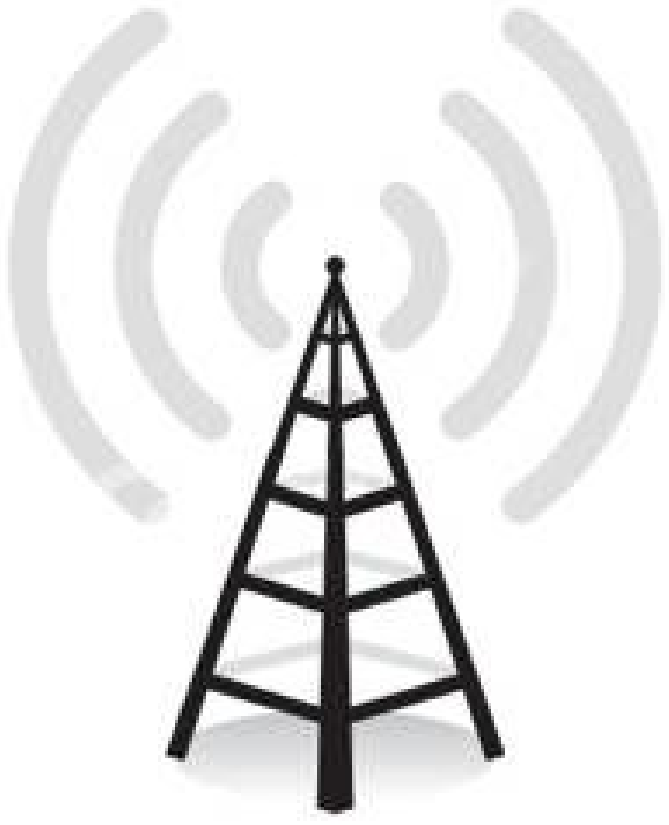}};
\draw[thick] (0,0) ellipse (2.2cm and 1cm);
\node at (0,-1.3) {Cell 1};
\draw[thick,fill=white] (0,1.52) ellipse (0.32cm and 0.12cm);
\draw[very thick,brown] (0.34,1.37) -- (0.34,1.67);
\draw[very thick,brown] (-0.34,1.37) -- (-0.34,1.67);
\draw[very thick,brown] (0.17,1.27) -- (0.17,1.57);
\draw[very thick,brown] (-0.17,1.27) -- (-0.17,1.57);
\draw[very thick,brown] (0.12,1.47) -- (0.12,1.77);
\draw[very thick,brown] (-0.12,1.47) -- (-0.12,1.77);
%%Draw BS 2
\node at (4.4,1) {\includegraphics[scale=0.4]{BSeps.eps}};
\draw[thick] (4.4,0) ellipse (2.2cm and 1cm);
\node at (4.4,-1.3) {Cell 2};
\draw[thick,fill=white] (4.4,1.52) ellipse (0.32cm and 0.12cm);
\draw[very thick,brown] (4.74,1.37) -- (4.74,1.67);
\draw[very thick,brown] (4.06,1.37) -- (4.06,1.67);
\draw[very thick,brown] (4.57,1.27) -- (4.57,1.57);
\draw[very thick,brown] (4.23,1.27) -- (4.23,1.57);
\draw[very thick,brown] (4.52,1.47) -- (4.52,1.77);
\draw[very thick,brown] (4.28,1.47) -- (4.28,1.77);
%%Draw the mobile cell 1
\node at (1.3,0) {\includegraphics[scale=0.2]{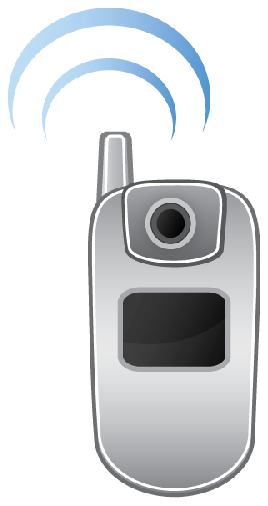}};
\node at (1.3,-0.5) {\tiny UT $k$};
\node at (0,-0.5) {\includegraphics[scale=0.2]{Mobileeps.eps}};
\node at (0.4,-0.7) {\tiny UT $1$};
\node at (-1.3,0) {\includegraphics[scale=0.2]{Mobileeps.eps}};
\node at (-1.3,-0.5) {\tiny UT $K$};
%%Draw the mobile cell 2
\node at (5.7,0) {\includegraphics[scale=0.2]{Mobileeps.eps}};
\node at (5.7,-0.5) {\tiny UT $K$};
\node at (4.4,-0.5) {\includegraphics[scale=0.2]{Mobileeps.eps}};
\node at (4.8,-0.7) {\tiny UT $1$};
\node at (3.1,0) {\includegraphics[scale=0.2]{Mobileeps.eps}};
\node at (3.1,-0.5) {\tiny UT $k$};
%%Draw the downlink channel
\draw[->,very thick,>=stealth,blue]  (3.9,1.52) -- (1.6,0.1);
\draw[->,very thick,>=stealth,blue]  (0.4,1.3) -- (1,0.2);
\node at (1.22, 0.6){\footnotesize $\bh_{k,1,1}$};
\node at (2.9, 1.3){\footnotesize $\bh_{k,1,2}$};
%%Draw the uplink Channel
\draw[->,very thick,>=stealth,dashed,red]  (2.9,0.2) -- (0.6,1.4);
\node at (1.9, 0.9){\footnotesize $\epsilon$};
\draw[->,very thick,>=stealth,dashed,red]  (3.3,0.2) -- (4,1.3);
\node at (3.7, 0.5){\footnotesize $1$};
\end{tikzpicture}
\caption{System model}\label{system_model}
\vspace{-13pt}
\end{figure}

\subsection{MCP}\label{sec:NMIMO}
As previously mentioned,  in the MCP, both BSs share the channel information and data symbols for all users in the network. Therefore, we may consider the network as a broadcast channel with $2N$ transmit antennas and $2K$ single antenna users.   The BSs construct the precoding matrix by using the users' channel estimates. % state information obtained from feedback communications.
 In this work, we consider the RCI precoding, for which the precoding or beamforming vector for user $k$ in cell $j$, $\bw_{kj}$, can be written as \cite{Peel_tcomm05}
\[
\bw_{kj}= c \hw_{kj} =c \left(\hH^H\hH+\alpha \mathbf{I}_{2N}\right)^{-1}\hh^H_{k,j},
\]
where $\hh_{k,j} = [\hh_{k,j,1}\ \hh_{k,j,2}]$ and $\hH=[\hh_{1,1}^H \ \hh_{2,1}^H \cdots \ \hh_{K,1}^H \ \hh_{1,2}^H \ \hh_{2,2}^H \ \cdots \ \hh_{K,2}^H]^H$. The transmitted data vector can be expressed as 
\[
\bx = c \sum_{j=1}^2 \sum_{k=1}^K  \hw_{kj} s_{kj}, 
\]
where 
 $ s_{kj} \sim \mathcal{CN}(0,1)$ denotes the symbol to be transmitted to user $k$ in cell $j$. It is also assumed that the data symbols across the users are independent, i.e, $\mathbb{E}[\mb{ss}^H]=\bI_{2N}$, with $\mb{s}=[\mb{s}_1\ \mb{s}_2 ]^T$ and $\mb{s}_j=[s_{1j}\ s_{2j}\ \cdots\ s_{Kj}]^T$. $c$ is a scaling factor ensuring the total power constraint %$\mathbb{E}[\|\bx\|^2]=P_t=P_1+P_2$
 is met with equality:
%Since now we consider the network as a single cell broadcast channel, we remove the cell index and $c_j=c$. We re-index  $k \in {1,2,\cdots, K}$ for users in cell 1 and  $k \in {K+1,K+2,\cdots,2K}$ for users in cell 2. 
%The constant $c$ is equivalently determined to satisfy the total power constraint $\mathbb{E}[\|\bx\|^2]=P_t=P_1+P_2$ and it can be expressed as
\[
c^2=\frac{P_t}{\TR{\left(\hH^H\hH+\alpha \mathbf{I}_{2N}\right)^{-2}\hH^H\hH}}.
\]
%The regularization parameter $\alpha$ controls the amount of interference introduced to the users.

The received signal at user $k$ in cell $j$ can be written as
\begin{align*}
y_{kj} &= \bh_{k,j}\bx + n_{k,j}=c\bh_{k,j}\left(\hH^H\hH+\alpha \mathbf{I}_{2N}\right)^{-1}\hH^H\bs + n_{k,j}\\
				&=c\bh_{k,j}\left(\hH^H\hH+\alpha \mathbf{I}_{2N}\right)^{-1}\hh^H_{k,j}s_{k,j} + c\bh_{k,j}\left(\hH^H\hH+\alpha \mathbf{I}_{2N}\right)^{-1}\hH^H_k \bs_{k,j} + n_{k,j}\ ,
\end{align*}
where $\bh_{k,j}$ follows the channel model \eqref{eq:ch_model} with $\ih_{k,j} = [\ih_{k,j,1}\ \ih_{k,j,2}]$. The term $\hH_{k,j}$ and $\bs_{k,j}$ are obtained from $\hH$ and $\bs$ by removing the row corresponding to user $k$ in cell $j$ respectively. Hence, the SINR for user $k$ in cell $j$ can be expressed as
\begin{align}\label{eq:sinr_mcp}
\tSINR_{k,j} =\frac{c^2\left|\bh_{k,j}\left(\hH^H\hH+\alpha \mathbf{I}_{2N}\right)^{-1}\hh^H_{k,j}\right|^2}{c^2\bh_{k,j}\left(\hH^H\hH+\alpha \mathbf{I}_{2N}\right)^{-1}\hH^H_{k,j}\hH_{k,j}\left(\hH^H\hH+\alpha \mathbf{I}_{2N}\right)^{-1}\bh_{k,j}^H + \sigma^2_d}.
\end{align}

\subsection{Coordinated Beamforming}\label{sec:CBf}
In this scheme, the base stations only share the channel information, so that, for cell $j$, $\bx_j$ can be expressed as
\[
\bx_j=c_j\sum_{k=1}^K \hat{\bw}_{kj}s_{kj},
\]
where as in the MCP case $ s_{kj} \sim \mathcal{CN}(0,1)$ denotes the symbol to be transmitted to user $k$ in cell $j$. The constant $c_j$ is chosen to satisfy the per-BS power constraint, that is, $\mathbb{E}\left[\|\bx_j\|^2\right]=P_{j}$. Hence, $c_j^2=\frac{P_j}{\sum_{k=1}^K \|\hat{\bw}_{kj}\|^2}$.
We let
%It was shown in \cite{Zakhour_it12} that the optimal beamforming vector has the structure
\[
\hw_{kj}=\left(\alpha \mathbf{I}_N +\sum_{(l,m)\neq(k,j)} \hh_{l,m,j}^H\hh_{l,m,j}\right)^{-1}\hh_{k,j,j},
\]
which is an extension of regularized zero-forcing to the coordinated beamforming setup \cite{Zakhour_it12}. Note that designing the precoding matrix at BS $j$ 
%From the equation above, we can see that the precoding vector for the user $k$ in cell $j$ needs 
requires \emph{local} CSI only (the $\hh_{k,i,j}$ from BS $j$ to all users, but not the channels from the other BS to the users).
The SINR of user $k$ in cell $j$ can be expressed as
\be\label{eq:sinr_coord}
\tSINR_{k,j}=\frac{\displaystyle c_j^2|\bh_{k,j,j}\bw_{kj}|^2}{\displaystyle \sum_{(k',j')\neq (k,j)} c_{j'}^2 |\bh_{k,j,j'}\bw_{k'j'}|^2 + \sigma^2_d},
\ee
where, once again, $\bh_{k,j,j}$ and $\bh_{k,j,j'}$ follow \eqref{eq:ch_model}. 

\subsection{Analog Feedback through AWGN Channel}\label{ss:AF_model}
In the \textit{analog feedback} scheme, proposed in \cite{Marzetta_tsp06},  each user feeds back the CSI to the base stations using the linear analog modulation.
Since we skip quantizing and coding the channel information, we can convey this information very rapidly \cite{Marzetta_tsp06}. We also consider a simple uplink channel model, an AWGN channel.
A more realistic multiple access (MAC) uplink channel model could be a subject for future investigation. 
Each user in cell $j$ feeds back its CSI $\bh_{k,j}$ orthogonally (in time). Since each user has to transmit $2N$ symbols (its channel coefficients), it needs $2 \kappa N$ channel uses to feed back the CSI, where $\kappa \geq 1$. User $k$ in cell $j$ sends 
			\be\label{eq:anaup_signal}
					\bh_{k,j}\bL_j^\half,
			\ee 
where $\mathbf{\Lambda}_j$ is a diagonal matrix such that the first $N$ diagonal entries are equal to $\lambda_{j1}$ and the remaining diagonal entries are equal to $\lambda_{j2}$,
%=\text{diag}\{\lambda_{j1}\bI_N,\lambda_{j2}\bI_N\}$
 with $\lambda_{jj}=2 \nu \kappa P_{u}$, 
$\lambda_{j\bar{j}}=2\epsilon^{-1}(1-\nu) \kappa P_{u}$ and $P_u$ is the user's average transmit power per channel use. 
Equation \eqref{eq:anaup_signal} satisfies the uplink power constraint $\mathbb{E}[\|\bh_{k,j}\bL_j^\half\|^2]=2\kappa N P_u$.
%So, we have $\sqrt{2\nu \kappa P_{u}}\bh_{k,i,i} \ \text{and}\ \sqrt{2\epsilon^{-1}(1-\nu)\kappa P_{u}}\bh_{k,i,\bar{i}}$ to feedback $\bh_{k,i,i}$ and $\bh_{k,i,\bar{i}}$ respectively.
%In this formulation
Thus, the power allocated to feedback the direct and interfering channel is controlled by $\nu \in [0,1]$. We should note that in \eqref{eq:anaup_signal}, it is assumed that $\kappa$ is an integer. If $\kappa N$ is an integer, we can modulate the signal \eqref{eq:anaup_signal} with $2N\times 2\kappa N$ spreading matrix \cite{Marzetta_tsp06,Caire_it10} and the analysis presented below still holds.

Now, let $b_\ell$, $\ell=1, 2,\cdots,2N$, be the $\ell$th element of $\bh_{k,j}$, $\lambda_\ell$ be the corresponding element on the diagonal of $\bL$, and $\epsilon_\ell = \mathbb{E}[b_\ell b_\ell^*]$.
When this channel coefficient is transmitted, the signal received by the coordinating BSs is
\vspace{-1mm}
\begin{align*}
\by_\ell =\sqrt{\lambda_\ell}\ba{c} \b1_N \\ \sqrt{\epsilon}\b1_N\ea b_\ell + \mathbf{n}_u= \sqrt{\lambda_\ell}\mathbf{p} b_\ell + \mathbf{n}_u,
\end{align*}
where $\mathbf{n}_u \in \mathbb{C}^{2N\times 1} \sim \mathcal{CN}(\mathbf{0},\sigma^2_u\bI)$ is the noise vector at the coordinating BSs and $\b1_N$ is a column vector of length $N$ with all 1 entries.
%Here, we also use 
Using  the fact that the path-gain from  the users in cell $j$ to BS $\bar{j}$ is $\epsilon$,
 %The
 the MMSE estimate of $b_\ell$ becomes %is given by
\vspace{-1mm}
\[
\hat{b}_\ell=\sqrt{\lambda_\ell}\epsilon_\ell \bp^T\left[\lambda_\ell\epsilon_\ell \bp\bp^T + \sigma^2_u\bI_{2N} \right]^{-1}\by_\ell,
\]
and its MMSE is $\sigma^2_{b_\ell}=\epsilon_\ell -\lambda_\ell\epsilon_\ell^2 \bp^T \left[\lambda_\ell\epsilon_\ell \bp\bp^T + \sigma^2_u\bI_{2N} \right]^{-1} \bp.$
We should note that $\{\hat{b}_\ell\}$ are mutually independent. By using the property of MMSE estimation, we can express %the downlink channel %as
% between the user $k$ in cell $j$ and BS $i$ %can be expressed 
$\bh_{k,j,i}$ as
\vspace{-1mm}
\be\label{eq:ch_model_AF}
\bh_{k,j,i}=\hh_{k,j,i}+ \ih_{k,j,i},
\ee
where $\hh_{k,j,i}$ represents the channel estimate, and $\ih_{k,j,i}$ the channel uncertainty or estimation error. Note that the entries of each vector $\hh_{k,i,j}$ and $\ih_{k,i,j}$ are independent and identically distributed (i.i.d) and  distributed according to $\mathcal{CN}(0,\omega_{ji})$ and $\mathcal{CN}(0,\delta_{ji})$, respectively, where 
%in  \eqref{eq:ch_model} 
\be\label{eq:def_omega_delta}
\delta_{ji} =\begin{cases}\frac{1}{1+\nu\bar{\gamma}_u  }, & j=i \\
\frac{\epsilon}{1+(1-\nu)\bar{\gamma}_u  }, & j\neq i, 
\end{cases}, \quad 
\omega_{ji} =\begin{cases}\frac{\nu\bar{\gamma}_u }{1+\nu\bar{\gamma}_u  }, & j=i \\
 \frac{\epsilon (1-\nu)\bar{\gamma}_u}{1+(1-\nu)\bar{\gamma}_u  }, & j\neq i, 
\end{cases}
\ee
and $\bar{\gamma}_u=2\gamma_u \kappa (1+\epsilon)$ with $\gamma_u=N P_u/\sigma^2_u$. %: we refer to $\bar{\gamma}_u$ as the effective uplink SNR.
The channel estimates are used by the BSs to construct the precoder. Since each $\delta_{ij}$ and $\omega_{ij}$ are identical for all users then we denote
$\delta_d=\delta_{jj}, \delta_c=\delta_{j\bar{j}}, \omega_d=\omega_{jj}$ and $\omega_c=\omega_{j\bar{j}}$. From \eqref{eq:def_omega_delta}, it follows that $\omega_d=1-\delta_d$ and $\omega_c=\epsilon-\delta_c$.

%These notations will be used thoroughly in the following subsections.

\subsection{Quantized Feedback via RVQ}\label{ss:RVQ_model}
In the digital feedback case, user $k$ in cell $j$ uses $B_{k,j,j}$ and $B_{k,j,\bar{j}}$ bits to quantize or feedback the direct and interfering channels, respectively. The total number of feedback bits is assumed to be fixed. It is also assumed that each user has different codebooks: $\mathcal{U}_{k,j,j}$ with size $2^{B_{k,j,j}}$ and $\mathcal{U}_{k,j,\bar{j}}$ with size $2^{B_{k,j,\bar{j}}}$,  to quantize the direct and interfering channel, respectively.  Moreover, these codebooks are different for each user. In this work, %we assume 
$B_{k,j,j}$ is the same for all users and $B_{k,j,j}=B_d, \forall k, j = 1, 2$. %A similar assumption applies for $B_{k,j,\bar{j}}$, and %we denote it as 
Similarly, $B_{k,j,\bar{j}}=B_c, \forall k, j = 1, 2$. The total number of feedback bits is denoted by $B_t$, where $B_t=B_d+B_c$.

Since the optimal codebook design for the quantized feedback is not known yet, therefore in this paper, for  analytical tractability, we consider the well known RVQ scheme.  
As suggested by its name, RVQ uses a random vector quantization codebook where the quantization vectors in the codebook are  independently chosen from the isotropic distribution on the $N$-dimensional unit sphere \cite{Jindal_it06, Santipach_it09}. The codebook is known by the base station and the user. The user quantizes its channel by finding the quantization vector in the codebook which is closest to its channel vector and feedbacks the index of the quantization vector to the BSs. We should note that only the channel direction is quantized. Most of the works that employ RVQ for the feedback model assume that only channel direction information is sent to the BSs.  
 As mentioned in \cite{Jindal_it06}, the channel norm information can also be used for some problems that need channel quality information (CQI) such as power allocation across the channel and users scheduling \cite{Ravindran_icc08}. 

The user $k$ in cell $j$ finds its quantization vector for the channel $\bh_{k,j,i}$ according to
\[
\hat{\bu}_{k,j,i}=\arg \underset{\bu_{k,j,i}\in \ \mathcal{U}_{k,j,i}}{ \max}\ \frac{|\bh_{k,j,i}\bu_{k,j,i}^H|}{\|\bh_{k,j,i}\|}. 
\] 	
The quantization error or distortion $\tau^2_{k,j,i}$ is defined as
\[
\tau^2_{k,j,i}=1-\frac{\left\|\bh_{k,j,i}\hat{\bu}_{k,j,i}\right\|^2}{\|\bh_{k,j,i}\|^2}=\sin^2\left(\angle\left(\bh_{k,j,i}/\|\bh_{k,j,i}\|,\hat{\bu}_{k,j,i}\right)\right).   
\]
It is a random variable whose distribution is equivalent to the minimum of $2^{B_{k,j,i}}$ beta random variables with parameters $N-1$ and 1\ (see \cite{Jindal_it06,Yeung_wc07}). Each realization of $\tau_{k,j,i}$ is different for each user even though the users have the same amount of feedback bits.

Having obtained $\hat{\bu}_{k,j,i}$, each user then sends its index in the codebook and also the channel magnitude $\|\bh_{k,j,i}\|$ (see also \cite{Ravindran_icc08}). By assuming that the BSs can receive the information perfectly, the channel estimate at the BS can be written as 
\be
\hh_{k,j,i}=\|\bh_{k,j,i}\|\hat{\bu}_{k,j,i}.
\ee 
Note that  $\hh_{k,j,i}$ has the same statistical distribution as  $\bh_{k,j,i}$ i.e., $\hh_{k,j,i} \sim \mathcal{CN}(0,\epsilon_{ji}\bI_N)$, where $\epsilon_{ji}=1$ when $i=j$ and otherwise, $\epsilon_{ji}=\epsilon$.

From \cite{Jindal_it06, Jindal_wcom08},  we can model $\bh_{k,j,i}$ as follows
\be\label{eq:chmodel_rvq}
\bh_{k,j,i}=\sqrt{1-\tau^2_{k,j,i}}\hh_{k,j,i} + \tau_{k,j,i}\|\bh_{k,j,i}\|\mathbf{z}_{k,j,i}, 
\ee
where $\mathbf{z}_{k,j,i}$ is isotropically distributed in the null-space of $\hat{\bu}_{k,j,i}$ and
is independent of $\tau_{k,j,i}$. Moreover, $\mathbf{z}_{k,j,i}$ can be rewritten  as
\[
\mathbf{z}_{k,j,i}=\frac{\bv_{k,j,i}\bPi_{\hh_{k,j,i}}^\bot}{\|\bv_{k,j,i}\bPi_{\hh_{k,j,i}}^\bot\|},
\]
where $\bPi_{\hh_{k,j,i}}$ is the projection matrix in the column space of $\hh_{k,j,i}$, $\bPi_{\hh_{k,j,i}}^\bot=\bI_N-\frac{\hh_{k,j,i}^H\hh_{k,j,i}}{\|\hh_{k,j,i}\|^2}$ and $\bv_{k,j,i} \sim \mathcal{CN}(\mathbf{0},\bI_N)$ is independent of $\hh_{k,j,i}$. 
%Thus, \eqref{eq:chmodel_rvq_1} becomes
%\begin{align}
%\bh_{k,j,i}&=\sqrt{1-\tau^2_{k,j,i}}\hh_{k,j,i}+\underbrace{\tau_{k,j,i}\frac{\|\bh_{k,j,i}\|}{\|\bv_{k,j,i}\bPi_{\hh_{k,j,i}}^\bot\|}}_{\displaystyle \bar{\tau}_{k,j,i}}\bv_{k,j,i}\bPi_{\hh_{k,j,i}}^\bot \notag\\
%&=\sqrt{1-\tau^2_{k,j,i}}\hh_{k,j,i}+\ih_{k,j,i} \label{eq:chmodel_rvq}
%\end{align}
%where we denote $\ih_{k,j,i}=\bar{\tau}_{k,j,i}\bv_{k,j,i}\bPi_{\hh_{k,j,i}}^\bot$. 
%There is a special case that we should also consider, i.e., $B_{k,i,j}=0$. In this case, the user will send no feedback (or zero feedback) to the BS $j$. So, we may model $\hh_{k,i,j}=\mathbf{0}$ and $\ih_{k,i,j}=\bh_{k,i,j}$.  
It is clear that the channel model \eqref{eq:chmodel_rvq} has the same structure as \eqref{eq:ch_model} with $\phi_{k,j,i}=1-\tau^2_{k,j,i}$ and $\ih=\tau_{k,j,i}\|\bh_{k,j,i}\|\mathbf{z}_{k,j,i}$. 

\subsection{Achievable and limiting sum-rate}
Besides $\tSINR_{k,j}$, another relevant performance measure is the achievable rate. For the user $k$ at cell $j$, it is defined as
		\be\label{eq:def_rate}
				R_{k,j} =  \log_2(1+\text{SINR}_{k,j}),
		\ee
It is obtained by treating the interferences as noise or equivalently performing single-user decoding at the receiver. Observing \eqref{eq:def_rate}, it is obvious that there is a one-to-one continuous mapping between the SINR and the achievable rate (see also \cite{Tse_it99}). The total sum-rate, or just the sum-rate, can then be defined as follows
   \be\label{eq:def_sumrate}
				R_\text{sum} = \sum_{j=1}^2\sum_{k=1}^{K} R_{kj}.
		\ee

As shown later in Section \ref{sec:ana_feedback} and \ref{sec:RVQ_feedback}, as $K,N \to \infty$, we have
\be\label{eq:conv_SINR}
\tSINR_{kj} - \tSINR^\infty \to 0,
\ee
where $\tSINR^\infty$ is a deterministic quantity and also called  the limiting SINR. It is also shown that the limiting SINR is the same for all users.
By using the result \eqref{eq:conv_SINR} and based on the continuous mapping theorem \cite{Vaart_book98}, the following
 \[
\frac{1}{2N}\mathbb{E}\left[R_\text{sum}\right] - R_\text{sum}^\infty \to 0
\]
holds (see also \cite{Wagner_it12}) where the limiting achievable sum-rate can be expressed as $R_\text{sum}^\infty = \beta\log_2(1+\tSINR^\infty)$. For the numerical simulations, we also introduce
the normalized sum-rate difference, defined as
\be\label{eq:def_ThDiff}
\Delta R_\text{sum} =\frac{\frac{1}{2N}\mathbb{E}\left[R_\text{sum}\right]- R_\text{sum}^\infty}{\frac{1}{2N}\mathbb{E}\left[R_\text{sum}\right]},
\ee
that quantifies the sum-rate difference, $\frac{1}{2N}\mathbb{E}\left[R_\text{sum}\right] - R_\text{sum}^\infty$, compared to the (actual) finite-size system average sum-rate. 

\section{MCP and CBf with Noisy Analog Feedback}\label{sec:ana_feedback}
In this section, we will discuss the large system results and feedback optimization for the MCP and CBf by using the analog feedback model discussed in Section \ref{ss:AF_model}.  First, the large system limit expression for the SINR  is derived. Then, the corresponding optimal regularization parameter that maximizes the limiting SINR is investigated. Finally, the optimal $\nu$ that maximizes the limiting SINR that already incorporates the optimal regularization parameter will be discussed.
 
\subsection{MCP}\label{ss:Network_MIMO_AF}
We start with the theorem that states the large system limit of the SINR \eqref{eq:sinr_mcp}.
\begin{theorem}\label{th:sinr_mcp}
Let $\rho_{\text{\tiny M,AF}}=(\omega_d+\omega_c)^{-1}\alpha/N$ and $g(\beta,\rho)$ be the solution of $g(\beta,\rho)=\left(\rho + \frac{\beta}{1+g(\beta,\rho)}\right)^{-1}$. In the large system limit, the SINR of MCP given in \eqref{eq:sinr_mcp} converges in probability to a deterministic quantity
given by
\be\label{eq:lim_sinr_mcp}
\tSINR^\infty_{\text{\tiny MCP,AF}} = \gamma_e g(\beta,\rho_{\text{\tiny M,AF}})\frac{1+\frac{\rho_{\text{\tiny M,AF}}}{\beta}(1+g(\beta,\rho_{\text{\tiny M,AF}}))^2}{\gamma_e+\left(1+g(\beta,\rho_{\text{\tiny M,AF}})\right)^2},
\ee
where the effective SNR $\gamma_e$ is expressed as
\be\label{eq:def_effSNR_MCPanalog}
\gamma_e=\frac{\omega_d+\omega_c}{\delta_d+\delta_c+\frac{1}{\gamma_d}}=\frac{1-\delta_d+\epsilon-\delta_c}{\delta_d+\delta_c+\frac{1}{\gamma_d}}.
\ee
\end{theorem}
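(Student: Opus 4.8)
\noindent\emph{Proof approach.} The plan is to pass \eqref{eq:sinr_mcp} to the limit by reducing every random quadratic form to a deterministic one, and then to simplify using the fixed-point relation defining $g$. Write each estimate row as $\hh_{k,j}=\bg_{k,j}\bOmega_j^{\half}$ with $\bg_{k,j}\sim\mathcal{CN}(\mathbf{0},\bI_{2N})$, where $\bOmega_1=\mathrm{diag}(\omega_d\bI_N,\omega_c\bI_N)$ and $\bOmega_2=\mathrm{diag}(\omega_c\bI_N,\omega_d\bI_N)$ (the direct block carries variance $\omega_d$, the cross block $\omega_c$, cf.\ \eqref{eq:def_omega_delta}). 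Then $\hH^H\hH=\bOmega_1^{\half}\bG_1^H\bG_1\bOmega_1^{\half}+\bOmega_2^{\half}\bG_2^H\bG_2\bOmega_2^{\half}$ is a sum of two separable sample-covariance matrices, for which the standard deterministic-equivalent machinery (as in \cite{Wagner_it12} and references therein) produces a resolvent equivalent of $(\hH^H\hH+\al\bI_{2N})^{-1}$ described by a fixed-point system in two scalars $e_1,e_2$, one per cell. The step I expect to be the real work is the structural observation that $\bOmega_1$ and $\bOmega_2$ are conjugate via the block-swap permutation, so that by uniqueness of the fixed point $e_1=e_2$; the resolvent equivalent then collapses to a \emph{scalar} multiple of $\bI_{2N}$ and its defining scalar equation becomes precisely $\rho g+\tfrac{\beta g}{1+g}=1$, i.e.\ $g=g(\beta,\rho_{\text{\tiny M,AF}})$. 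Concretely, $(\hH^H\hH+\al\bI_{2N})^{-1}$ has deterministic equivalent $\tfrac{g(\beta,\rho_{\text{\tiny M,AF}})}{N(\omega_d+\omega_c)}\bI_{2N}$ -- so $\hH^H\hH$ behaves, for resolvent purposes, like $\tfrac{\omega_d+\omega_c}{2}$ times the Gram matrix of a $2K\times 2N$ i.i.d.\ standard Gaussian matrix -- and the same equivalent holds after deleting any one user row, which is a rank-one perturbation.

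\noindent Given this, I would compute the three ingredients of \eqref{eq:sinr_mcp} as follows. Put $\bA_{k,j}:=\hH_{k,j}^H\hH_{k,j}+\al\bI_{2N}$, which is independent of $(\hh_{k,j},\ih_{k,j})$; note also that $\ih_{k,j}$, the MMSE error, is independent of the whole estimated matrix $\hH$ and has per-block variances $\delta_d,\delta_c$. By the matrix inversion lemma, $(\hH^H\hH+\al\bI_{2N})^{-1}\hh_{k,j}^H=\bA_{k,j}^{-1}\hh_{k,j}^H/(1+\hh_{k,j}\bA_{k,j}^{-1}\hh_{k,j}^H)$, and the trace lemma together with the scalar equivalent give $\hh_{k,j}\bA_{k,j}^{-1}\hh_{k,j}^H\ip g$ (using $\Tr\bOmega_j=N(\omega_d+\omega_c)$ for $j=1,2$, which in passing shows the limit is the same for all users), $\ih_{k,j}\bA_{k,j}^{-1}\ih_{k,j}^H\ip g(\delta_d+\delta_c)/(\omega_d+\omega_c)$, and all $\hh$--$\ih$ cross terms vanish; hence the signal quadratic form in the numerator converges to $g/(1+g)$. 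For the interference, substitute $\hH_{k,j}^H\hH_{k,j}=\hH^H\hH-\hh_{k,j}^H\hh_{k,j}$ and use $(\hH^H\hH+\al\bI_{2N})^{-1}\hH^H\hH=\bI_{2N}-\al(\hH^H\hH+\al\bI_{2N})^{-1}$ to rewrite the interference-per-$c^2$ as $\bh_{k,j}\mathbf{B}\bh_{k,j}^H-\al\,\bh_{k,j}\mathbf{B}^2\bh_{k,j}^H-|\bh_{k,j}\mathbf{B}\hh_{k,j}^H|^2$, with $\mathbf{B}:=(\hH^H\hH+\al\bI_{2N})^{-1}$. Each term is handled by a further Sherman--Morrison step (using the exact identity $\hh\mathbf{B}^2\hh^H=\hh\bA^{-2}\hh^H/(1+\hh\bA^{-1}\hh^H)^2$), while the second-order resolvent quantities that arise, involving $\mathbf{B}^2$ and $\bA^{-2}$, are evaluated by differentiating the first-order deterministic equivalent in $\al$, which introduces $g':=\partial g/\partial\rho<0$. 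Collecting and cancelling, the interference-per-$c^2$ converges to $(g+\rho g')\left(\tfrac{1}{(1+g)^2}+\tfrac{\delta_d+\delta_c}{\omega_d+\omega_c}\right)$, with $\rho=\rho_{\text{\tiny M,AF}}$ throughout.

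\noindent For the power-normalization constant, write $(\hH^H\hH+\al\bI_{2N})^{-2}\hH^H\hH=(\hH^H\hH+\al\bI_{2N})^{-1}-\al(\hH^H\hH+\al\bI_{2N})^{-2}$ and again differentiate the first-order equivalent in $\al$, giving $c^2\ip P(\omega_d+\omega_c)/(g+\rho g')$ (recall $P_t=2P$). Substituting the three limits into \eqref{eq:sinr_mcp}, the factor $g+\rho g'$ cancels between $c^2$ and the interference term, the noise contributes $\sigma_d^2/P=1/\gamma_d$, and the leakage-plus-noise denominator assembles into $1+\tfrac{(1+g)^2}{\omega_d+\omega_c}\!\left(\delta_d+\delta_c+\tfrac{1}{\gamma_d}\right)=\left(\gamma_e+(1+g)^2\right)/\gamma_e$ with $\gamma_e$ as in \eqref{eq:def_effSNR_MCPanalog}, while the numerator becomes $\gamma_e\,g^2/(g+\rho g')$. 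The final step is the algebraic identity $g^2/(g+\rho g')=g\left(1+\tfrac{\rho}{\beta}(1+g)^2\right)$: differentiating $\rho g+\beta g/(1+g)=1$ in $\rho$ gives $(g+\rho g')(1+g)^2+\beta g'=0$, whence $(g+\rho g')\left(1+\tfrac{\rho}{\beta}(1+g)^2\right)=g$. This yields \eqref{eq:lim_sinr_mcp}. Upgrading the individual ``$\ip$'' statements to convergence in probability of the ratio (continuous mapping theorem, plus boundedness of the denominator away from $0$) and the concentration of the quadratic forms (Bai--Silverstein-type trace and rank-one-perturbation lemmas) are routine. The genuine obstacle is the first paragraph: setting up the resolvent deterministic equivalent for the sum of two correlated Wishart matrices and exploiting the $\omega_d\leftrightarrow\omega_c$ symmetry so that the vector fixed point degenerates to the scalar equation for $g(\beta,\rho)$; everything downstream is Sherman--Morrison bookkeeping.
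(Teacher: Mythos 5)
Your proposal is correct and follows essentially the same route as the paper's Appendix-B proof: the matrix-inversion-lemma/leave-one-out decomposition, trace and rank-one-perturbation lemmas, a resolvent deterministic equivalent that collapses by the $\omega_d\leftrightarrow\omega_c$ block symmetry to the scalar fixed point $g(\beta,\rho_{\text{\tiny M,AF}})$, the differentiate-in-$\rho$ trick for the squared-resolvent terms and for $c^2$, and the same closing identity $g+\rho g'=\beta g/(\beta+\rho(1+g)^2)$. The only difference is cosmetic---the paper obtains the resolvent equivalent from Girko's variance-profile theorem (its Theorem 2) and notes $u_1=u_2$, whereas you invoke separable-correlation deterministic equivalents with a symmetry/uniqueness argument---and all your intermediate limits agree with the paper's (indeed your interference limit, with the plus sign, is the one consistent with the stated theorem; the paper's displayed \eqref{eq:lim_int_MCP_AF} contains a sign slip).
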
 
\begin{IEEEproof}
See Appendix \ref{App:SINR_MCP}
\end{IEEEproof}
%\begin{IEEEproof}
%The large system analysis for each term in \eqref{eq:sinr_mcp} follows the approaches taken in \cite{Zakhour_it12} based on the results in \cite{Kammoun_it09} or in \cite{Hanly_it01} and therefore due to space limitations, the details are omitted. We only present the large system result for the terms in \eqref{eq:sinr_mcp}. For notational simplicity, let $g=g(\beta,\rho_{\text{\tiny M,AF}})$. The numerator \eqref{eq:sinr_mcp} converges to $g^2/(1+g)^2$. The interference term (the second term in the denominator \eqref{eq:sinr_mcp}) can be shown to converge to 
%\begin{align}
%\left(g + \rho_{\text{\tiny M,AF}}\frac{\partial g}{\partial \rho_{\text{\tiny M,AF}}}\right)\left[\frac{1}{(1+g)^2}+\frac{\delta_d+\delta_c}{\omega_d + \omega_c}\right].
%\end{align}
%We can also show that $c^2$ converges to $P(\omega_d+\omega_c)\left(g + \rho_{\text{\tiny M,AF}}\frac{\partial g}{\partial \rho_{\text{\tiny M,AF}}}\right)^{-1}$. Combining these results, we get \eqref{eq:lim_sinr_mcp} after some algebraic manipulations.
%\end{IEEEproof}

It is obvious from above that the limiting SINR is the same for all users in both cells. This is due to the channel statistics of all users in both cells are the same.
The channel uncertainty, captured by $\omega_\bullet$ and $\delta_\bullet$, affects the system performance (limiting SINR) via the effective SNR and regularization parameter  $\rho_{\text{\tiny M,AF}}$. 

As discussed previously, the (effective) regularization parameter $\rho_{\text{\tiny M,AF}}$ controls the amount of interference introduced to the users and provides the trade-off between suppressing the inter-user interference and increasing desired signal energy.  The optimal choice of $\rho_{\text{\tiny M,AF}}$ that maximizes \eqref{eq:lim_sinr_mcp} is given in the following.
\begin{corollary}
The optimal $\rho_{\text{\tiny M,AF}}$ that maximizes $\tSINR^\infty_{\text{\tiny MCP,AF}} $ is 
\be\label{eq:opt_rhoM}
\rho_{\text{\tiny M,AF}}^*=\frac{\beta}{\gamma_e},
\ee 
and the corresponding limiting SINR is
\be\label{eq:optSINR_MCP_AF_funcRho}
\tSINR^{*,\infty}_{\text{\tiny MCP,AF}} =g(\beta,\rho_{\text{\tiny M,AF}}^*).
\ee
\end{corollary}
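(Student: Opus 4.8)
The plan is to eliminate the regularization parameter altogether and optimize in terms of the fixed-point quantity $g\equiv g(\beta,\rho_{\text{\tiny M,AF}})$. Solving its defining equation $g=\bigl(\rho_{\text{\tiny M,AF}}+\tfrac{\beta}{1+g}\bigr)^{-1}$ for $\rho_{\text{\tiny M,AF}}$ gives $\rho_{\text{\tiny M,AF}}=\tfrac1g-\tfrac{\beta}{1+g}$. I would first record that on the region $\rho_{\text{\tiny M,AF}}>0$ this map $g\mapsto\rho_{\text{\tiny M,AF}}$ is a strictly decreasing bijection onto an interval of admissible $g$: its derivative $-g^{-2}+\beta(1+g)^{-2}$ is negative exactly when $(1+g)^2>\beta g^2$, and $\rho_{\text{\tiny M,AF}}>0$ is equivalent to $1+g>\beta g$, which in turn forces $(1+g)^2>\beta g^2$. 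Consequently, maximizing $\tSINR^\infty_{\text{\tiny MCP,AF}}$ over $\rho_{\text{\tiny M,AF}}>0$ is the same as maximizing it over the corresponding range of $g$.

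Substituting $\rho_{\text{\tiny M,AF}}=\tfrac1g-\tfrac{\beta}{1+g}$ into \eqref{eq:lim_sinr_mcp}, the bracket $1+\tfrac{\rho_{\text{\tiny M,AF}}}{\beta}(1+g)^2$ simplifies to $\tfrac{(1+g)^2-\beta g^2}{\beta g}$, and the whole expression collapses to the one-variable rational function
\[
f(g)=\frac{\gamma_e}{\beta}\,\frac{(1+g)^2-\beta g^2}{\gamma_e+(1+g)^2}.
\]
Differentiating and clearing denominators, I expect the stationarity condition $f'(g)=0$ to reduce to $\gamma_e\bigl[1+(1-\beta)g\bigr]=\beta g(1+g)$. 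The quantity $p(g)=\beta g(1+g)-\gamma_e[1+(1-\beta)g]$ is a quadratic in $g$ with positive leading coefficient that equals $-\gamma_e<0$ at $g=0$, hence has a single positive root $g^{*}$; since $f'(g)$ has the sign of $-p(g)$, it switches from positive to negative across $g^{*}$, so $g^{*}$ is the unique maximizer on the admissible range.

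It then remains to translate $g^{*}$ back into a statement about $\rho_{\text{\tiny M,AF}}$ and to evaluate the optimum. Solving the stationarity identity for $\gamma_e$ gives $\gamma_e=\tfrac{\beta g^{*}(1+g^{*})}{1+(1-\beta)g^{*}}$, whence $\tfrac{\beta}{\gamma_e}=\tfrac{1+(1-\beta)g^{*}}{g^{*}(1+g^{*})}=\tfrac1{g^{*}}-\tfrac{\beta}{1+g^{*}}$; comparing with the fixed-point relation, this says exactly $g^{*}=g\bigl(\beta,\tfrac{\beta}{\gamma_e}\bigr)$, i.e.\ $\rho_{\text{\tiny M,AF}}^{*}=\beta/\gamma_e$, which is \eqref{eq:opt_rhoM}. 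Finally, substituting $\gamma_e=\tfrac{\beta g^{*}(1+g^{*})}{1+(1-\beta)g^{*}}$ into $f(g^{*})$ and using that $\gamma_e+(1+g^{*})^2$ factors as $(1+g^{*})\,\tfrac{(1+g^{*})^2-\beta (g^{*})^2}{1+(1-\beta)g^{*}}$, all factors cancel and $f(g^{*})=g^{*}=g(\beta,\rho_{\text{\tiny M,AF}}^{*})$, giving \eqref{eq:optSINR_MCP_AF_funcRho}.

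The step I expect to be most delicate is the bookkeeping around the admissible interval of $g$: one must confirm that the correspondence $\rho_{\text{\tiny M,AF}}\leftrightarrow g$ and the positivity $(1+g)^2-\beta g^2>0$ are handled uniformly for $\beta\le1$ (where $g$ sweeps all of $(0,\infty)$) and for $\beta>1$ (where $\rho_{\text{\tiny M,AF}}>0$ restricts $g<1/(\beta-1)$), and that the stationary point $g^{*}$ genuinely falls inside this interval --- which it does automatically since $\rho_{\text{\tiny M,AF}}^{*}=\beta/\gamma_e>0$. The remaining simplifications are routine but require some care with the algebra.
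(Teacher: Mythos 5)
Your proof is correct: the inversion $\rho_{\text{\tiny M,AF}}=\frac{1}{g}-\frac{\beta}{1+g}$, the reduction of \eqref{eq:lim_sinr_mcp} to the one-variable function $f(g)=\frac{\gamma_e}{\beta}\,\frac{(1+g)^2-\beta g^2}{\gamma_e+(1+g)^2}$, the stationarity condition $\gamma_e\left[1+(1-\beta)g\right]=\beta g(1+g)$, and the back-substitutions yielding $\rho_{\text{\tiny M,AF}}^{*}=\beta/\gamma_e$ and $f(g^{*})=g^{*}$ all check out (I verified the algebra, including the factorization $\gamma_e+(1+g^{*})^2=(1+g^{*})\frac{(1+g^{*})^2-\beta(g^{*})^2}{1+(1-\beta)g^{*}}$), and your admissibility concern for $\beta>1$ resolves itself exactly as you say, since $p(g^{*})=0$ with $\gamma_e>0$ forces $1+(1-\beta)g^{*}>0$, i.e.\ $g^{*}<1/(\beta-1)$. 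Note, however, that the paper does not prove this corollary internally at all: it simply defers to \cite{RusdhaPrep}, and the style of argument the paper uses for the analogous CBf result (Corollary \ref{col:opt_rhoC}) and in Appendix B is direct differentiation of the limiting SINR with respect to the regularization parameter, which relies on implicit differentiation of the fixed-point equation and identities such as $g+\bar{\rho}\frac{\partial}{\partial\bar{\rho}}g=\frac{\beta g}{\beta+\bar{\rho}(1+g)^2}$. Your change of variables from $\rho_{\text{\tiny M,AF}}$ to $g$ is a genuinely different and arguably cleaner route: it avoids any derivative of the fixed-point solution, turns the optimization into locating the unique positive root of a quadratic, delivers the optimal value $\tSINR^{*,\infty}_{\text{\tiny MCP,AF}}=g(\beta,\beta/\gamma_e)$ as an immediate by-product, and makes the corollary self-contained rather than dependent on an external reference; what it costs is the extra bookkeeping about the bijection $\rho\leftrightarrow g$ and the admissible interval of $g$, which you handled correctly.
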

\begin{IEEEproof}
The proof follows easily from \cite{RusdhaPrep}.
\end{IEEEproof}
It is interesting to see that the limiting SINR expression with $\rho_{\text{\tiny M,AF}}^*$ becomes simpler and it depends only the cell-loading ($\beta$) and the effective SNR. Clearly from \eqref{eq:def_effSNR_MCPanalog}, $\gamma_e$ is a function of the total MSE, $\delta_t=\delta_d+\delta_c$, that can be thought as a reasonable measure of the CSIT quality. Thus, $\rho_{\text{\tiny M,AF}}^*$ adjusts its value as   $\delta_t$ changes. Now, from \eqref{eq:def_effSNR_MCPanalog}, it is obvious that $\gamma_e$ is a decreasing function of $\delta_t$. As a result,  $\rho_{\text{\tiny M,AF}}^*$ is increasing with $\delta_t$. In other words, if the total quality of CSIT improves then the regularization parameter becomes smaller. In the perfect CSIT case, i.e., $\delta_t=0$, and in the high SNR regime, $\rho_{\text{\tiny M,AF}}^*$ goes to zero and we have the ZF precoder.   

Now, we will investigate how to allocate $\nu$ to maximize the limiting SINR \eqref{eq:optSINR_MCP_AF_funcRho}, or equivalently $g(\beta,\rho_{\text{\tiny M,AF}}^*)$. 
$\nu$ is captured by $\gamma_e$ (or  $\rho_{\text{\tiny M,AF}}^*$) via $\delta_d$. It can be shown that $g$ is decreasing (increasing) in $\rho_{\text{\tiny M,AF}}$  ($\gamma_e$). Then, for a fixed $\beta$ the limiting SINR is maximized by solving the following optimization problem
\[
\underset{\nu \in [0,1] }{\max} \quad \gamma_e=\frac{\epsilon-\delta_c+ 1-\delta_d}{(\delta_d+\delta_c)+\frac{1}{\gamma_d}}. 
\]
As mentioned earlier, $\gamma_e$ is a decreasing function of $\delta_t$. Thus, the optimization problem above can be rewritten as
\be\label{opt:nu_min_dist_MCP}
\underset{\nu \in [0,1] }{\min} \quad \delta_t=\delta_d+\delta_c=\frac{1}{\nu\bar{\gamma}_u+1}+\frac{\epsilon}{(1-\nu)\bar{\gamma}_u+1}.
\ee
From the above, it is very interesting to note that \emph{the optimal $\nu$ that maximizes $\tSINR^{*,\infty}_{\text{\tiny MCP}}$ is the same as the one that minimizes the total MSE, $\delta_t$}. 
%This suggests that channel quality affects significantly the performance of network MIMO. 

It is easy to check that the optimization problem above is a convex program and the optimal $\nu$, denoted by $\nu^*$, can be expressed as follows
\be\label{eq:opt_nu_mcp}
\nu^*=\begin{cases}
0, & \sqrt{\epsilon} \geq \bar{\gamma}_u + 1\\
1, & \sqrt{\epsilon} \leq \frac{1}{\bar{\gamma}_u + 1}\\
\frac{1+\frac{1}{\bar{\gamma}_u}(1-\sqrt{\epsilon})}{1+\sqrt{\epsilon}}, & \text{otherwise.}
\end{cases}
\ee
As a result, for $\sqrt{\epsilon} \le \frac{1}{\bar{\gamma}_u+1}$,
the BSs should not waste resources trying to learn about the ``interfering" channel states. In this situation, \textit{the coordination breaks down} and the base stations perform SCP.
The completely opposite scenario, in which the BSs should not learn the ``direct" channels, occurs when $\sqrt{\epsilon} \geq \bar{\gamma}_u + 1$. Clearly, this can only happen if $\epsilon > 1$. 
When $\sqrt{\epsilon} \geq \bar{\gamma}_u + 1$, the BSs also perform SCP but each BS transmits to the users in the neighboring cell. 

We end this subsection by characterizing the behavior of $\gamma_e$ (equivalently $\tSINR^{*,\infty}_{\text{\tiny MCP}}$), after optimal feedback power allocation, as the cross channel gain $\epsilon$ varies. 
This also implicitly shows how the total MSE, $\delta_t$, affects the limiting SINR.
%The effective SNR $\gamma_e$ at $\nu^*_{\text{\tiny MCP}}$ can be expressed as
%\[
%\gamma_e=\begin{cases}
%\frac{\epsilon\bar{\gamma}_u}{(1+\bar{\gamma}_u)\left(1+\frac{1}{\gamma_d}\right)+\epsilon}, & \sqrt{\epsilon} \geq \bar{\gamma}_u + 1\\
%\frac{\bar{\gamma}_u}{(1+\bar{\gamma}_u)\left(\epsilon+\frac{1}{\gamma_d}\right)+1} & \sqrt{\epsilon} \leq \frac{1}{\bar{\gamma}_u + 1}\\
%\gamma_d\frac{\bar{\gamma}_u(1+\epsilon)+(\sqrt{\epsilon}-1)^2}{\gamma_d(1+\sqrt{\epsilon})^2+\bar{\gamma}_u+2}, & \text{otherwise.}
%\end{cases}
%\]
%
Let $\bryu=\frac{\bar{\gamma}_u}{(1+\epsilon)}$. We analyze the different cases in \eqref{eq:opt_nu_mcp} separately.
\subsubsection{\texorpdfstring{$\sqrt{\epsilon} \leq \frac{1}{\bar{\gamma}_u + 1}$}{Gamma VS epsilon case1}}
This is the case when the BSs perform SCP for the users in their own cell. For fixed $\bryu$, this inequality is equivalent  to $\epsilon \le \epsilon^{\text{\tiny SCP}}_{\max}$, where $\epsilon^{\text{\tiny  SCP}}_{\max} \ge 0$ satisfies  
$\sqrt{\epsilon^{\text{\tiny  SCP}}_{\max}} = \frac{1}{\bryu (1+\epsilon^{\text{\tiny  SCP}}_{\max}) + 1}$.
Now, by taking the first derivative  $\frac{\partial \gamma_e}{\partial \epsilon}$ and setting it to zero, the (unique) stationary point is given by
\[
\epsilon^{\text{\tiny SCP}}_{\text{\tiny AF}} = \frac{1}{\sqrt{\gamma_d\bryu}}-1. %\quad \mathcal{Z} \leq\sqrt{\gamma_d\bryu} \leq 1 
\]

If $\sqrt{\epsilon^{\text{\tiny SCP}}_{\text{\tiny AF}}} \in [0,\sqrt{\epsilon^{\text{\tiny  SCP}}_{\max}}]$, 
%For $\sqrt{\epsilon^{\text{\tiny SCP}}_{\text{\tiny AF}}} \in [0,(\gamma_e+1)^{-1}]$, 
it is easy to check that the limiting SINR is increasing until $\epsilon=\epsilon^{\text{\tiny SCP}}_{\text{\tiny AF}}$ and then decreasing. If $ \sqrt{\gamma_d\bryu} > 1 $ then  $\epsilon^{\text{\tiny SCP}}_{\text{\tiny AF}}<0$, or equivalently, $\frac{\partial \gamma_e}{\partial \epsilon} < 0$. Consequently, for this case, the limiting SINR is decreasing in $\epsilon$. Moreover, $\sqrt{\epsilon^{\text{\tiny SCP}}_{\text{\tiny AF}}} \ge \sqrt{\epsilon^{\text{\tiny  SCP}}_{\max}}$ if the following condition holds
\be\label{eq:cond_limSINR_incr}
\sqrt{\gamma_d\bryu}(2-2\gamma_d -\bryu) \geq (2\gamma_d\bryu -\gamma_d -\bryu),
\ee
in which case $\frac{\partial \gamma_e}{\partial \epsilon} > 0$, % $ (or $\sqrt{\epsilon^{\text{\tiny SCP}}_{\text{\tiny AF}}}>\frac{1}{\bar{\gamma_e+1}}$)
 which implies that the limiting SINR always increases over $\epsilon$ for this case. 

This behavior of $\gamma_e$ as a function of $\epsilon$ can be intuitively explained as follows. When $\nu=1$, the total MSE is $\delta_t=\frac{1}{(1+\epsilon)\bryu+1}+\epsilon$, where the first and second terms are $\delta_d$ and $\delta_c$, respectively. %Recalling our channel model, $\bh_{k,j,\bar{j}}=\ih_{k,j,\bar{j}} \sim \mathcal{CN}(0,\epsilon\bI)$, so that the cross-channel, or correspondingly the interference, affects the limiting SINR through the second term, i.e., $\epsilon$.
 As $\epsilon$ increases, $\delta_d$ decreases whereas $\delta_c$ increases. This shows that there is a trade-off between the quality of the direct channel and the strength of the interference. The trade-off is also influenced by parameters $\gamma_d$ and $\bryu$. As shown in the analysis,
when $\sqrt{\gamma_d\bryu} > 1$, the effect of cross channel to the limiting SINR dominates. In contrast, if the condition in \eqref{eq:cond_limSINR_incr} is satisfied, the effect of the quality of the direct channel ($\delta_t$) becomes dominant. If the aforementioned conditions do not hold, $\delta_t$ causes the SINR to increase until $\epsilon^{\text{\tiny SCP}}_{\text{\tiny AF}}$ and after that the interference from the cross channel takes over as the dominant factor, thereby reducing the limiting SINR. 

\subsubsection{\texorpdfstring{$\bar{\gamma}_u + 1 \geq \sqrt{\epsilon} \geq \frac{1}{\bar{\gamma}_u + 1}$}{Gamma VS epsilon Case2}}
Here, the BSs perform MCP. By taking $\frac{\partial \gamma_e}{\partial \epsilon}$ in that interval of $\epsilon$, it can be shown that we have a unique stationary which we denote as $\sqrt{\epsilon_\text{\tiny AF}^\text{\tiny M}}$. We can also show that $\gamma_e$ is a convex function for $\epsilon \in [0,1]$ and is increasing for $\epsilon \geq 1$. Thus, if $\frac{1}{\bar{\gamma}_u + 1} \leq \sqrt{\epsilon_\text{\tiny AF}^\text{\tiny M}} \leq \bar{\gamma}_u + 1 $, the limiting SINR will decrease for $\sqrt{\epsilon} \in [\frac{1}{\bryu(1+\epsilon)+1},\sqrt{\epsilon_\text{\tiny AF}^\text{\tiny M}}]$ and increase  after that; Otherwise, the limiting SINR increases in the region. Here, for $\sqrt{\epsilon} \in [ \frac{1}{\bar{\gamma}_u + 1},1]$, we still can see the effect of the trade-off within $\delta_t$ to the limiting SINR as $\epsilon$ changes. In that interval, the quality of the direct channel becomes better as $\epsilon$ increases; However, that of the cross channel decreases and this affects the SINR badly until $\epsilon_\text{\tiny AF}^\text{\tiny M}$. After this point, %including $\epsilon > 1$, 
the improvement in the quality of the direct channel will outweigh the deterioration of that of the cross channel, causing the SINR to increase. 
%It can be shown that if $\sqrt{\epsilon} \geq \bryu^{-1}\left(2+\bryu+\sqrt{(2+\bryu)^2+2\bryu(1+\bryu)}\right) > 1$, the quality of the direct channel also increases as $\epsilon$ increases. Hence, it is obvious in that interval the limiting SINR will also increase.

\subsubsection{\texorpdfstring{$\sqrt{\epsilon} \geq \bar{\gamma}_u + 1$}{Gamma VS epsilon Case3}}
In this case, each BS performs SCP, but serves the other cell's users. We can establish that $\frac{\partial \gamma_e}{\partial \epsilon}> 0$. Hence, for this case, the limiting SINR is increasing in $\epsilon$.

\subsection{Coordinated Beamforming}\label{ss:Network_MIMO_CBf}
\begin{theorem}\label{th:sinr_coord}
Let $\rho_{\text{\tiny C,AF}} = \frac{\alpha}{N}$, and let $\Gamma_A$ be the solution of the following cubic equation
\begin{align}
\Gamma_A=\frac{1}{\rho_{\text{\tiny C,AF}}+\frac{\beta\omega_c}{1+\omega_c\Gamma_A}+\frac{\beta\omega_d}{1+\omega_d\Gamma_A}}.\label{eq:Gamma_AF}
\end{align}
In the large system limit, the SINR of the coordinated beamforming given in \eqref{eq:sinr_coord} converges almost surely to a deterministic quantity
given by
\begin{align}\label{eq:limSINR_coord_AF}
\tSINR^{\infty}_{\text{\tiny CBf,AF}}% &= \frac{-\frac{\omega_d}{\beta}\Gamma_A^2}{\left(\frac{1}{\Gamma_A_d}+\frac{\omega_{d}}{(1+\omega_{d}\Gamma_A)^2}+\frac{\omega_{c}}{(1+\omega_{c}\Gamma_A)^2}+\delta_{d} + \delta_{c}\right)\frac{\partial \Gamma_A}{\partial \rho_{\text{\tiny C,AF}}}} \\
&= \frac{\frac{\omega_d}{\beta}\Gamma_A\left[\rho_{\text{\tiny C,AF}}+\frac{\beta\omega_c}{(1+\omega_c\Gamma_A)^2}+\frac{\beta\omega_d}{(1+\omega_d\Gamma_A)^2}\right]}{\left(\frac{1}{\gamma_d}+\delta_{d} + \delta_{c}+\frac{\omega_{d}}{(1+\omega_{d}\Gamma_A)^2}+\frac{\omega_{c}}{(1+\omega_{c}\Gamma_A)^2}\right)}.
\end{align}
%where
%\be\label{eq:der_gamma_rho}
%-\frac{\partial \gamma}{\partial \rho}=-\gamma'=\frac{\gamma}{\rho+\frac{\beta\omega_c}{(1+\omega_c\gamma)^2}+\frac{\beta\omega_d}{(1+\omega_d\gamma)^2}}.
%\ee
\end{theorem}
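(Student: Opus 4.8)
The plan is to run the standard large-system program for RZF-type precoders: express the SINR \eqref{eq:sinr_coord} through quadratic and bilinear forms in the resolvent of the estimated Gram matrix seen by each base station, replace those forms by their deterministic equivalents (the trace lemma together with standard results on the Stieltjes transform of large sample-covariance-type matrices), and match the outcome with \eqref{eq:limSINR_coord_AF}. Write $\mathbf{M}_{kj}=\alpha\bI_N+\sum_{(l,m)\neq(k,j)}\hh_{l,m,j}^H\hh_{l,m,j}$, so the beamformer is $\hw_{kj}=\mathbf{M}_{kj}^{-1}\hh_{k,j,j}^H$, and let $\mathbf{M}_j$ be the same matrix with the $(k,j)$ term restored. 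By \eqref{eq:ch_model_AF}--\eqref{eq:def_omega_delta}, $\tfrac1N\mathbf{M}_j$ is built from $K$ i.i.d.\ rows of per-entry variance $\omega_d/N$ (own-cell estimates) and $K$ of variance $\omega_c/N$ (other-cell estimates), with $\alpha=N\rho_{\text{\tiny C,AF}}$; hence $\tfrac1N\TR{(\tfrac1N\mathbf{M}_j)^{-1}}$ converges a.s.\ to $\Gamma_A$ solving \eqref{eq:Gamma_AF}, and $\tfrac1N\TR{(\tfrac1N\mathbf{M}_j)^{-2}}$, being minus the derivative of the former, converges to $-\Gamma_A'$ with $\Gamma_A'=d\Gamma_A/d\rho_{\text{\tiny C,AF}}<0$. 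Implicit differentiation of \eqref{eq:Gamma_AF}, combined with $\Gamma_A\rho_{\text{\tiny C,AF}}=1-\tfrac{\beta\omega_d\Gamma_A}{1+\omega_d\Gamma_A}-\tfrac{\beta\omega_c\Gamma_A}{1+\omega_c\Gamma_A}$, yields the identity
\[
\frac{\Gamma_A^2}{-\Gamma_A'}=\Gamma_A\left[\rho_{\text{\tiny C,AF}}+\frac{\beta\omega_d}{(1+\omega_d\Gamma_A)^2}+\frac{\beta\omega_c}{(1+\omega_c\Gamma_A)^2}\right],
\]
whose right-hand side is the bracket appearing in \eqref{eq:limSINR_coord_AF}; it will be used to bring the final expression into the stated form. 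Leave-one-out and leave-two-out versions of $\mathbf{M}_j$ change all the normalized traces above only by $O(1/N)$.

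For the numerator I would substitute $\bh_{k,j,j}=\hh_{k,j,j}+\ih_{k,j,j}$ from \eqref{eq:ch_model_AF}: since $\ih_{k,j,j}$ is independent of $\hh_{k,j,j}$ and of $\mathbf{M}_{kj}$, the mixed term is $O(N^{-1/2})$ and the trace lemma gives $\bh_{k,j,j}\hw_{kj}\to\omega_d\Gamma_A$, so $|\bh_{k,j,j}\hw_{kj}|^2\to\omega_d^2\Gamma_A^2$. For the power constant, $\|\hw_{kj}\|^2=\hh_{k,j,j}\mathbf{M}_{kj}^{-2}\hh_{k,j,j}^H$ has conditional mean $\omega_d\TR{\mathbf{M}_{kj}^{-2}}\approx\tfrac{\omega_d}{N}(-\Gamma_A')$, hence $\sum_k\|\hw_{kj}\|^2\to\beta\omega_d(-\Gamma_A')$ and $c_j^2=P/\sum_k\|\hw_{kj}\|^2\to P/(\beta\omega_d(-\Gamma_A'))$ (and $c_{\bar j}^2$ likewise, by symmetry). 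Multiplying, the numerator of \eqref{eq:sinr_coord} tends to $\tfrac{P\omega_d}{\beta}\cdot\tfrac{\Gamma_A^2}{-\Gamma_A'}=\tfrac{P\omega_d}{\beta}\Gamma_A[\,\cdots]$, i.e.\ $P$ times the numerator of \eqref{eq:limSINR_coord_AF}.

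The interference term is the core of the argument. I would split each $|\bh_{k,j,j'}\hw_{k'j'}|^2$ along $\bh_{k,j,j'}=\hh_{k,j,j'}+\ih_{k,j,j'}$; the mixed terms are conditionally mean zero and sum to $0$, while the pure error contribution collapses by the power constraints: $\sum_{(k',j')}c_{j'}^2\,\mathbb{E}[\,|\ih_{k,j,j'}\hw_{k'j'}|^2 \mid \hw_{k'j'}\,]=\delta_d\,c_j^2\sum_{k'}\|\hw_{k'j}\|^2+\delta_c\,c_{\bar j}^2\sum_{k'}\|\hw_{k'\bar j}\|^2=(\delta_d+\delta_c)P$, the omitted $(k',j')=(k,j)$ term being $O(1/N)$; this produces the $\delta_d+\delta_c$ in the denominator of \eqref{eq:limSINR_coord_AF}. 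For the estimate contribution I would pull user $(k,j)$'s estimate out of $\mathbf{M}_{k'j}$ (resp.\ $\mathbf{M}_{k'\bar j}$) with the matrix inversion lemma, getting the exact identity $\hh_{k,j,j}\hw_{k'j}=\hh_{k,j,j}\mathbf{M}_{kk'j}^{-1}\hh_{k',j,j}^H/(1+\hh_{k,j,j}\mathbf{M}_{kk'j}^{-1}\hh_{k,j,j}^H)$ (with $\mathbf{M}_{kk'j}$ excluding both $(k,j)$ and $(k',j)$), whose denominator converges to $1+\omega_d\Gamma_A$ at the serving BS and to $1+\omega_c\Gamma_A$ at the other BS --- the difference arising because the estimate of user $(k,j)$'s channel has per-entry variance $\omega_d$ at BS $j$ but $\omega_c$ at BS $\bar j$. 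Taking conditional expectations over the now-fresh vectors $\hh_{k',j,j}$ (resp.\ $\hh_{k',\bar j,\bar j}$), then over $\hh_{k,j,j}$ (resp.\ $\hh_{k,j,\bar j}$) via the trace lemma, and summing the $\sim\beta N$ terms gives $\sum_{k'\neq k}|\hh_{k,j,j}\hw_{k'j}|^2\to\beta\omega_d^2(-\Gamma_A')/(1+\omega_d\Gamma_A)^2$ and $\sum_{k'}|\hh_{k,j,\bar j}\hw_{k'\bar j}|^2\to\beta\omega_c\omega_d(-\Gamma_A')/(1+\omega_c\Gamma_A)^2$; multiplying by the respective $c^2\to P/(\beta\omega_d(-\Gamma_A'))$ produces $\tfrac{P\omega_d}{(1+\omega_d\Gamma_A)^2}$ and $\tfrac{P\omega_c}{(1+\omega_c\Gamma_A)^2}$. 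Adding the noise $\sigma_d^2$ and dividing numerator and denominator by $P$ gives \eqref{eq:limSINR_coord_AF}.

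The main obstacle is exactly this interference sum: each summand $|\hh_{k,j,j}\hw_{k'j}|^2$ is only $O(1/N)$, so the terms cannot be discarded individually --- one must extract the leading $O(1/N)$ contribution of each and then show that the sum of these weakly dependent variables concentrates on its mean. This needs (i) a deterministic equivalent for the ``squared'' resolvent functional $\tfrac1N\TR{(\tfrac1N\mathbf{M})^{-2}}$ --- equivalently for $\tfrac1N\TR{\mathbf{D}(\tfrac1N\mathbf{M})^{-1}\mathbf{D}'(\tfrac1N\mathbf{M})^{-1}}$ with the diagonal weights coming from the two variance levels --- obtained by differentiating the fixed-point equation, and (ii) uniform control, over all $k'$, of the leave-two-out resolvents $\mathbf{M}_{kk'j}^{-1}$ and of the matrix-inversion-lemma denominators. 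Once these are in place, almost-sure convergence of the SINR follows from the a.s.\ form of the trace lemma and a Borel--Cantelli argument, together with the a.s.\ convergence of the empirical spectral distribution of $\tfrac1N\mathbf{M}_j$.
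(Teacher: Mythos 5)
Your proposal is correct and follows essentially the same route as the paper's proof in Appendix C: leave-one-out/leave-two-out resolvents with the matrix inversion lemma, the trace lemma for the quadratic and bilinear forms, the deterministic equivalents $\frac{1}{N}\TR{\bA_j}\to\Gamma_A$ and $\frac{1}{N}\TR{\bA_j^2}\to-\partial\Gamma_A/\partial\rho$ with the same derivative identity \eqref{eq:der_gamma_rho_CBf_AF} used to put the signal term in the stated form, and the same per-BS splitting producing the $\omega_d/(1+\omega_d\Gamma_A)^2$, $\omega_c/(1+\omega_c\Gamma_A)^2$ and $\delta_d+\delta_c$ contributions. The only cosmetic difference is that you absorb the estimation-error interference directly through the per-BS power constraint rather than computing per-term limits $I^{(2)}_{kj,k'j'}$ as the paper does, and you correctly flag the uniformity/concentration control (handled in the paper via the max-over-users versions of the trace and rank-one perturbation lemmas) as the remaining technical step.
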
 
\begin{IEEEproof}
See Appendix \ref{App:limSINR_CBf_AF}
\end{IEEEproof}

Similar to the MCP case, the limiting SINR expression \eqref{eq:limSINR_coord_AF} is the same for all users. By comparing \eqref{eq:opt_rhoM} and \eqref{eq:opt_rhoC}, it is also interesting to see that $\rho_{\text{\tiny C,AF}} = \rho_{\text{\tiny M,AF}}$ for a given $\alpha$.  
The optimal $\rho_{\text{\tiny C,AF}}$ that maximizes the limiting SINR \eqref{eq:limSINR_coord_AF} is given in the following. 
\begin{corollary}\label{col:opt_rhoC}
The limiting SINR \eqref{eq:limSINR_coord_AF} is maximized by choosing the regularization parameter according to
\be\label{eq:opt_rhoC}
\rho_{\text{\tiny C,AF}}^*=\beta\left(\frac{1}{\gamma_d}+\delta_d+\delta_c\right).
\ee
and the corresponding limiting SINR is 
\be\label{eq:opt_limSINR_coord}
\tSINR^{*,\infty}_{\text{\tiny CBf,AF}}=\omega_d\Gamma_A^*,
\ee 
where $\Gamma_A^*$ is $\Gamma_A$ with $\rho_{\text{\tiny C,AF}}=\rho_{\text{\tiny C,AF}}^*$.
\end{corollary}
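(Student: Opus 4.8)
The plan is to trade the optimization variable $\rho_{\text{\tiny C,AF}}$ for $\Gamma_A$ and reduce the claim to a one-variable unimodality statement. First I would invert the fixed-point equation \eqref{eq:Gamma_AF} so that $\rho_{\text{\tiny C,AF}} = \rho(\Gamma_A)$, where
\[
\rho(\Gamma) := \frac{1}{\Gamma} - \frac{\beta\omega_c}{1+\omega_c\Gamma} - \frac{\beta\omega_d}{1+\omega_d\Gamma},
\]
and check that $\rho(\cdot)$ is a strictly decreasing bijection from $(0,\Gamma_0)$ onto $(0,\infty)$, where $\Gamma_0$ is the unique positive root of $\rho(\Gamma)=0$. (Strict monotonicity: on $(0,\Gamma_0)$ one has $\frac{1}{\Gamma}>\frac{\beta\omega_c}{1+\omega_c\Gamma}+\frac{\beta\omega_d}{1+\omega_d\Gamma}$, and multiplying by $\frac{1}{\Gamma}$ and comparing termwise with $\rho'(\Gamma)=-\frac{1}{\Gamma^2}+\frac{\beta\omega_c^2}{(1+\omega_c\Gamma)^2}+\frac{\beta\omega_d^2}{(1+\omega_d\Gamma)^2}$ gives $\rho'(\Gamma)<0$.) Consequently, maximizing $\tSINR^{\infty}_{\text{\tiny CBf,AF}}$ over $\rho_{\text{\tiny C,AF}}>0$ is equivalent to maximizing it over $\Gamma_A\in(0,\Gamma_0)$.

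Next I would substitute $\rho(\Gamma_A)$ into \eqref{eq:limSINR_coord_AF}. Writing $A:=\frac{1}{\gamma_d}+\delta_d+\delta_c$ and introducing $g(\Gamma):=1-\frac{\beta\omega_c\Gamma}{1+\omega_c\Gamma}-\frac{\beta\omega_d\Gamma}{1+\omega_d\Gamma}=\Gamma\,\rho(\Gamma)$, a direct computation shows that after the substitution the numerator of \eqref{eq:limSINR_coord_AF} becomes $\frac{\omega_d}{\beta}\big(g(\Gamma_A)-\Gamma_A g'(\Gamma_A)\big)$ and the denominator becomes $A-\frac{1}{\beta}g'(\Gamma_A)$, so that
\[
S(\Gamma_A):=\tSINR^{\infty}_{\text{\tiny CBf,AF}}=\frac{\omega_d\big(g(\Gamma_A)-\Gamma_A g'(\Gamma_A)\big)}{\beta A-g'(\Gamma_A)}.
\]
I would then set $F(\Gamma):=g(\Gamma)-\beta A\,\Gamma$; since $F'=g'-\beta A$, the identities $g-\Gamma g'=F-\Gamma F'$ and $\beta A-g'=-F'$ collapse this to the compact form $S(\Gamma)=\omega_d\Gamma-\omega_d\,F(\Gamma)/F'(\Gamma)$.

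Differentiating gives $S'(\Gamma)=\omega_d\,F(\Gamma)F''(\Gamma)/F'(\Gamma)^2$, so $\operatorname{sign} S'(\Gamma)=\operatorname{sign}\big(F(\Gamma)F''(\Gamma)\big)$. The payoff of the reparametrization is that $F$ is now transparent: $F''=g''=\frac{2\beta\omega_c^2}{(1+\omega_c\Gamma)^3}+\frac{2\beta\omega_d^2}{(1+\omega_d\Gamma)^3}>0$, while $F'=g'-\beta A=-\beta\big(A+\frac{\omega_c}{(1+\omega_c\Gamma)^2}+\frac{\omega_d}{(1+\omega_d\Gamma)^2}\big)<0$, so $F$ is strictly decreasing with $F(0)=1>0$; hence $F$ has exactly one zero in $(0,\Gamma_0)$, and there $S'$ changes from $+$ to $-$, so $S$ attains its unique maximum at that zero. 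Finally I would identify the zero: $F(\Gamma)=0\iff\Gamma\rho(\Gamma)=\beta A\,\Gamma\iff\rho(\Gamma)=\beta A$, i.e.\ the maximizer is exactly the $\Gamma_A$ produced by $\rho_{\text{\tiny C,AF}}=\beta\big(\frac{1}{\gamma_d}+\delta_d+\delta_c\big)$, which is \eqref{eq:opt_rhoC}; and at this point $F=0$ forces $S=\omega_d\Gamma_A^*$, which is \eqref{eq:opt_limSINR_coord}.

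I expect the only genuine effort to be the bookkeeping in the second step — confirming that the ratio in \eqref{eq:limSINR_coord_AF}, after inserting $\rho(\Gamma_A)$, really does collapse to $\omega_d(g-\Gamma g')/(\beta A-g')$ — together with checking the two sign facts $F'<0$ and $F''>0$; everything beyond that is routine single-variable calculus. A direct alternative would be to differentiate \eqref{eq:limSINR_coord_AF} in $\rho_{\text{\tiny C,AF}}$ using $d\Gamma_A/d\rho_{\text{\tiny C,AF}}$ obtained by implicitly differentiating \eqref{eq:Gamma_AF}, and then show the stationarity condition forces $\rho_{\text{\tiny C,AF}}=\beta(\frac{1}{\gamma_d}+\delta_d+\delta_c)$; but that route still needs a separate second-order/convexity argument for global optimality, which the reparametrization supplies automatically.
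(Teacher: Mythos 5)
Your proposal is correct, and it reaches the result by a genuinely different route than the paper. The paper works directly in the variable $\rho_{\text{\tiny C,AF}}$: it differentiates \eqref{eq:limSINR_coord_AF} using the implicit derivative $\partial\Gamma_A/\partial\rho_{\text{\tiny C,AF}}<0$ from \eqref{eq:der_gamma_rho_CBf_AF} and factorizes the result as $\frac{\partial \tSINR^{\infty}_{\text{\tiny CBf,AF}}}{\partial \rho_{\text{\tiny C,AF}}} = \omega_d\frac{\gamma_{eff}-\rho_{\text{\tiny C,AF}}}{[\gamma_{eff}+\Psi]^2}\frac{\partial\Psi}{\partial\rho_{\text{\tiny C,AF}}}$ with $\gamma_{eff}=\beta(\gamma_d^{-1}+\delta_d+\delta_c)$ and $\partial\Psi/\partial\rho_{\text{\tiny C,AF}}>0$, so the sign of the derivative is that of $\gamma_{eff}-\rho_{\text{\tiny C,AF}}$ and the unique stationary point $\rho_{\text{\tiny C,AF}}^*=\gamma_{eff}$ is immediately the global maximizer; evaluating \eqref{eq:limSINR_coord_AF} there gives $\omega_d\Gamma_A^*$. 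You instead eliminate $\rho_{\text{\tiny C,AF}}$ by inverting the fixed point \eqref{eq:Gamma_AF}, which turns the problem into one-variable calculus in $\Gamma$ with the compact form $S(\Gamma)=\omega_d\Gamma-\omega_d F(\Gamma)/F'(\Gamma)$, $F=g-\beta A\Gamma$; the sign bookkeeping ($F'<0$, $F''>0$, $F(0)=1>0$, $F(\Gamma_0)=-\beta A\Gamma_0<0$) then pins the unique maximizer at the zero of $F$, i.e.\ at $\rho(\Gamma)=\beta A$, and $F=0$ directly yields $S=\omega_d\Gamma_A^*$ -- a nice bonus of your parametrization, since the paper obtains \eqref{eq:opt_limSINR_coord} only by substituting back. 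Your checks of the algebra (numerator $=\frac{\omega_d}{\beta}(g-\Gamma g')$, denominator $=A-g'/\beta$, monotonicity of $\rho(\Gamma)$ on $(0,\Gamma_0)$) are sound. One small correction to your closing remark: the "direct alternative" does not in fact require a separate second-order argument -- the paper's factorized derivative already determines the sign globally, giving quasi-concavity in $\rho_{\text{\tiny C,AF}}$ for free, so the two proofs are of comparable length; what your version buys is avoiding implicit differentiation and making the value of the optimum transparent.
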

\begin{IEEEproof}
%It can be shown that
%\[ %\label{eq:der_gamma_rho}
%-\frac{\partial \Gamma}{\partial \rho_{\text{\tiny C,AF}}}=-\Gamma'=\frac{\Gamma}{\rho_{\text{\tiny C,AF}}+\frac{\beta\omega_c}{(1+\omega_c\Gamma)^2}+\frac{\beta\omega_d}{(1+\omega_d\Gamma)^2}}.
%\]
%Substituting the equation above into \eqref{eq:limSINR_coord}, we can express the limiting SINR as
%\[
%\tSINR_\infty^{CBf} = \frac{\omega_d\Gamma\left(\rho_{\text{\tiny C,AF}}+\frac{\beta\omega_{d}}{(1+\omega_{d}\Gamma)^2}+\frac{\beta\omega_{c}}{(1+\omega_{c}\Gamma)^2}+\right)}{\beta\left(\frac{1}{\gamma_d}+\delta_{d} + \delta_{c}+\frac{\omega_{d}}{(1+\omega_{d}\Gamma)^2}+\frac{\omega_{c}}{(1+\omega_{c}\Gamma)^2}\right)}.
%\]
%For now, let us d
Let $\gamma_{eff}=\beta\left(\gamma_d^{-1}+\delta_{d} + \delta_{c}\right)$ and $\Psi=\frac{\beta\omega_{d}}{(1+\omega_{d}\Gamma_A)^2}+\frac{\beta\omega_{c}}{(1+\omega_{c}\Gamma_A)^2}$.
It is easy to show that
\begin{align}\label{eq:der_sinr_rhoC}
\frac{\partial \tSINR^{\infty}_{\text{\tiny CBf,AF}}}{\partial \rho_{\text{\tiny C,AF}}} = \omega_d\frac{\gamma_{eff}-\rho_{\text{\tiny C,AF}}}{[\gamma_{eff}+\Psi]^2}\frac{\partial\Psi}{\partial\rho_{\text{\tiny C,AF}}},
\end{align}
where $\frac{\partial\Psi}{\partial\rho_{\text{\tiny C,AF}}}=-2\beta\frac{\partial\Gamma_A}{\partial\rho_{\text{\tiny C,AF}}}\left(\frac{\omega_d^2}{(1+\omega_d\Gamma_A)^3}+\frac{\omega_c^2}{(1+\omega_c\Gamma_A)^3}\right) > 0$ with $\frac{\partial\Gamma_A}{\partial\rho_{\text{\tiny C,AF}}} < 0$ is given by \eqref{eq:der_gamma_rho_CBf_AF}.
Thus, it follows that $\rho_{\text{\tiny C,AF}}^*=\gamma_{eff}$ is the unique stationary point and the global optimizer. Plugging back $\rho_{\text{\tiny C,AF}}$  into \eqref{eq:limSINR_coord_AF} yields \eqref{eq:opt_limSINR_coord}.
\end{IEEEproof}

Similar to the MCP case, the corollary above shows that the optimal regularization parameter adapts to the changes of CSIT quality and 
it is a decreasing function of $\delta_t$. 
%Compared to \eqref{eq:limSINR_coord_AF}, the limiting SINR in \eqref{eq:opt_limSINR_coord} takes a simpler form. 

Finding $\nu$ that maximizes the limiting SINR of the CBf is more complicated than in the MCP case. It is equivalent to maximizing $\omega_d\Gamma$, such that $\nu\in [0,1]$: this is a non-convex program. However, the maximizer $\nu^*$ is one of followings: %can be obtained by evaluating \eqref{eq:opt_limSINR_coord} at 
the boundaries of the feasible set ($\nu=\{0,1\}$) or the stationary point, denoted by $\nu^\circ$,
which is the solution of 
\be\label{eq:nuStat_coord}
\nu^\circ= - \frac{\Gamma_A^*}{\frac{\partial \Gamma_A^*}{\partial \rho_{\text{\tiny C,AF}}^*}(1+\nu^\circ\bar{\gamma}_u)}.
\ee
The point $\nu=0$ can be eliminated from the feasible set since the derivative of the limiting SINR with respect to $\nu$ at this point is always positive.

\subsection{Numerical Results}

%\begin{figure}[!t]
%\centering
%\includegraphics[scale=0.75]{finite_sumrate_vs_N_analog_gammaU_0db} 
%\caption{
%%\subref{fig:a}
%The total sum-rate difference for different system dimensions with $\beta=0.6$, $\epsilon =0.5$ $\gamma_d = 10$ dB and $\gamma_u = 0$ dB.}%
%\label{fig:conv_sumrate_analog}%
%\end{figure}
Since propagation channels fluctuate, the SINR expressions in \eqref{eq:sinr_mcp} and \eqref{eq:sinr_coord} are random quantities. Consequently, the average sum-rates are also random. Figure \ref{fig:conv_sumrate_analog} illustrates how the random average sum-rates approach the limiting sum-rates as the dimensions of the system increase. This is quantified by the normalized  sum-rate difference which is defined in \eqref{eq:def_ThDiff}.
The average sum-rate is obtained by  averaging the sum-rates over 1000 channel realizations. The optimal regularization parameter and power splitting obtained in the large system analysis are used in computing the limiting and average sum-rates. 
We can see that as the system size increases, the normalized sum-rate difference becomes smaller and this hints
that the approximation of the average sum-rate by the limiting sum-rate becomes more accurate. The difference is already about $1.3\%$ and $0.5\%$ for the MCP and CBf respectively for $N=60,K=36$. 

%\begin{figure}[t]
%\centering
%\input{finite_sumrate_vs_N_analog2}
%\caption{The normalized sum-rate difference for different system dimensions with $\beta=0.6$, $\epsilon =0.5$ $\gamma_d = 10$ dB and $\gamma_u = 0$ dB.}
%\label{fig:conv_sumrate_analog}
%\end{figure}

\begin{figure}[t]
\centering
\begin{tikzpicture}
	\pgfplotsset{grid style=dashed}
	\begin{axis}[small, width=11cm, xlabel=$N$, ylabel=$\Delta R_\text{sum}$,
	yticklabel style={/pgf/number format/fixed},
	xmin=1, xmax=61, ymin=0,
	grid=major
	 ]
		\addplot table {av_ThdiffvsN_MCP_analog.dat};
		\addplot[color=red, mark=square*, mark size=1.5] table {av_ThdiffvsN_CBf_analog.dat};
		\legend{\footnotesize MCP, \footnotesize CBf}
	\end{axis}
\end{tikzpicture}
\caption{The normalized sum-rate difference for different system dimensions with $\beta=0.6$, $\epsilon =0.5$ $\gamma_d = 10$ dB and $\gamma_u = 0$ dB.}
\label{fig:conv_sumrate_analog}
\end{figure}
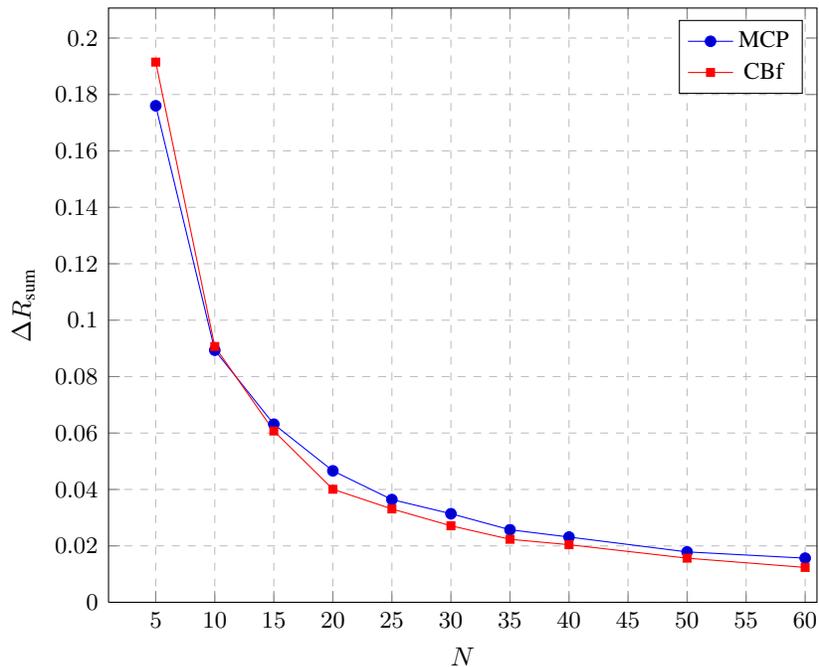

Figure \ref{fig:ThDiff_analog} describes the applicability of the large system results into finite-size systems. We choose a reasonable system-size in practice, i.e., $N=10, K=6$. Then, 250 channel realizations are generated. For each channel realization, with a fixed regularization parameter of the precoder, the optimal $\nu$, denoted by $\nu_\text{\tiny FS}^*$, is computed. Then the resulting average sum-rate is compared to the average sum-rate that using $\nu^*$ from the large system analysis, i.e., \eqref{eq:opt_nu_mcp} and \eqref{eq:nuStat_coord}, for different values of $\epsilon$. We can see that the normalized average sum-rate difference, i.e., $\frac{\mathbb{E}\left[|R_\text{sum}(\nu_\text{\tiny FS})-R_\text{sum}(\nu^*)|\right]}{R_\text{sum}(\nu_\text{\tiny FS})}$ , for CBf has a peak around $4\%$ that can be considered as a reasonable value for the chosen system size. For MCP, it is less than $0.47\%$. To this end, our simulation results indicate that the large system results discussed earlier approximate the finite-system quite well.
%\begin{figure}[t]
%\centering
%\input{finite_sumrate_vs_epsil_analog2} 
%\caption{
%%\subref{fig:a}
%The normalized average sum-rate difference of the finite-size system by using the $\nu_\text{FS}$ and $\nu^*$  with $N=10, \beta=0.6$, $\gamma_d = 10$ dB and $\gamma_u = 0$ dB.}%
%\label{fig:ThDiff_analog}%
%\end{figure}

\begin{figure}[t]
\centering
\begin{tikzpicture}
	\pgfplotsset{grid style=dashed}
	\begin{semilogyaxis}[small, width=11cm, xlabel=$\epsilon$, ylabel=$\frac{\mathbb{E}[ R_{\text{sum}}(\nu_{\text{FS}}^*) - R_\text{sum}(\nu^*)]}{\mathbb{E}[R_{\text{sum}}(\nu_{\text{FS}}^*)]}$,
	xmin=0, xmax=1, ymin=0, ymax=0.13,
	minor x tick num=1,
	xmajorgrids, yminorgrids,
	 ]
		\addplot table {av_ThdiffvsEpsil_MCP_analog.dat};
		\addplot[color=red, mark=square*, mark size=1.5] table {av_ThdiffvsEpsil_CBf_analog.dat};
		\legend{\footnotesize MCP, \footnotesize CBf}
	\end{semilogyaxis}
\end{tikzpicture}
\caption{The normalized average sum-rate difference of the finite-size system by using the $\nu_\text{FS}$ and $\nu^*$  with $N=10, \beta=0.6$, $\gamma_d = 10$ dB and $\gamma_u = 0$ dB.}%
\label{fig:ThDiff_analog}
\end{figure}
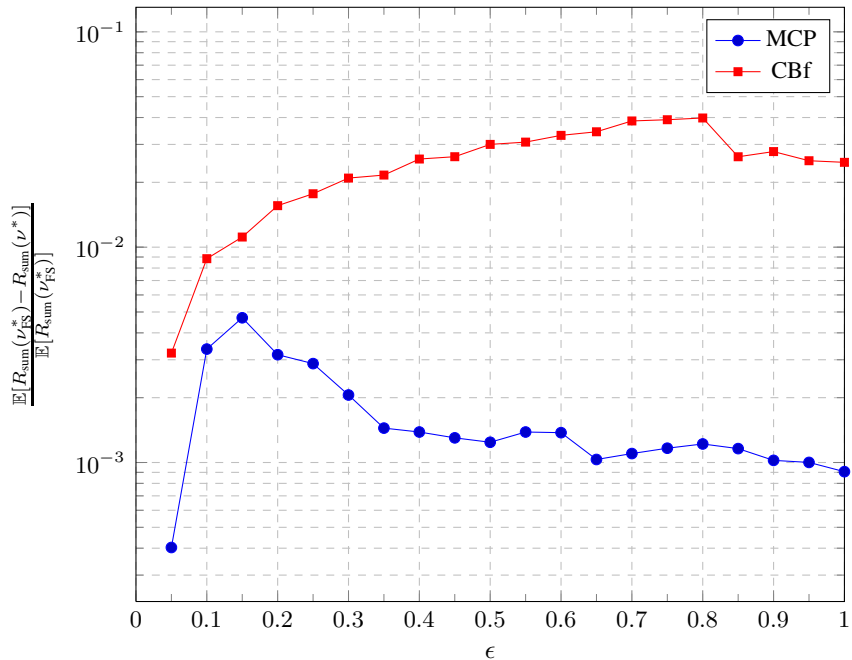

%
%\begin{figure}[!t]
%\centering
%\includegraphics[scale=0.75]{finite_ThDiff_vs_epsil_analog} 
%\caption{
%%\subref{fig:a}
%The total sum-rate difference of the finite-size system by using the $\nu_\text{FS}$ and $\nu^*$  with $N=10, \beta=0.6$, $\gamma_d = 10$ dB and $\gamma_u = 0$ dB.}%
%\label{fig:ThDiff_analog}%
%\end{figure}

In the following, we present some numerical simulations that visualize the characteristics of the optimal $\nu^*$ (in the large system limit) and the corresponding limiting SINR for each cooperation scheme. We are primarily interested in their characteristics when the interfering channel gain $\epsilon$ varies, as depicted in Figure \ref{fig1}. 
In general, we can see that for the same system parameters, the CBf scheme allocates more power to feed back the direct channel compared to the MCP. 
From Figure \ref{fig1}(a), we can see that for values of $\epsilon$ ranging from $0$ up to a certain threshold (denoted by $\epsilon_\text{\tiny M}^\text{\tiny th}=\epsilon_{\max}^\text{\tiny SCP}$ and $\epsilon_\text{\tiny C}^\text{\tiny th}$ for MCP and CBf respectively), the optimal $\nu$  is 1: in other words, it is optimal in this range for the BSs not to try to get information about the cross channels and to construct the precoder based on the direct channel information only. Effectively, the two schemes reduce to the SCP scheme when $\nu^* = 1$: as a result, the same limiting SINR is achieved by both schemes. %We can determine explicitly $\epsilon_t$ for the network MIMO scheme as described in  \eqref{eq:opt_nu_mcp} but not for the CBf scheme. 
%We can also observe in Figure \ref{fig2} that as we increase the uplink SNR the thresholds $\epsilon_t$ for both schemes are reduced.  

In Figure \ref{fig1}(b), we can observe a peculiar behavior of the limiting SINR of MCP, which we already highlighted in the analysis of Section \ref{ss:Network_MIMO_AF}.  When $\sqrt{\epsilon} \le \frac{1}{\bar{\gamma}_u+1}$, i.e. when  $\nu^*=1$, the SINR is decreasing as $\epsilon$ increases. After that the SINR is still decreasing until $\epsilon$ reaches $\epsilon^\text{\tiny M}_\text{\tiny AF}$ and then increasing: this reflects the trade-off between $\delta_c$ and $\delta_d$. Note that this initial decrease does not occur in the perfect CSI case where the SINR is strictly increasing in $\epsilon$ for MCP.
Similar to the MCP case, we can see that the limiting SINR of CBf is decreasing in $\epsilon$ when $\nu^*=1$ (SCP). Moreover, it is still decreasing when both BSs perform CBf. 
%\begin{figure}[!t]
%\centering
%%\setlength{\tabcolsep}{-2pt}
%\begin{tabular}{cc}
%%\subfigure[]{
%\input{nuVsEpsil_beta06_Pup0} &
%%\includegraphics[scale=0.6]{nuVsEpsil_beta06_Pup0_new} &
%\input{SINRVsEpsil_beta06_Pup0}\\%[-7pt]
%{\footnotesize (a)} & {\footnotesize (b)}
%%\label{fig:b}
%%}
%%\includegraphics[width=3in, height=2in, viewport = 110 270 470 570, clip]{SINRVsEpsil_beta06_Pup0}
%\end{tabular}
%\caption{
%%\subref{fig:a}
%(a) The optimal $\nu^*$ and (b) the limiting SINR for the MCP and CBf scheme as $\epsilon$ varies in $[0, 1]$ with $\beta=0.6$, $\gamma_d=10 \text{ dB}$, $\gamma_u=0 \text{ dB}$.}%
%\label{fig1}%
%\end{figure}

\begin{figure}[t]
\centering
\begin{tabular}{cc}
\begin{tikzpicture}[baseline]
	\pgfplotsset{grid style=dashed}
	\begin{axis}[small, width=8.5cm,  xlabel=$\epsilon$, ylabel=$\nu^*$,
	yticklabel style={/pgf/number format/fixed}, ytick={0.4,0.5,...,1.1},
	xmin=0, xmax=1, ymin=0.4, ymax=1.15,
	grid=major,	no markers, legend pos=south west
	 ]
		\addplot[color=blue, line width=1] table {nu_vs_epsil_beta06_yup0_MCP.dat};
		\addplot[color=red, dashed, line width=1] table {nu_vs_epsil_beta06_yup0_CBf.dat};
		\legend{\footnotesize MCP, \footnotesize CBf}
		\draw[->,thick,>=stealth] (axis cs:0.17,1.08) -- (axis cs:0.1,1);
		\node at (axis cs:0.2,1.1) {\footnotesize $\epsilon_\text{\tiny M}^\text{\scriptsize th}$};
		\draw[->,thick,>=stealth] (axis cs:0.88,1.08) -- (axis cs:0.81,1);
		\node at (axis cs:0.91,1.1) {\footnotesize $\epsilon_\text{\tiny C}^\text{\scriptsize th}$};
	\end{axis}
\end{tikzpicture}
&
\begin{tikzpicture}[baseline]
	\pgfplotsset{grid style=dashed}
	\begin{axis}[small, width=8.5cm, xlabel=$\epsilon$, ylabel=$\tSINR^\infty$,
	yticklabel style={/pgf/number format/fixed},
	xmin=0, xmax=1, ymin=0.6,
	grid=major, no markers, legend pos=south west
	 ]
		\addplot[color=blue, line width=1] table {SINR_vs_epsil_beta06_yup0_MCP.dat};
		\addplot[color=red, dashed, line width=1] table {SINR_vs_epsil_beta06_yup0_CBf.dat};
		\legend{\footnotesize MCP, \footnotesize CBf}
		%\node[pin=60:{\footnotesize $\epsilon_\text{\tiny AF}^\text{\scriptsize M}$}] at (axis cs:0.25,1.362) {};
		\draw[->,thick,>=stealth] (axis cs:0.3,1.5) -- (axis cs:0.25,1.362);
		\node at (axis cs:0.33,1.55) {\footnotesize $\epsilon_\text{\tiny AF}^\text{\tiny M}$};
	\end{axis}
\end{tikzpicture}\\
$\text{\footnotesize (a)}$ & $\text{\footnotesize (b)}$
\end{tabular}
\caption{
(a) The optimal $\nu^*$ and (b) the limiting SINR for the MCP and CBf scheme as $\epsilon$ varies in $[0, 1]$ with $\beta=0.6$, $\gamma_d=10 \text{ dB}$, $\gamma_u=0 \text{ dB}$.}%
\label{fig1}%
\end{figure}
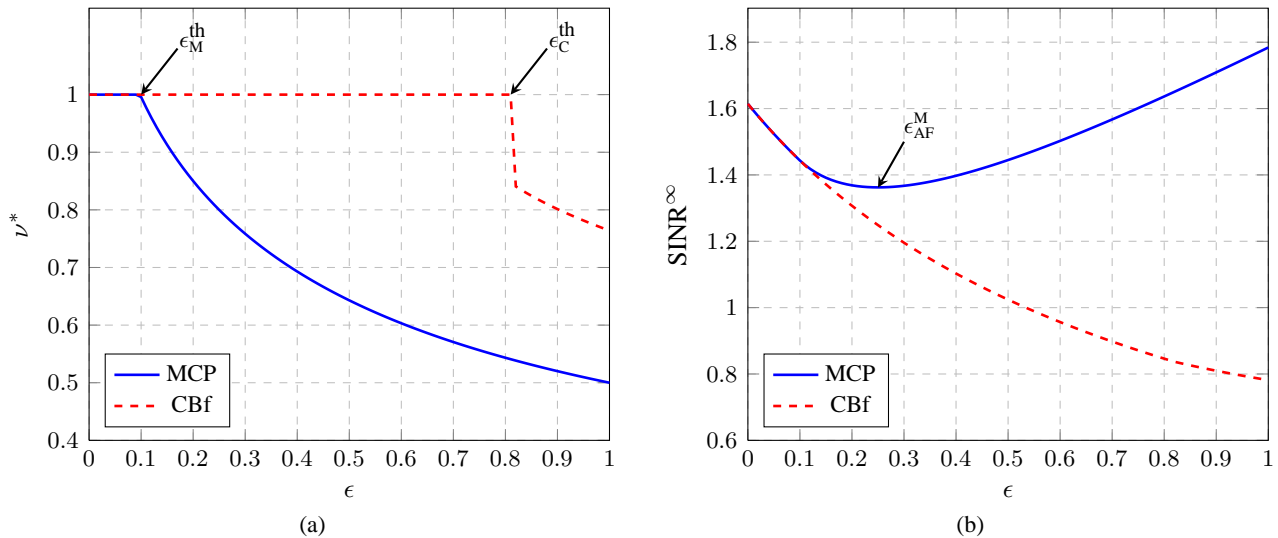

\section{Quantized Feedback  via Random Vector Quantization (RVQ)}\label{sec:RVQ_feedback}
In this section, we will derive the approximations of the SINR for the MCP \eqref{eq:sinr_mcp} and CBf \eqref{eq:sinr_coord} by analyzing them in the large system limit.
We use these approximations to optimize the feedback bit allocation, and regularization parameter. This joint optimization problem can be split into two steps. 
First, we derive the optimal bit allocation, i.e., the optimal $\bar{B}_d=\frac{B_d}{N}$ and $\bar{B}_c=\frac{B_c}{N}$. Plugging the optimal bit allocation back into  the limiting SINR expression, we can then proceed to the second step where we obtain the optimal regularization parameter. At the end of the section, some comparisons of the limiting SINR and bit allocation values for the two schemes are illustrated.

\subsection{MCP}\label{ss:LSA_MCP_RVQ}
\begin{theorem}\label{th:sinr_mcp_rvq}
Let $\rho_{\text{\tiny M,Q}}=(1+\epsilon)^{-1}\alpha/N$ and $g(\beta,\rho)$ be the solution of $g(\beta,\rho)=\left(\rho + \frac{\beta}{1+g(\beta,\rho)}\right)^{-1}$. In the large system limit, the SINR converges in probability to a deterministic quantity
given by
\be\label{eq:limSINR_MCP_RVQ}
	\tSINR_\text{\tiny MCP,Q}^\infty = \gamma_e g(\beta,\rho_\text{\tiny M,Q}) \frac{1+\frac{\rho_\text{\tiny M,Q}}{\beta}(1+g(\beta,\rho_\text{\tiny M,Q}))^2}{\gamma_e+(1+g(\beta,\rho_\text{\tiny M,Q}))^2},
%\frac{d^2 g^2}{\left[1+g\right]^2\left[g + \rho_{\text{\tiny M}}\frac{\partial}{\partial \rho_{\text{\tiny M}}}g\right]\left[1 + \frac{\frac{1}{\gamma_d}}{1+\epsilon}\ - d^2\frac{g(2+g)}{(1+g)^2}\right]}
\ee
where
\be\label{eq:eff_snr_mcp_rvq}
\gamma_e = \frac{d^2}{1-d^2+\frac{1}{\gamma_d(1+\epsilon)}}
\ee
is defined as the effective SNR and
\be\label{eq:d}
d=\frac{\sqrt{1-2^{-\bar{B}_d}}+\epsilon\sqrt{1-2^{-\bar{B}_c}}}{1+\epsilon}.
\ee
\end{theorem}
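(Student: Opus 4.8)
The plan is to mirror the proof of Theorem~\ref{th:sinr_mcp} (the analog-feedback MCP case, Appendix~\ref{App:SINR_MCP}); the only essentially new ingredient is the passage from the RVQ channel model \eqref{eq:chmodel_rvq} to an asymptotically equivalent Gaussian model, after which the random-matrix manipulations are mechanically the same. For the reduction, fix $\bar{B}_d,\bar{B}_c$ and note that as $N\to\infty$ the RVQ distortions concentrate, $\tau^2_{k,j,j}\to 2^{-\bar{B}_d}$ and $\tau^2_{k,j,\bar{j}}\to 2^{-\bar{B}_c}$ almost surely (each is the minimum of $2^{\bar{B}N}$ Beta$(N-1,1)$ random variables; see \cite{Jindal_it06,Yeung_wc07}), while $\|\bh_{k,j,i}\|^2/N\to\epsilon_{ji}$ a.s., and $\mathbf{z}_{k,j,i}$ --- isotropic in the $(N-1)$-dimensional null space of $\hh_{k,j,i}$ and otherwise independent of everything --- acts in every bilinear/quadratic form occurring in \eqref{eq:sinr_mcp} like an independent $\mathcal{CN}(0,N^{-1}\bI_N)$ vector, up to $o(1)$. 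Hence, for the purpose of the limiting SINR one may replace \eqref{eq:chmodel_rvq} by $\bh_{k,j,i}=\sqrt{1-2^{-\bar{B}_\bullet}}\,\hh_{k,j,i}+\ih_{k,j,i}$ with $\hh_{k,j,i}\sim\mathcal{CN}(0,\epsilon_{ji}\bI_N)$ and $\ih_{k,j,i}\sim\mathcal{CN}(0,2^{-\bar{B}_\bullet}\epsilon_{ji}\bI_N)$ independent, $\bullet=d$ if $i=j$ and $\bullet=c$ otherwise; this is exactly the structure of \eqref{eq:ch_model}.

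Next I would establish a scalar deterministic equivalent for the resolvent. Set $\hQ=\hH^H\hH+\alpha\bI_{2N}$ and $\hQ_{k,j}=\hQ-\hh^H_{k,j}\hh_{k,j}$. The rows of $\hH$ consist of $K$ rows with per-block covariance $\bR_1=\mathrm{blkdiag}(\bI_N,\epsilon\bI_N)$ (cell-$1$ users) and $K$ with $\bR_2=\mathrm{blkdiag}(\epsilon\bI_N,\bI_N)$ (cell-$2$ users). By the standard deterministic-equivalent result for such sums of rank-one terms (Bai--Silverstein; see also \cite{Wagner_it12}), $\hQ^{-1}$ has deterministic equivalent $\overline{\bQ}=\bigl(\alpha\bI_{2N}+\tfrac{K\bR_1}{1+e_1}+\tfrac{K\bR_2}{1+e_2}\bigr)^{-1}$ with $e_t=\TR{\bR_t\overline{\bQ}}$. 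Because $\bR_1+\bR_2=(1+\epsilon)\bI_{2N}$, symmetry forces $\overline{\bQ}=q\,\bI_{2N}$ and $e_1=e_2=:e=qN(1+\epsilon)$; plugging this back and eliminating $q$ gives $e=\bigl(\rho_{\text{\tiny M,Q}}+\tfrac{\beta}{1+e}\bigr)^{-1}$, i.e. $e=g(\beta,\rho_{\text{\tiny M,Q}})=:g$. In particular $\TR{\bR_j\overline{\bQ}}=g$ and $q=g/\bigl(N(1+\epsilon)\bigr)$ for every user, so the two-cell block structure collapses to a single scalar, which is why the flat-Wishart fixed point $g(\beta,\cdot)$ appears in \eqref{eq:limSINR_MCP_RVQ}.

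For the SINR itself, use the matrix-inversion lemma, $\bh_{k,j}\hQ^{-1}\hh^H_{k,j}=\bh_{k,j}\hQ_{k,j}^{-1}\hh^H_{k,j}/\bigl(1+\hh_{k,j}\hQ_{k,j}^{-1}\hh^H_{k,j}\bigr)$. Writing $\bh_{k,j}=\hh_{k,j}\mathbf{T}^{1/2}+\ih_{k,j}$, where $\mathbf{T}^{1/2}$ converges to $\mathbf{T}^{1/2}_\infty=\mathrm{blkdiag}\bigl(\sqrt{1-2^{-\bar{B}_d}}\,\bI_N,\sqrt{1-2^{-\bar{B}_c}}\,\bI_N\bigr)$ in the cell-$j$ block ordering, the $\ih_{k,j}$-contribution to $\bh_{k,j}\hQ_{k,j}^{-1}\hh^H_{k,j}$ is $o(1)$ (since $\|\hQ_{k,j}^{-1}\hh^H_{k,j}\|=O(N^{-1/2})$ and $\ih_{k,j}$ is, conditionally on $\hh_{k,j}$, asymptotically orthogonal to it), and the trace lemma gives $\bh_{k,j}\hQ_{k,j}^{-1}\hh^H_{k,j}\to\TR{\mathbf{T}^{1/2}_\infty\bR_j\overline{\bQ}}=q\,\TR{\mathbf{T}^{1/2}_\infty\bR_j}=qN(1+\epsilon)\,d=g\,d$ with $d$ exactly as in \eqref{eq:d}, as well as $\hh_{k,j}\hQ_{k,j}^{-1}\hh^H_{k,j}\to g$; hence the numerator of \eqref{eq:sinr_mcp} converges to $c^2\bigl(gd/(1+g)\bigr)^2$. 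For the interference term, use $\hH^H_{k,j}\hH_{k,j}=\hQ_{k,j}-\alpha\bI_{2N}$ and the analogous rank-one expansion of the outer $\hQ^{-1}\bh^H_{k,j}$ to split $\bh_{k,j}\hQ^{-1}\hH^H_{k,j}\hH_{k,j}\hQ^{-1}\bh^H_{k,j}$ into terms quadratic in the $\hh_{k,j}$-aligned part of $\bh_{k,j}$ and terms quadratic in $\ih_{k,j}$, the cross terms vanishing. Each term is evaluated by the trace lemma together with the second-order resolvent quantities $\TR{\bR\hQ^{-2}}$, obtained by differentiating the fixed point of the previous paragraph in $\alpha$; the normalization $c^2=P_t/\TR{\hQ^{-2}\hH^H\hH}$ likewise converges to a deterministic limit and survives only inside the noise contribution $\sigma^2_d/c^2=(2\gamma_d)^{-1}\bigl(\TR{\hQ^{-1}}-\alpha\TR{\hQ^{-2}}\bigr)$. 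Performing the resulting algebra exactly as in Appendix~\ref{App:SINR_MCP}, but with $\omega_d+\omega_c$ replaced by $1+\epsilon$ and the direct/cross estimate amplitudes collapsed into the single quantity $d$, yields \eqref{eq:limSINR_MCP_RVQ}--\eqref{eq:eff_snr_mcp_rvq}.

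The heaviest part mechanically is the interference bookkeeping of the last paragraph (the many quadratic forms, the second-order resolvent terms, and the collapse into the compact expression), but it is structurally identical to the analog case. The genuinely delicate new point is the Gaussian reduction: the RVQ error $\ih_{k,j,i}=\tau_{k,j,i}\|\bh_{k,j,i}\|\mathbf{z}_{k,j,i}$ is \emph{not} independent of $\hh_{k,j,i}$, and one must argue that the single-dimension null-space constraint relating $\mathbf{z}_{k,j,i}$ to $\hh_{k,j,i}$, together with the concentration of $\tau^2_{k,j,i}$ and of $\|\bh_{k,j,i}\|^2/N$, makes it legitimate to treat $\ih_{k,j,i}$ as an independent Gaussian vector of per-component variance $2^{-\bar{B}_\bullet}\epsilon_{ji}$ without changing the limit; this is precisely what reduces Theorem~\ref{th:sinr_mcp_rvq} to (a variant of) Theorem~\ref{th:sinr_mcp}.
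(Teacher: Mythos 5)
Your route reaches the correct limit and is sound in outline, but it is organized differently from the paper's Appendix~\ref{App:limSINR_mcp_rvq}. The paper never performs your upfront ``Gaussian equivalence'' reduction: it keeps the dependent RVQ error $\ih_{k,j,i}=\tau_{k,j,i}\|\bh_{k,j,i}\|\mathbf{z}_{k,j,i}$ throughout, expands the SINR into the same terms $\breve{A}_{k,j},A_{k,j},F_{k,j},B_{k,j},D_{k,j},E_{k,j},c^2$ as in the analog case, and proves each convergence directly by writing $\mathbf{z}_{k,j,i}=\bv_{k,j,i}\bPi^{\bot}_{\hh_{k,j,i}}/\|\bv_{k,j,i}\bPi^{\bot}_{\hh_{k,j,i}}\|$ with $\bv_{k,j,i}$ Gaussian and independent of $\hh_{k,j,i}$, then applying Lemma~\ref{th:Lemma1_jse} together with the concentration $\tau^2_{k,j,i}\to 2^{-\bar{B}_\bullet}$ (in mean square, hence in probability) and $\|\bh_{k,j,i}\|^2/N\to\epsilon_{ji}$; the scalar fixed point is obtained from Girko's variance-profile theorem (Theorem~\ref{th:Girko}) applied block-wise, not from a Bai--Silverstein deterministic equivalent with per-user covariances $\bR_1,\bR_2$ and a symmetry argument. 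Your approach buys a cleaner conceptual statement (``RVQ error behaves like an independent Gaussian error of variance $2^{-\bar{B}_\bullet}\epsilon_{ji}$, so Theorem~\ref{th:sinr_mcp} applies with $\omega_d+\omega_c\to(1+\epsilon)d^2$''), and the covariance-collapse argument $\bR_1+\bR_2=(1+\epsilon)\bI$ explains transparently why the flat fixed point $g(\beta,\rho_{\text{\tiny M,Q}})$ appears; the paper's term-by-term treatment buys rigor at exactly the place your sketch is thinnest, since the equivalence must be checked for the specific bilinear and quadratic forms that occur (this is precisely the content of the paper's handling of $F_{k,j}$, $D_{k,j}$ and $E_{k,j}$).

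One justification in your write-up is not valid as stated: you argue that the $\ih_{k,j}$ contribution to $\bh_{k,j}\hQ_{k,j}^{-1}\hh_{k,j}^{H}$ is $o(1)$ because $\|\hQ_{k,j}^{-1}\hh_{k,j}^{H}\|=O(N^{-1/2})$, but since $\|\ih_{k,j}\|=O(\sqrt{N})$ the Cauchy--Schwarz bound only gives $O(1)$. The vanishing really comes from the conditional isotropy of $\mathbf{z}_{k,j,i}$ in the null space of $\hat{\bu}_{k,j,i}$: writing $\mathbf{z}$ through $\bv_{k,j,i}$ as above, the bilinear form has conditional mean zero and variance $O(1/N)$ after the $1/N$ normalization, which is how the paper shows $F_{k,j}\ip 0$ (and similarly kills $D_{k,j}$ and the cross terms in $E_{k,j}$). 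You correctly identify this isotropy-plus-concentration mechanism as the delicate point, so the fix is to replace the norm-bound remark by that argument and to carry it through each of the finitely many forms; with that done, your reduction and the subsequent algebra (including obtaining the second-order resolvent terms by differentiating the fixed point, as the paper does via \eqref{eq:Qkj_expand}) deliver \eqref{eq:limSINR_MCP_RVQ}--\eqref{eq:d}.
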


%==== Old Version
%\begin{theorem}\label{th:sinr_mcp_rvq}
%The SINR of MCP with limited feedback (RVQ)  converges in probability, in the large system limit, to a deterministic quantity
%given by
%\be\label{eq:limSINR_MCP_RVQ}
%\tSINR^\infty_{\text{\tiny MCP,Q}} = \frac{d^2 g^2(\beta,\rho_{\text{\tiny M,Q}})}{\left[1+g(\beta,\rho_{\text{\tiny M,Q}})\right]^2\left[g(\beta,\rho_{\text{\tiny M,Q}}) + \rho_{\text{\tiny M,Q}}\frac{\partial}{\partial \rho_{\text{\tiny M,Q}}}g(\beta,\rho_{\text{\tiny M,Q}})\right]\left[1 + \frac{\frac{1}{\gamma_d}}{1+\epsilon}\ - d^2\frac{g(\beta,\rho_{\text{\tiny M,Q}})(2+g(\beta,\rho_{\text{\tiny M,Q}}))}{(1+g(\beta,\rho_{\text{\tiny M,Q}}))^2}\right]}
%\ee
%where
%\be\label{eq:d}
%d=\frac{\sqrt{1-2^{-\bar{B}_d}}+\epsilon\sqrt{1-2^{-\bar{B}_c}}}{1+\epsilon},
%\ee
%with $\bar{B}_d=\frac{B_d}{N}$ and $\bar{B}_d=\frac{B_d}{N}$.
%\end{theorem} 

\begin{proof}
Refer to Appendix \ref{App:limSINR_mcp_rvq}.
\end{proof}
%It is obvious from the theorem above that the limiting SINR is different for each user and it depends on the quantization noise $\tau_{k,i,j}$ via $d_{k,j}$. We can also establish that
%$\tSINR_{k,j}^\infty$ is an increasing and a convex function of  $d_{k,j}$ by showing that $\mathcal{D}_{d_{k,j}}[\tSINR_{k,j}^\infty]> 0$ and $\mathcal{D}_{d_{k,j}}^2[\tSINR_{k,j}^\infty]> 0$ .
%By using this property and applying the Jensen's inequality, we obtain the following lower bound
%\[
%\mathbb{E}[\tSINR_{k,j}^\infty(d_{k,j})] \geq \tSINR_{k,j}^\infty(\mathbb{E}[d_{k,j}]). 
%\]
%To evaluate $\mathbb{E}[d_{k,j}]$ as $N \to \infty$, we use the same techniques  introduced in \cite{Santipach_it09} and obtain the following
%result.
%\begin{proposition}\label{Prop:MCP_RVQ_Ed}
%As $N\to \infty$, the following holds
%\be\label{eq:Ed_MCP_RVQ}
%\mathbb{E}[d_{k,j}]=E[d]\overset{N\to \infty}{=}\frac{\sqrt{1-2^{-\bar{B}_d}}+\epsilon\sqrt{1-2^{-\bar{B}_c}}}{1+\epsilon}.
%\ee 
%\end{proposition}
%Since $\mathbb{E}[d_{k,j}]=\mathbb{E}[d]=\bar{d}$, then we have $\tSINR_{k,j}^\infty(\mathbb{E}[d_{k,j}])=\underline{\tSINR}^\infty_{\text{\tiny MCP}}, \forall k,j$. 
   
Theorem \ref{th:sinr_mcp_rvq} shows that the limiting SINR is the same for all users in both cells. 
This is not surprising given the symmetry in their channel statistics and feedback mechanisms. 
%This result is not surprising since all the channel statistics and mechanisms to feed back the channel states are identical for all the users in both cells. 
Moreover, the only dependence of the limiting SINR on the bit allocation is via $\gamma_e$, which itself is a function of $d$:
 $d$ can be interpreted as a measure of the total quality of the channel estimates;
%A quick glance at \eqref{eq:d} shows us that $d$ provides a trade-off between increasing feedback bits for direct channel and cross channel.
In fact, given that $\bar{B}_d$ and $\bar{B}_c$ are constrained to sum up to $\bar{B}_t$, $d$ in \eqref{eq:d} highlights a trade-off between increasing feedback bits for direct channel and cross channel. Comparing \eqref{eq:lim_sinr_mcp} and \eqref{eq:limSINR_MCP_RVQ}, we can immediately recognize an identical structure between them. The effective SNR expressions \eqref{eq:def_effSNR_MCPanalog} and \eqref{eq:eff_snr_mcp_rvq} also share a similar construction, where $(1+\epsilon)d^2$ in \eqref{eq:d} can be thought to be equivalent to $\omega_d+\omega_c$.

Now, we move tho the first step of the joint optimization i.e., determining the optimal bit allocation that maximizes \eqref{eq:limSINR_MCP_RVQ}. It is clear from \eqref{eq:limSINR_MCP_RVQ} that $\bar{B}_d$ and $\bar{B}_c$ contributes to the limiting SINR through $d$.
It is easy to check that the limiting SINR is an increasing and a convex function of  $d$. Thus, maximizing $\tSINR^\infty_{\text{\tiny MCP,Q}}$ is equivalent to maximizing $d$, i.e. solving (cf. Eq. \eqref{eq:d}). 
\be\label{OP:xd_mcp}
\underset{x_d \in [X_t,1]}{\max} \sqrt{1-x_d}+\epsilon\sqrt{1-\frac{X_t}{x_d}}.
\ee
where $X_t=2^{-\bar{B}_t}$, $\bar{B}_t=\frac{B_t}{N}$ and $x_d=2^{-\bar{B}_d}$.
The solution of   \eqref{OP:xd_mcp} is presented in the following theorem.
\begin{theorem}\label{th:optbit_MCP}
 $\tSINR^\infty_{\text{\tiny MCP,Q}}$ is maximized by allocating 
$\bar{B}_d=-\log_2(x_d^*)$ bits to feed back the direct channel information, and $\bar{B}_c = \bar{B}_t - \bar{B}_d$ to feed back the interfering channel information,  where $x_d^*$ is the positive (real) solution of the following quartic equation
\be\label{eq:optbit_MCP}
x_d^4-X_tx_d^3 + (\epsilon X_t)^2(x_d-1) = 0.  
\ee 
\end{theorem}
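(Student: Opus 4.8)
The plan is to treat \eqref{OP:xd_mcp} --- to which the preceding discussion has already reduced the bit-allocation problem --- as a one-dimensional smooth maximization of $f(x_d):=\sqrt{1-x_d}+\epsilon\sqrt{1-X_t/x_d}$ over the compact interval $[X_t,1]$, and to show that its maximizer is the unique interior stationary point, which turns out to be exactly the root of \eqref{eq:optbit_MCP} identified in the statement. First I would establish that $f$ is strictly concave on $[X_t,1]$: the summand $\sqrt{1-x_d}$ has second derivative $-\tfrac14(1-x_d)^{-3/2}<0$, while the map $x_d\mapsto 1-X_t/x_d$ is nonnegative, increasing and concave on $[X_t,1]$, so its composition with the concave nondecreasing map $t\mapsto\sqrt t$ is concave; hence $f$, a nonnegative combination of a strictly concave and a concave function, has a strictly decreasing derivative.

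Next I would perform a boundary analysis of
\[
f'(x_d)=-\frac{1}{2\sqrt{1-x_d}}+\frac{\epsilon X_t}{2\,x_d^{3/2}\sqrt{x_d-X_t}}.
\]
As $x_d\downarrow X_t$ the first term stays bounded while the second diverges to $+\infty$, so $f'>0$ near the left endpoint; as $x_d\uparrow 1$ the second term stays bounded while the first diverges to $-\infty$, so $f'<0$ near the right endpoint. Since $f'$ is continuous and strictly decreasing on $(X_t,1)$, it has exactly one zero $x_d^*$, and by strict concavity this $x_d^*$ is the global maximizer of \eqref{OP:xd_mcp}; in particular the optimum is interior, attained neither at $x_d=X_t$ nor at $x_d=1$.

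Then I would convert the stationarity condition into the quartic: $f'(x_d^*)=0$ is equivalent to $x_d^{3/2}\sqrt{x_d-X_t}=\epsilon X_t\sqrt{1-x_d}$, and since both sides are strictly positive on $(X_t,1)$, squaring is reversible there and yields $x_d^3(x_d-X_t)=(\epsilon X_t)^2(1-x_d)$, which rearranges to \eqref{eq:optbit_MCP}. To see that \eqref{eq:optbit_MCP} has a unique positive real root, so that the statement is unambiguous, I would check the signs of $P(x):=x^4-X_tx^3+(\epsilon X_t)^2(x-1)$: one has $P(X_t)=\epsilon^2X_t^2(X_t-1)<0$ and $P(1)=1-X_t>0$, while for $x\in(0,X_t)$ the identity $x^3(x-X_t)=(\epsilon X_t)^2(1-x)$ cannot hold (left side nonpositive, right side positive) and likewise for $x>1$ (left side positive, right side negative); hence $P$ has no nonnegative root outside $(X_t,1)$, and any root inside $(X_t,1)$ must satisfy $x^{3/2}\sqrt{x-X_t}=\epsilon X_t\sqrt{1-x}$ (the opposite sign being impossible there), i.e.\ must be a stationary point of $f$, of which there is exactly one. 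Thus $x_d^*$ is the unique positive real solution of \eqref{eq:optbit_MCP}, and $\bar B_d=-\log_2 x_d^*$, $\bar B_c=\bar B_t-\bar B_d$ give the claimed allocation.

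I do not expect a genuine obstacle here; the only points requiring care are the concavity of $\sqrt{1-X_t/x_d}$ (dispatched by the composition rule) and the possibility of spurious roots introduced by squaring (dispatched by the sign bookkeeping above). It is also worth recording that the degenerate cases are consistent with the formula: if $\bar B_t=0$ then $X_t=1$ and the interval collapses, and if $\epsilon=0$ the objective is maximized at the endpoint $x_d^*=X_t$, which is still a root of \eqref{eq:optbit_MCP}.
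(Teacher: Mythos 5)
Your proposal is correct and follows essentially the same route as the paper's proof: reduce to the scalar problem \eqref{OP:xd_mcp}, establish concavity of the objective on $[X_t,1]$ (the paper does this via the explicit second derivative, you via composition rules), note the derivative's signs at the endpoints, and obtain the quartic \eqref{eq:optbit_MCP} from the stationarity condition. Your extra bookkeeping — that squaring introduces no spurious roots and that the quartic has a unique positive real root, necessarily in $(X_t,1)$ — is a welcome tightening of a step the paper leaves implicit, but it does not change the argument.
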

\begin{IEEEproof}
The first derivative of the objective function over $x_d$ is given by
\be\label{eq:derEd_MCP}
(1+\epsilon)\frac{\partial \mathbb{E}[d]}{\partial x_d}=\half\left(-\frac{1}{\sqrt{1-x_d}}+\frac{1}{x_d^2}\frac{\epsilon X_t}{\sqrt{1-\frac{X_t}{x_d}}}\right)
\ee
and $\lim_{x_d \to X_t} \frac{\partial \mathbb{E}[d]}{\partial x_d} = \infty, \lim_{x_d \to 1} \frac{\partial \mathbb{E}[d]}{\partial x_d} = -\infty$. Moreover, the objective function is a concave function in $x_d$ since
\[
(1+\epsilon)\frac{\partial^2 \mathbb{E}[d]}{\partial x_d^2}=\half\left(-\half(1-x_d)^{-3/2}-\frac{2}{x_d^3}\frac{\epsilon X_t}{\sqrt{1-\frac{X_t}{x_d}}}-\half\frac{\epsilon X_t}{x_d^4}\left(1-\frac{X_t}{x_d}\right)^{-3/2}\right) < 0, \quad x_d \in [X_t,1].
\]
The stationary point, $x_d^*$, is obtained by setting the derivative equal to 0 and it is the non-negative  solution of 
\[
x_d^4-X_tx_d^3 + (\epsilon X_t)^2(x_d-1) = 0.
\] 
Since the objective function is concave over $x_d$, then $x_d^*$ gives the global optimum point.   
\end{IEEEproof}

Now, let us discuss how the optimal bit allocation vary with $\epsilon$. Since $x_d=x_d^*$ satisfies \eqref{eq:optbit_MCP}, then by taking the (implicit) derivative of \eqref{eq:optbit_MCP} w.r.t. $\epsilon$, we have
\[
\frac{\partial x_d^*}{\partial \epsilon}=\frac{2\epsilon X_t^2(1-x_d)}{4x_d^3 -3X_tx_d^2 +(\epsilon X_t)^2} > 0, \quad \text{for }X_t \leq x_d^* \leq 1.
\]  
This implies that as $\epsilon$ increases, $x_d^*$ ($\bar{B}_d^*$) increases (decreases). This is consistent with the intuition that for higher $\epsilon$,  more resources would be allocated to quantize the cross channel information. At one of the extremes, i.e., $\epsilon = 0$, $x^*_d=X_t$, or $\bar{B_d}=\bar{B}_t$. 
If $\epsilon = 0$, $x^*_d=X_t$, so that when there is no interference from the neighboring BS, all feedback bits are used to convey the direct channel states, as expected.
At the other extreme, when $\epsilon \to \infty$, $x_d^* \to 1$ or $\bar{B}_d \to 0$.
This can be shown by setting the derivative \eqref{eq:derEd_MCP} equal to zero and we have
\[
\frac{1}{\epsilon}=\frac{X_t\sqrt{1-x_d}}{x_d^2\sqrt{1-\frac{X_t}{x_d}}}.
\] 
As $\epsilon \to \infty$, the left hand side goes to zero and the stationarity is achieved by setting $x_d=1$. 

It is also interesting to see how $d$, after optimal bit allocation, behaves as the cross channel gain varies.
Let $d^*$ is $d$ evaluated at $x_d=x_d^*$. By taking $\frac{\partial d^*}{\partial \epsilon}$, we can show the following property.
\begin{proposition}\label{Prop:MCP_RVQ_dVsepsil}
For $\epsilon \leq 1$, $d^*$ is decreasing in $\epsilon$ and increasing for $\epsilon \geq 1$. Consequently, $d^*$ is minimum at $\epsilon=1$. 
\end{proposition}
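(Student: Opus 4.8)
The plan is to view $d^{*}(\epsilon)$ as a rescaling of the value function of the concave program \eqref{OP:xd_mcp} and to differentiate it by an envelope-type argument, so that the term involving $\partial x_d^{*}/\partial\epsilon$ never has to be computed. Write $f(x_d,\epsilon)=\sqrt{1-x_d}+\epsilon\sqrt{1-X_t/x_d}$ for the objective of \eqref{OP:xd_mcp}, so that (recalling \eqref{eq:d}) $d=f(x_d,\epsilon)/(1+\epsilon)$ and $d^{*}=f\big(x_d^{*}(\epsilon),\epsilon\big)/(1+\epsilon)$, where $x_d^{*}(\epsilon)$ is the maximizer. From the proof of Theorem~\ref{th:optbit_MCP} I will reuse the facts that, for $\epsilon>0$, $x_d^{*}$ is the \emph{unique interior} stationary point of $f(\cdot,\epsilon)$ on $[X_t,1]$ (since $\partial_{x_d}f\to+\infty$ as $x_d\downarrow X_t$, $\partial_{x_d}f\to-\infty$ as $x_d\uparrow 1$, and $f$ is concave in $x_d$), hence $\partial_{x_d}f(x_d^{*},\epsilon)=0$; smoothness of $\epsilon\mapsto x_d^{*}(\epsilon)$ follows from the implicit function theorem applied to the quartic \eqref{eq:optbit_MCP}.

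Next I would differentiate $d^{*}$ in $\epsilon$. Because $\partial_{x_d}f$ vanishes at $x_d^{*}$, the contribution of $\partial x_d^{*}/\partial\epsilon$ drops out, and since $\partial_\epsilon f=\sqrt{1-X_t/x_d}$ one gets
\[
\frac{\partial d^{*}}{\partial\epsilon}
=\frac{(1+\epsilon)\sqrt{1-X_t/x_d^{*}}-f(x_d^{*},\epsilon)}{(1+\epsilon)^{2}}
=\frac{\sqrt{1-X_t/x_d^{*}}-\sqrt{1-x_d^{*}}}{(1+\epsilon)^{2}},
\]
the last step substituting $f(x_d^{*},\epsilon)=\sqrt{1-x_d^{*}}+\epsilon\sqrt{1-X_t/x_d^{*}}$. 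Both radicands are nonnegative since $x_d^{*}\in[X_t,1]$, so $\operatorname{sign}\big(\partial d^{*}/\partial\epsilon\big)=\operatorname{sign}\big((x_d^{*})^{2}-X_t\big)$: thus $d^{*}$ increases precisely when $x_d^{*}>\sqrt{X_t}$ and decreases when $x_d^{*}<\sqrt{X_t}$.

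Finally I would pin down the value of $\epsilon$ at which $x_d^{*}=\sqrt{X_t}$. Substituting $x_d=\sqrt{X_t}$ into \eqref{eq:optbit_MCP} gives $X_t^{2}-X_t^{5/2}+\epsilon^{2}X_t^{2}(\sqrt{X_t}-1)$, which vanishes iff $\epsilon^{2}X_t^{2}(\sqrt{X_t}-1)=X_t^{5/2}-X_t^{2}=X_t^{2}(\sqrt{X_t}-1)$, i.e.\ (as $\sqrt{X_t}\neq 1$) iff $\epsilon=1$; hence $x_d^{*}(1)=\sqrt{X_t}$. Since $x_d^{*}$ is strictly increasing in $\epsilon$ (established just after Theorem~\ref{th:optbit_MCP}), it follows that $x_d^{*}(\epsilon)<\sqrt{X_t}$ for $\epsilon<1$ and $x_d^{*}(\epsilon)>\sqrt{X_t}$ for $\epsilon>1$, so $\partial d^{*}/\partial\epsilon<0$ on $(0,1)$ and $>0$ on $(1,\infty)$; continuity then extends this to the endpoints and shows $d^{*}$ is minimized at $\epsilon=1$. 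The only real obstacle is the bookkeeping needed to justify the envelope step (interiority and smoothness of $x_d^{*}$) together with noticing the clean identity that makes $\sqrt{X_t}$ a root of \eqref{eq:optbit_MCP} exactly at $\epsilon=1$; the remaining manipulations are routine.
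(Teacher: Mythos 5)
Your proof is correct, and it is more explicit than what the paper records: the paper only asserts the proposition ``by taking $\frac{\partial d^*}{\partial \epsilon}$'' and then argues qualitatively, via the monotonicity of $x_d^*$ in $\epsilon$, that the term $\sqrt{1-x_d^*}$ dominates for $\epsilon\leq 1$ while $\epsilon\sqrt{1-X_t/x_d^*}$ takes over for $\epsilon\geq 1$; no closed-form sign condition or threshold computation is given. Your route is genuinely different in its mechanics: by treating $d^*$ as the value function of the concave program \eqref{OP:xd_mcp} and invoking stationarity (interiority for $\epsilon>0$ follows from the boundary behavior of \eqref{eq:derEd_MCP}, exactly as in the proof of Theorem~\ref{th:optbit_MCP}), the envelope step kills the $\partial x_d^*/\partial\epsilon$ term and yields the exact characterization $\operatorname{sign}\left(\partial d^*/\partial\epsilon\right)=\operatorname{sign}\left((x_d^*)^2-X_t\right)$, and the observation that $\sqrt{X_t}$ solves \eqref{eq:optbit_MCP} precisely at $\epsilon=1$ pins the turning point down exactly rather than heuristically; combined with $\partial x_d^*/\partial\epsilon>0$ this gives the full statement. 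What the paper's implied approach buys is brevity and a physical reading (trade-off between direct-channel quality and cross-channel weight); what yours buys is a rigorous, self-contained argument with a clean threshold identity. One small bookkeeping remark: vanishing of the quartic at $x_d=\sqrt{X_t}$, $\epsilon=1$ only shows $\sqrt{X_t}$ is \emph{a} root, so you should tie it back to the optimizer by noting either that the quartic is equivalent to the stationarity condition on $(X_t,1)$ (both sides of \eqref{eq:derEd_MCP} set to zero are positive there, so squaring introduces no spurious roots in that interval) or simply by checking \eqref{eq:derEd_MCP} directly at $x_d=\sqrt{X_t}$, $\epsilon=1$; with strict concavity and uniqueness of the interior stationary point this closes the argument.
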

As mentioned previously, $x_d^*$ increases and consequently $1-x_d^*$ decreases as $\epsilon$ increases. On the other side, $\epsilon\sqrt{1-X_t/x_d^*}$ is getting larger.
So, from the calculation we can conclude that $d^*$ is mostly affected by $\sqrt{1-x_d^*}$ for $\epsilon \leq 1$, while for the other values of $\epsilon$, the other term takes over.

We now proceed to find the optimal $\rho_\text{\tiny M,Q}$ that maximizes $\tSINR^\infty_{\text{\tiny MCP,Q}}$. The result is summarized below.
\begin{theorem}\label{th:opt_rho_MCP_RVQ}
Let $\gamma_e^*$ be $\gamma_e$ evaluated at $d=d^*$. The optimal $\rho_{\text{\tiny M}}$ that maximizes $\tSINR^\infty_{\text{\tiny MCP}}(d^*)$ is 
\be\label{eq:opt_rho_MCP_RVQ}
\rho_{\text{\tiny M,Q}}^*=\frac{\beta}{\gamma_e^*}.
\ee
The corresponding limiting SINR is given by
\[
\tSINR^{*,\infty}_{\text{\tiny MCP}}=g\left(\beta,\rho_{\text{\tiny M,Q}}^*\right).
\]
\end{theorem}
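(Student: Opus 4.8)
The plan is to leverage the fact that the limiting SINR \eqref{eq:limSINR_MCP_RVQ} is algebraically identical in form to the analog-feedback expression \eqref{eq:lim_sinr_mcp}, with $\rho_{\text{\tiny M,AF}}$ replaced by $\rho_{\text{\tiny M,Q}}$ and the effective SNR \eqref{eq:def_effSNR_MCPanalog} replaced by $\gamma_e^*$, the quantity obtained from \eqref{eq:eff_snr_mcp_rvq}--\eqref{eq:d} after substituting the optimal bit allocation $d=d^*$ from Theorem~\ref{th:optbit_MCP}. The first point to record is that once $d^*$ is fixed, $\gamma_e^*=\frac{(d^*)^2}{1-(d^*)^2+\frac{1}{\gamma_d(1+\epsilon)}}$ is a constant not depending on $\rho_{\text{\tiny M,Q}}$ (the latter depending only on the precoder parameter $\alpha$). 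Hence maximizing $\tSINR^\infty_{\text{\tiny MCP,Q}}(d^*)$ over $\rho_{\text{\tiny M,Q}}$ is formally the same one-parameter problem already solved in the Corollary following Theorem~\ref{th:sinr_mcp}, and the statement follows by the same argument; I would, however, carry it out directly.

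The direct argument proceeds by eliminating $\rho_{\text{\tiny M,Q}}$ in favour of $g$ through the defining fixed-point relation $g(\beta,\rho)=\left(\rho+\frac{\beta}{1+g}\right)^{-1}$, i.e. $\rho_{\text{\tiny M,Q}}=\frac{1}{g}-\frac{\beta}{1+g}$. Substituting this into \eqref{eq:limSINR_MCP_RVQ} and using the identity $1+\frac{\rho}{\beta}(1+g)^2=\frac{(1+g)^2-\beta g^2}{\beta g}$ collapses the SINR to a function of $g$ alone,
\[
\tSINR^\infty_{\text{\tiny MCP,Q}}=\frac{\gamma_e^*\left[(1+g)^2-\beta g^2\right]}{\beta\left[\gamma_e^*+(1+g)^2\right]}=:f(g).
\]
Since $g(\beta,\rho)$ is strictly decreasing in $\rho$ (immediate from implicit differentiation of the fixed-point equation, exactly as used in Section~\ref{ss:Network_MIMO_AF}), maximizing over $\rho_{\text{\tiny M,Q}}\ge 0$ is equivalent to optimizing $f$ over the corresponding interval of $g$-values. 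Computing $f'(g)$ and setting it to zero, the stationarity condition simplifies to $\gamma_e^*\left[1+g(1-\beta)\right]=\beta g(1+g)$, which is precisely the fixed-point relation characterizing $g(\beta,\beta/\gamma_e^*)$; uniqueness of this stationary point, together with a sign check on $f'$ (or the observation that $f$ is small at the endpoints of the feasible interval), identifies it as the global maximizer, giving $\rho_{\text{\tiny M,Q}}^*=\beta/\gamma_e^*$. Finally, substituting $g=g^*:=g(\beta,\rho_{\text{\tiny M,Q}}^*)$ into $f$ and using $\gamma_e^*=\frac{\beta g^*(1+g^*)}{1+g^*(1-\beta)}$ to cancel the common factor $(1+g^*)\big[(1+g^*)^2-\beta g^{*2}\big]$ yields $f(g^*)=g^*$, i.e. $\tSINR^{*,\infty}_{\text{\tiny MCP}}=g(\beta,\rho_{\text{\tiny M,Q}}^*)$.

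The computations here are routine bookkeeping; the two points requiring care are (i) justifying the change of variable $\rho\leftrightarrow g$ via monotonicity of the fixed point, so that the optimization is genuinely an unconstrained one-dimensional problem on the relevant range, and (ii) confirming the stationary point is a maximum rather than a minimum or inflection, by a short unimodality/sign argument on $f'$. There is no real analytic obstacle beyond these, and the most economical write-up simply invokes the structural identity with the analog-feedback corollary after noting that $\gamma_e^*$ plays the role of $\gamma_e$ and is independent of $\rho_{\text{\tiny M,Q}}$.
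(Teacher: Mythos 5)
Your proposal is correct, and its first paragraph is precisely the paper's own proof: the paper disposes of Theorem~\ref{th:opt_rho_MCP_RVQ} in one line by observing that \eqref{eq:limSINR_MCP_RVQ} has the same structure as \eqref{eq:lim_sinr_mcp} with $\gamma_e^*$ (independent of $\rho_{\text{\tiny M,Q}}$, since $d^*$ depends only on $\epsilon$ and $\bar{B}_t$) playing the role of the analog-feedback effective SNR, and then invokes the corollary after Theorem~\ref{th:sinr_mcp}. Where you go further is in the second paragraph: the paper never spells out the underlying one-dimensional optimization (its analog-feedback corollary in turn defers to an external reference), whereas you carry it out explicitly via the substitution $\rho=\frac{1}{g}-\frac{\beta}{1+g}$, reducing the SINR to $f(g)=\frac{\gamma_e^*\left[(1+g)^2-\beta g^2\right]}{\beta\left[\gamma_e^*+(1+g)^2\right]}$, whose stationarity condition $\gamma_e^*\left[1+g(1-\beta)\right]=\beta g(1+g)$ is exactly the fixed-point equation for $g(\beta,\beta/\gamma_e^*)$; this quadratic-in-$g$ condition has a single positive root with $f'$ changing sign from positive to negative, and back-substitution gives $f(g^*)=g^*$. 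I checked these manipulations and they are correct, so your write-up is a valid self-contained proof that also makes the delegated step of the paper's argument explicit; the only care points, which you flag, are the strict monotonicity of $g$ in $\rho$ (immediate from implicit differentiation) justifying the change of variable, and the sign check identifying the unique stationary point as the global maximizer.
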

\begin{IEEEproof}
The equation \eqref{eq:limSINR_MCP_RVQ} has the same structure as \eqref{eq:lim_sinr_mcp} and thus, \eqref{eq:opt_rho_MCP_RVQ} follows. 
\end{IEEEproof}

From Theorem \ref{th:opt_rho_MCP_RVQ}, $d^*$ affects the regularization parameter and the limiting SINR via effective SNR $\gamma_e^*$. The latter grows with $d^*$ (cf. \eqref{eq:eff_snr_mcp_rvq}). Thus, $\rho_{\text{\tiny M,Q}}^*$ declines as the CSIT quality, $d^*$, increases and this behavior is also observed for the cooperation schemes with the analog feedback.

In Proposition \ref{Prop:MCP_RVQ_dVsepsil}, we established how $d^*$ changes with $\epsilon$. We can show that $\gamma_e^*$ has a similar behavior but reaches its minimum at a different value of $\epsilon$ due to the last term in the denominator in \eqref{eq:eff_snr_mcp_rvq}. 
%As $\tSINR^{*,\infty}_{\text{\tiny MCP}}$ increases with $\gamma_e^*$, both behave similarly as $\epsilon$ varies. 
For $\tSINR^{*,\infty}_{\text{\tiny MCP}}$, it attains its minimum at $\epsilon=\epsilon_{\text{\tiny Q}}^\text{\tiny M}$, as described in the next proposition. 
\begin{proposition}\label{Prop:MCP_RVQ_rhovsepsil}
Suppose that $\epsilon=\epsilon_{\text{\tiny Q}}^\text{\tiny M}$ satisfies
\be\label{eq:epsilTh_MCP_RVQ}
(x_d^*)^2=\frac{\gamma_d(1+\epsilon)-\half}{\epsilon X_t\left[\gamma_d(1+\epsilon)+1+\frac{\epsilon}{2}\right]}.
\ee
Then,  $\tSINR^{*,\infty}_{\text{\tiny MCP,Q}}$ decreasing for $\epsilon \leq \epsilon_{\text{\tiny Q}}^\text{\tiny M}$ and  increasing for $\epsilon \geq \epsilon_{\text{\tiny Q}}^\text{\tiny M}$.
\end{proposition}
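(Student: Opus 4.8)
The plan is to show that $\tSINR^{*,\infty}_{\text{\tiny MCP,Q}} = g(\beta,\rho^*_{\text{\tiny M,Q}})$ is a unimodal function of $\epsilon$ with a single minimum, and that the minimizing $\epsilon$ is exactly the value characterized by \eqref{eq:epsilTh_MCP_RVQ}. Since $g(\beta,\rho)$ is strictly decreasing in $\rho$ (stated in Section \ref{ss:Network_MIMO_AF}) and $\rho^*_{\text{\tiny M,Q}} = \beta/\gamma_e^*$, it is equivalent to show that $\gamma_e^*$ — and hence $\tSINR^{*,\infty}_{\text{\tiny MCP,Q}}$ through the monotone chain — has a single interior minimum at $\epsilon_{\text{\tiny Q}}^\text{\tiny M}$. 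So the whole question reduces to analyzing the sign of $\frac{\partial \gamma_e^*}{\partial \epsilon}$, where $\gamma_e^* = \frac{(d^*)^2}{1-(d^*)^2 + \frac{1}{\gamma_d(1+\epsilon)}}$ and $d^*$ is evaluated at the optimal $x_d^*$ solving \eqref{eq:optbit_MCP}.

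First I would differentiate $\gamma_e^*$ with respect to $\epsilon$, treating $d^*$ and the explicit $\epsilon$ in the denominator term $\frac{1}{\gamma_d(1+\epsilon)}$ as the two sources of dependence. Writing $D(\epsilon) = 1-(d^*)^2+\frac{1}{\gamma_d(1+\epsilon)}$, one gets
\be
\frac{\partial \gamma_e^*}{\partial \epsilon} = \frac{2d^*\,\frac{\partial d^*}{\partial\epsilon}\,D - (d^*)^2\left(-2d^*\frac{\partial d^*}{\partial\epsilon} - \frac{1}{\gamma_d(1+\epsilon)^2}\right)}{D^2}.
\ee
The denominator and the factor $d^*$ are positive, so the sign is governed by the numerator bracket, which after collecting terms is a positive combination of $\frac{\partial d^*}{\partial\epsilon}$ and $\frac{(d^*)^2}{\gamma_d(1+\epsilon)^2}$. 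The second piece is always positive; the first has the sign of $\frac{\partial d^*}{\partial\epsilon}$, which by Proposition \ref{Prop:MCP_RVQ_dVsepsil} is negative for $\epsilon<1$ and positive for $\epsilon>1$. Hence for $\epsilon \ge 1$ the derivative is unambiguously positive, so any interior minimum must lie in $(0,1)$, and it is enough to show there is exactly one sign change there.

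Next I would substitute the known closed form $\frac{\partial d^*}{\partial\epsilon}$: from \eqref{eq:d}, $(1+\epsilon)d^* = \sqrt{1-x_d^*}+\epsilon\sqrt{1-X_t/x_d^*}$, so differentiating and using that $x_d^*$ is stationary for the objective in \eqref{OP:xd_mcp} (the envelope theorem kills the $\frac{\partial x_d^*}{\partial\epsilon}$ terms) gives a clean expression for $\frac{\partial d^*}{\partial\epsilon}$ in terms of $x_d^*$, $X_t$, $\epsilon$ alone. Setting the numerator bracket of $\frac{\partial\gamma_e^*}{\partial\epsilon}$ to zero then yields an algebraic relation; plugging in $d^* = \frac{\sqrt{1-x_d^*}+\epsilon\sqrt{1-X_t/x_d^*}}{1+\epsilon}$ and simplifying using the stationarity identity $\epsilon X_t\sqrt{1-x_d^*} = (x_d^*)^2\sqrt{1-X_t/x_d^*}$ (equivalent to \eqref{eq:derEd_MCP}$=0$) should collapse everything to the stated relation \eqref{eq:epsilTh_MCP_RVQ}, namely $(x_d^*)^2 = \frac{\gamma_d(1+\epsilon)-\frac12}{\epsilon X_t[\gamma_d(1+\epsilon)+1+\frac{\epsilon}{2}]}$. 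Finally, to confirm this stationary point is the unique minimum on $(0,1)$ and that the sign goes $-$ then $+$, I would check the limiting behavior: as $\epsilon\to 0^+$, $x_d^*\to X_t$, $\frac{\partial d^*}{\partial\epsilon}<0$ dominates and $\frac{\partial\gamma_e^*}{\partial\epsilon}<0$; combined with $\frac{\partial\gamma_e^*}{\partial\epsilon}>0$ at $\epsilon=1$ and monotonicity/continuity of the implicitly defined $x_d^*(\epsilon)$, a single crossing follows.

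The main obstacle I anticipate is the algebraic reduction in the third step: showing that the zero-derivative condition, once $\frac{\partial d^*}{\partial\epsilon}$ and $d^*$ are expressed through $x_d^*$, simplifies exactly to \eqref{eq:epsilTh_MCP_RVQ} rather than to some messier equivalent. This requires using the quartic \eqref{eq:optbit_MCP} (or the equivalent stationarity identity) to eliminate radicals and cancel terms, and it is easy to get lost in bookkeeping; keeping the substitution $\epsilon X_t\sqrt{1-x_d^*} = (x_d^*)^2\sqrt{1-X_t/x_d^*}$ front and center is what makes it tractable. A secondary subtlety is rigorously justifying the envelope-theorem step — that $\frac{\partial x_d^*}{\partial\epsilon}$ terms drop out of $\frac{\partial d^*}{\partial\epsilon}$ — which is valid precisely because $x_d^*$ is an interior maximizer of a differentiable objective, as established in the proof of Theorem \ref{th:optbit_MCP}.
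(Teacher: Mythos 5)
Your overall strategy is the right one: reducing the claim to the sign of $\partial\gamma_e^*/\partial\epsilon$ through the monotone chain $g(\beta,\cdot)$ decreasing in $\rho$ and $\rho^*_{\text{\tiny M,Q}}=\beta/\gamma_e^*$, using Proposition \ref{Prop:MCP_RVQ_dVsepsil} to dispose of $\epsilon\ge 1$, and invoking the envelope theorem (valid for $\epsilon>0$, where $x_d^*$ is interior; at $\epsilon=0$ it sits at the boundary $X_t$, so say so) to get $\frac{\partial d^*}{\partial\epsilon}=\frac{b-a}{(1+\epsilon)^2}$ with $a=\sqrt{1-x_d^*}$, $b=\sqrt{1-X_t/x_d^*}$. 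The genuine gap is exactly the step you leave as ``should collapse'': when you actually carry it out, it does \emph{not} collapse to the printed \eqref{eq:epsilTh_MCP_RVQ}. Setting the numerator of $\partial\gamma_e^*/\partial\epsilon$ to zero and clearing positive factors gives $2(b-a)\bigl(\gamma_d(1+\epsilon)+1\bigr)+a+\epsilon b=0$, i.e.\ $\frac{b}{a}=\frac{2\gamma_d(1+\epsilon)+1}{2\gamma_d(1+\epsilon)+2+\epsilon}$, and combining with the stationarity identity $\frac{b}{a}=\frac{\epsilon X_t}{(x_d^*)^2}$ yields
\be
(x_d^*)^2\left[\gamma_d(1+\epsilon)+\half\right]=\epsilon X_t\left[\gamma_d(1+\epsilon)+1+\frac{\epsilon}{2}\right],
\ee
which has $\epsilon X_t[\cdot]$ on the opposite side and $+\half$ instead of $-\half$ compared with \eqref{eq:epsilTh_MCP_RVQ}. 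The numerics strongly favor this corrected form: with the paper's parameters $\gamma_d=10$, $\bar{B}_t=4$ it is satisfied at $\epsilon\approx 0.72$, matching $\epsilon^*_{\text{\tiny M,RVQ}}\approx 0.72$ in Figure \ref{fig:comparison_DF}, whereas the right-hand side of the printed \eqref{eq:epsilTh_MCP_RVQ} exceeds $1$ for all $\epsilon\in(0,1]$ at these parameters and so cannot be met by any $x_d^*\le 1$. So you must either reconcile your reduction with the printed formula or identify it as a misprint; asserting the collapse without the algebra leaves the crux of the proposition unproved.

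The second, smaller gap is the single-crossing claim. Negativity near $\epsilon\to0^+$ and positivity at $\epsilon=1$ plus continuity give \emph{at least} one sign change, but the conclusion ``decreasing for $\epsilon\le\epsilon_{\text{\tiny Q}}^\text{\tiny M}$, increasing for $\epsilon\ge\epsilon_{\text{\tiny Q}}^\text{\tiny M}$'' needs exactly one, and ``monotonicity/continuity of $x_d^*(\epsilon)$'' does not deliver that by itself. The fix is cheap with the quantities you already have: the sign of $\partial\gamma_e^*/\partial\epsilon$ equals the sign of $\frac{b}{a}-\frac{2\gamma_d(1+\epsilon)+1}{2\gamma_d(1+\epsilon)+2+\epsilon}$; the first term, evaluated at $x_d^*(\epsilon)$, is increasing in $\epsilon$ because $x_d^*$ is increasing (the paper's implicit differentiation of \eqref{eq:optbit_MCP}) and $b/a=\sqrt{(1-X_t/x_d)/(1-x_d)}$ is increasing in $x_d$; the second term equals $1-\bigl[(2\gamma_d+1)+\frac{1}{1+\epsilon}\bigr]^{-1}$ and is decreasing in $\epsilon$. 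Hence the derivative changes sign exactly once on $(0,1)$, which together with positivity on $[1,\infty)$ gives the stated behavior with $\epsilon_{\text{\tiny Q}}^\text{\tiny M}$ the unique solution of the (corrected) threshold equation.
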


 The characterization of $\tSINR^{*,\infty}_{\text{\tiny MCP,Q}}$ above reminds us a similar behavior of $\tSINR^{*,\infty}_{\text{\tiny MCP,AF}}$ after optimal power allocation. We can conclude that \textit{the limiting SINR of MCP under both feedback schemes has a common behavior as $\epsilon$ varies}. 
 
		%From the relation between $d^*, \gamma_e$ and $\tSINR^{*,\infty}_{\text{\tiny MCP,Q}}$, then the explanation for the behavior of $\tSINR^{*,\infty}_{\text{\tiny MCP,Q}}$ over $\epsilon$ follows closely the discussion for Proposition \ref{Prop:MCP_RVQ_dVsepsil}.
	
\subsection{Coordinated Beamforming}
\begin{theorem}\label{th:sinr_coord_rvq}
Let $\rho_\text{\tiny C,Q}=\alpha/N$ and $\Gamma_{Q}$ be the solution of the following cubic equation
\begin{align}
\Gamma_Q =  \frac{1}{\rho_\text{\tiny C,Q} + \frac{\beta}{1+\Gamma_Q}+\frac{\beta\epsilon}{1+\epsilon\Gamma_Q}}.
\end{align}
Let $\phi_d=1-2^{-\bar{B}_d}$, $\phi_c=1-2^{-\bar{B}_c}$, $\delta_d=2^{-\bar{B}_d}$ and $\delta_c=\epsilon 2^{-\bar{B}_c}$. In the large system limit, the SINR \eqref{eq:sinr_coord} for the quantized feedback via RVQ converges weakly to a deterministic quantity given by
\begin{align}\label{eq:limSINR_CBf_RVQ}
\tSINR^\infty_\text{\tiny CBf,Q} = -\frac{\phi_d\Gamma_Q^2}{\beta\left(\frac{1}{\gamma_d}+\frac{\phi_d}{(1+\Gamma_Q)^2}+\frac{\phi_c\epsilon}{(1+\epsilon\Gamma_Q)^2}+\delta_d + \delta_c\right)\frac{\partial \Gamma_Q}{\partial \rho}}
\end{align}
where
\[
-\frac{\partial \Gamma_Q}{\partial \rho_\text{\tiny C,Q}} =\frac{\Gamma_Q}{\rho_\text{\tiny C,Q}+\frac{\beta\epsilon}{(1+\epsilon\Gamma_Q)^2}+\frac{\beta}{(1+\Gamma_Q)^2}}.
\]
\end{theorem}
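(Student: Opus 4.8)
The argument runs parallel to the proof of Theorem~\ref{th:sinr_coord} (Appendix~\ref{App:limSINR_CBf_AF}): the CBf precoder and the estimate-plus-error decomposition~\eqref{eq:ch_model} of the true channel are common to the analog and the RVQ models, so the same deterministic-equivalent machinery applies once the RVQ channel statistics have been brought into the form~\eqref{eq:ch_model}. Write $\bA_j=\alpha\bI_N+\sum_{(l,m)}\hh_{l,m,j}^H\hh_{l,m,j}$ and $\bA_{j,(k,j)}=\bA_j-\hh_{k,j,j}^H\hh_{k,j,j}$, so that $\hw_{kj}=\bA_{j,(k,j)}^{-1}\hh_{k,j,j}^H$ and $c_j^2=P/\sum_k\|\hw_{kj}\|^2$. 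As a preliminary step I would replace the random quantization errors by their deterministic limits: since $\tau_{k,j,i}^2$ is the minimum of $2^{B_{k,j,i}}$ i.i.d.\ $\mathrm{Beta}(N-1,1)$ variables, $\tau_{k,j,i}^2\ip 2^{-\bar B_{k,j,i}}$ as $N\to\infty$ with $\bar B_{k,j,i}$ fixed, whence $\phi_{k,j,j}\ip\phi_d$, $\phi_{k,j,\bar j}\ip\phi_c$, while, using $\hh_{k,j,j}\sim\mathcal{CN}(0,\bI_N)$, $\hh_{k,j,\bar j}\sim\mathcal{CN}(0,\epsilon\bI_N)$ and $\tfrac{1}{N}\|\bh_{k,j,i}\|^2\to\epsilon_{ji}$, the per-entry variances of the errors $\ih_{k,j,j}$ and $\ih_{k,j,\bar j}$ tend to $\delta_d$ and $\delta_c$ as defined in the statement.

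I would then establish the deterministic equivalent behind the stated cubic. At BS~$j$ the regressors forming $\bA_{j,(k,j)}$ fall into two groups --- $K$ ``direct'' estimates $\hh_{l,j,j}\sim\mathcal{CN}(0,\bI_N)$ and $K$ ``cross'' estimates $\hh_{l,\bar j,j}\sim\mathcal{CN}(0,\epsilon\bI_N)$ --- so the Marchenko--Pastur-type deterministic equivalent for a two-population sample-covariance resolvent gives $\Tr(\bA_{j,(k,j)}^{-1})\ip\Gamma_Q$ with $\Gamma_Q$ the positive root of $\Gamma_Q=(\rho_\text{\tiny C,Q}+\tfrac{\beta}{1+\Gamma_Q}+\tfrac{\beta\epsilon}{1+\epsilon\Gamma_Q})^{-1}$, and, by differentiating the fixed point in $\alpha$ (equivalently $\rho_\text{\tiny C,Q}$), $N\,\Tr(\bA_{j,(k,j)}^{-2})\ip-\tfrac{\partial\Gamma_Q}{\partial\rho_\text{\tiny C,Q}}$, with $\tfrac{\partial\Gamma_Q}{\partial\rho_\text{\tiny C,Q}}$ as displayed. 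With these in hand I would compute, term by term, the four limits entering~\eqref{eq:sinr_coord}, in each case pulling the active channel vector out of the resolvent by the matrix-inversion lemma and invoking the trace lemma $\bx\mathbf{M}\bx^H\to v\,\Tr(\mathbf{M})$ for $\bx$ with i.i.d.\ variance-$v$ entries independent of $\mathbf{M}$: (i) $\sum_k\|\hw_{kj}\|^2=\sum_k\hh_{k,j,j}\bA_{j,(k,j)}^{-2}\hh_{k,j,j}^H\ip-\beta\tfrac{\partial\Gamma_Q}{\partial\rho_\text{\tiny C,Q}}$, hence $c_j^2\to P/(-\beta\tfrac{\partial\Gamma_Q}{\partial\rho_\text{\tiny C,Q}})$ --- this is where the $\tfrac{\partial\Gamma_Q}{\partial\rho_\text{\tiny C,Q}}$ in~\eqref{eq:limSINR_CBf_RVQ} originates; (ii) $|\bh_{k,j,j}\hw_{kj}|^2\ip\phi_d\Gamma_Q^2$, the estimate part giving $\hh_{k,j,j}\bA_{j,(k,j)}^{-1}\hh_{k,j,j}^H\ip\Gamma_Q$ and the error part being $o(1)$; (iii) after a Sherman--Morrison step removing $\hh_{k,j,j}$ (variance $1$) from $\bA_{j,(k',j)}$, $\sum_{k'\neq k}|\bh_{k,j,j}\hw_{k'j}|^2\ip-\beta\tfrac{\partial\Gamma_Q}{\partial\rho_\text{\tiny C,Q}}\big(\tfrac{\phi_d}{(1+\Gamma_Q)^2}+\delta_d\big)$; (iv) similarly, removing the variance-$\epsilon$ estimate $\hh_{k,j,\bar j}$ from $\bA_{\bar j,(k',\bar j)}$, $\sum_{k'}|\bh_{k,j,\bar j}\hw_{k'\bar j}|^2\ip-\beta\tfrac{\partial\Gamma_Q}{\partial\rho_\text{\tiny C,Q}}\big(\tfrac{\epsilon\phi_c}{(1+\epsilon\Gamma_Q)^2}+\delta_c\big)$. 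Substituting into~\eqref{eq:sinr_coord}, with $\sigma_d^2=P/\gamma_d$, and simplifying (the $c_{j'}^2$ factors are all equal to $P/(-\beta\tfrac{\partial\Gamma_Q}{\partial\rho_\text{\tiny C,Q}})$ and the power $P$ cancels) produces exactly~\eqref{eq:limSINR_CBf_RVQ}; since the limit is deterministic, the stated weak convergence is the same as convergence in probability.

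The only place the argument genuinely departs from the analog-feedback proof --- where estimate and error are independent Gaussians --- is the coupling in~\eqref{eq:chmodel_rvq}: the error $\ih_{k,j,i}=\tau_{k,j,i}\|\bh_{k,j,i}\|\mathbf{z}_{k,j,i}$ has its direction $\mathbf{z}_{k,j,i}$ confined to the null space of the estimate $\hh_{k,j,i}$, and $\hh_{k,j,i}$ also appears inside the resolvents $\bA_{j,(k',j)}$ and $\bA_{\bar j,(k',\bar j)}$ that enter the interference sums in (iii)--(iv). I would dispose of this by conditioning on $\hh_{k,j,i}$: given $\hh_{k,j,i}$, the unit vector $\mathbf{z}_{k,j,i}$ is isotropically distributed on the unit sphere of an $(N-1)$-dimensional subspace and independent of all remaining randomness, so for any matrix $\mathbf{M}$ (possibly depending on $\hh_{k,j,i}$), $\mathbb{E}[\mathbf{z}_{k,j,i}\mathbf{M}\mathbf{z}_{k,j,i}^H\mid\hh_{k,j,i},\mathbf{M}]=(N-1)^{-1}\Tr(\bPi_{\hh_{k,j,i}}^\bot\mathbf{M})$, which differs from $N^{-1}\Tr(\mathbf{M})$ only by a term of order $N^{-1}\,\hh_{k,j,i}\mathbf{M}\hh_{k,j,i}^H/\|\hh_{k,j,i}\|^2$; for the $\bA^{-2}$-type matrices $\mathbf{M}$ that arise (whose trace is $O(N^{-1})$) this correction is $O(N^{-2})$ and negligible, and the estimate--error cross terms are conditionally mean-zero and hence vanish. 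Making this ``the null-space constraint only removes one negligible dimension'' estimate precise --- together with the rank-one-perturbation bookkeeping needed to replace leave-one-out resolvents by the full resolvent inside these $o(1)$ quantities, and the concentration-of-quadratic-forms/continuous-mapping arguments (cf.\ \cite{Vaart_book98}) that upgrade the expectations to almost-sure or in-probability statements --- is where the real work lies; everything else transcribes Appendix~\ref{App:limSINR_CBf_AF} with the direct/cross estimate variances taken to be $1$ and $\epsilon$ inside the resolvent and the coherent-signal variances of the direct and cross channels taken to be $\phi_d$ and $\epsilon\phi_c$ in the final SINR expression.
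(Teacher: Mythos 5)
Your proposal is correct and follows essentially the same route as the paper's proof in Appendix C: the same leave-one-out decomposition of the signal, intra-/inter-cell interference and $c_j^2$ terms, the same two-population deterministic equivalent yielding $\Gamma_Q$ and $-\partial\Gamma_Q/\partial\rho_\text{\tiny C,Q}$ via the trace and rank-one-perturbation lemmas, and the same use of the RVQ distortion limits $\phi_{k,j,i}\to\phi_d,\phi_c$. The only cosmetic difference is your treatment of the null-space-constrained error direction by conditioning on $\hh_{k,j,i}$ and invoking isotropy, whereas the paper substitutes the explicit representation $\mathbf{z}_{k,j,i}=\bv_{k,j,i}\bPi_{\hh_{k,j,i}}^\bot/\|\bv_{k,j,i}\bPi_{\hh_{k,j,i}}^\bot\|$ with Gaussian $\bv_{k,j,i}$ independent of $\hh_{k,j,i}$ and kills the cross terms directly with the trace lemma; the two arguments are equivalent.
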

\begin{IEEEproof}
See Appendix \ref{App:limSINR_CBf_RVQ}
\end{IEEEproof}

%In the theorem above, we use the notations introduced in Section \ref{ss:RVQ_model}, i.e., $\phi_{k,j,i}=1-\tau^2_{k,j,i}$ and $\delta_{k,j,i}=\epsilon_{ji}\tau^2_{k,j,i}$.
As in Theorem \ref{th:sinr_mcp_rvq}, Theorem \ref{th:sinr_coord_rvq} shows that that the limiting SINR is the same for all users.
The quantization error variance of estimating the direct channel, $\delta_d$, affects both the signal  strength (via $\phi_d$) and the interference energy, in which  it captures the effect of the \emph{intra-cell} interference. 
%It is also obvious from  \eqref{eq:ss_cbf}-\eqref{eq:limSINR_CBf_RVQ} that 
$\delta_c$, on the other hand, only contributes to the interference term: It represents the quality of the cross channel and  determines the strength of the \emph{inter-cell} interference. Since $\bar{B}_t$ is fixed, increasing $\bar{B}_d$, or equivalently reducing $\bar{B}_c$, will strengthen the desired signal and reduce the intra-cell interference: it does so, however, at the expense of strengthening the inter-cell interference. Thus, feedback bits' allocation is important in order to improve the performance of the system. 
To solve the joint optimization problem, it is useful to write \eqref{eq:limSINR_CBf_RVQ} as follows
\[
\tSINR^\infty_\text{\tiny CBf,Q} =G_1\frac{1-x_d}{\frac{1}{\gamma_d}+(1-x_d)(G_2-1)+\epsilon\left(1-\frac{X_t}{x_d}\right)(G_3-1)+1+\epsilon},
\]
where $x_d$ and $X_t$ are defined as in the previous subsection and for brevity, we denote: $G_1=-\Gamma_Q^2\left(\beta\frac{\partial \Gamma_Q}{\rho_{\text{\tiny C,Q}}}\right)^{-1}, G_2=(1+\Gamma_Q)^{-2}$ and $G_3=(1+\epsilon\Gamma_Q)^{-2}$. The optimal bit allocation can be found by solving the following optimization problem.
\be\label{eq:optProb_bitAlloc_RVQ}
\underset{x_d \in [X_t,1]}{\max.} \quad \tSINR^\infty_\text{\tiny CBf,Q} .
\ee
The solution of \eqref{eq:optProb_bitAlloc_RVQ} is summarized in the following theorem.
\begin{theorem}
For a fixed $\bar{B_t}$, the optimal bit allocation, in term of $x_d=2^{-\bar{B}_d}$, that maximizes $\tSINR^\infty_\text{\tiny CBf,Q} $ is given by
\be\label{eq:opt_bit}
x_d^*=\begin{cases}
X_t, & \epsilon \leq \frac{X_t(\frac{1}{\gamma_d}+1)}{1-G_3-X_t(2-G_3)}=\epsilon_\text{th} \\
%X_t, & X_t \geq \half, \forall \epsilon \geq 0\\
X_d=\frac{\epsilon X_t (G_3-1) + \sqrt{\epsilon^2X_t^2(G_3-1)^2-\epsilon X_t\left(\frac{1}{\gamma_d}+1+\epsilon G_3\right)(G_3-1)}}{\frac{1}{\gamma_d}+1+\epsilon G_3}, & \text{otherwise}.
\end{cases}
\ee
\end{theorem}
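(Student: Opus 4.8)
The key structural observation is that the auxiliary scalar $\Gamma_Q$ solving the cubic in Theorem~\ref{th:sinr_coord_rvq} depends only on $\rho_\text{\tiny C,Q}$, $\beta$ and $\epsilon$, hence \emph{not} on how the $\bar B_t$ bits are split; the bit allocation enters \eqref{eq:limSINR_CBf_RVQ} only through $\phi_d,\phi_c,\delta_d,\delta_c$ in the numerator. Therefore $G_1=-\Gamma_Q^2\bigl(\beta\,\partial\Gamma_Q/\partial\rho_\text{\tiny C,Q}\bigr)^{-1}$, $G_2=(1+\Gamma_Q)^{-2}$ and $G_3=(1+\epsilon\Gamma_Q)^{-2}$ are constants in \eqref{eq:optProb_bitAlloc_RVQ}. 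Writing $x_d=2^{-\bar B_d}$, $X_t=2^{-\bar B_t}$ and $2^{-\bar B_c}=X_t/x_d$, and noting that $\Gamma_Q>0$ with $\partial\Gamma_Q/\partial\rho_\text{\tiny C,Q}<0$ gives $G_1>0$ while $0<G_2,G_3\le1$, maximizing $\tSINR^\infty_\text{\tiny CBf,Q}$ over $x_d\in[X_t,1]$ reduces to maximizing the one-variable function
\[
f(x_d)=\frac{1-x_d}{D(x_d)},\qquad D(x_d)=A-(1-G_2)(1-x_d)-\epsilon(1-G_3)\Bigl(1-\tfrac{X_t}{x_d}\Bigr),\quad A=\tfrac{1}{\gamma_d}+1+\epsilon .
\]
Since $D(x_d)$ equals the manifestly positive quantity $\tfrac{1}{\gamma_d}+\phi_dG_2+\epsilon\phi_cG_3+\delta_d+\delta_c$, we have $f>0$ on $[X_t,1)$ with $f(1)=0$.

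Next I would differentiate. Writing $b:=1-G_3\ge0$, the terms in $D$ and $D'$ that contain $G_2$ cancel in $-D-(1-x_d)D'$, so that (after multiplying by $x_d^2>0$) $\partial\tSINR^\infty_\text{\tiny CBf,Q}/\partial x_d$ has the opposite sign to
\[
Q(x_d):=(A-\epsilon b)\,x_d^2+2\epsilon bX_t\,x_d-\epsilon bX_t .
\]
Because $A-\epsilon b=\tfrac{1}{\gamma_d}+1+\epsilon G_3>0$ and $Q(0)=-\epsilon bX_t<0$, the parabola $Q$ opens upward and has exactly one positive root; solving the quadratic and substituting $G_3-1=-b$, $\tfrac{1}{\gamma_d}+1+\epsilon G_3=A-\epsilon b$, and simplifying the discriminant to $\epsilon bX_t\bigl(A-\epsilon b(1-X_t)\bigr)$, this root is exactly the $X_d$ of \eqref{eq:opt_bit}. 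Moreover $Q(1)=A-\epsilon b(1-X_t)=\tfrac{1}{\gamma_d}+1+\epsilon\bigl(X_t+G_3(1-X_t)\bigr)>0$, so $X_d<1$ always.

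It then remains to enforce the feasibility constraint. On $[X_t,1]$ the function $f$ increases where $Q<0$ and decreases where $Q>0$, so the unconstrained maximizer of $f$ is $X_d$, and the lower bound $x_d\ge X_t$ becomes active precisely when $X_d\le X_t$, i.e.\ $Q(X_t)\ge0$. Since $Q(X_t)=X_t\bigl(AX_t-\epsilon b(1-X_t)\bigr)$, rearranging (with $b(1-X_t)-X_t=1-G_3-X_t(2-G_3)$) turns $Q(X_t)\ge0$ into $\epsilon\bigl[1-G_3-X_t(2-G_3)\bigr]\le(\tfrac1{\gamma_d}+1)X_t$, which is the condition $\epsilon\le\epsilon_\text{th}$ of \eqref{eq:opt_bit}; in that range $f$ is non-increasing on $[X_t,1]$, giving $x_d^*=X_t$ (equivalently $\bar B_c^*=0$: all bits to the direct channel), and otherwise $x_d^*=X_d\in(X_t,1)$. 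The one point I expect to require care is that, since $G_3$ and $\Gamma_Q$ themselves depend on $\epsilon$, the quantity $\epsilon_\text{th}$ is defined only implicitly (and its denominator $1-G_3-X_t(2-G_3)$ can change sign), so the genuinely clean statement is the single threshold equation $\partial_{x_d}\tSINR^\infty_\text{\tiny CBf,Q}\big|_{x_d=X_t}=0$; once $G_1,G_2,G_3$ are decoupled from $x_d$, everything else is the elementary one-variable calculus above.
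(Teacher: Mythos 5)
Your proposal is correct and follows essentially the same route as the paper's proof: treat $G_1,G_2,G_3$ as constants in $x_d$, differentiate the one-variable objective, observe that the sign of the derivative is governed by a quadratic in $x_d$ (your $Q$ is exactly the negative of the paper's numerator term $Z$), take its unique positive root as $X_d$, and obtain the threshold $\epsilon_\text{th}$ from the sign of that quadratic at $x_d=X_t$, with $Q(1)>0$ ruling out $x_d^*=1$. Your extra remarks (the explicit discriminant simplification, the cancellation of $G_2$, and the caveat that $\epsilon_\text{th}$ is only implicitly defined since $G_3$ depends on $\epsilon$) are sound refinements of the same argument rather than a different approach.
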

\begin{IEEEproof}
Differentiating the objective function \eqref{eq:optProb_bitAlloc_RVQ}, we get
\[
\frac{\partial \tSINR^\infty_\text{\tiny CBf,Q}}{\partial x_d}=G_1\frac{-x_d^2(\frac{1}{\gamma_d}+1+\epsilon G_3)+\epsilon(G_3-1)(2X_tx_d - X_t)}{x_d^2\left(\frac{1}{\gamma_d}+(1-x_d)(G_2-1)+\epsilon\left(1-\frac{X_t}{x_d}\right)(G_3-1)+1+\epsilon\right)^2}
\]
and the stationary is given by
\begin{align}
x_d^\circ = \frac{\epsilon X_t (G_3-1) + \sqrt{\epsilon^2X_t^2(G_3-1)^2-\epsilon X_t(\frac{1}{\gamma_d}+1+\epsilon G_3)(G_3-1)}}{\frac{1}{\gamma_d}+1+\epsilon G_3}. \label{eq:opt_xd1}
\end{align}
Now let us consider the term $Z=-x_d^2(\frac{1}{\gamma_d}+1+\epsilon G_3)+\epsilon(G_3-1)(2X_tx_d - X_t)$ in the numerator. 
It can be verified that the sign of $Z$ is the same as the sign of $\frac{\partial \tSINR^\infty_\text{\tiny CBf,Q}}{\partial x_d}$.
Thus, $X_d=x_d^\circ$ will be the unique positive solution of the quadratic equation $Z=0$. %:  that has a positive discriminant and $X_d$ is the positive solution.

It can be also checked that $\frac{\partial Z}{\partial x_d}=-2x_d(\frac{1}{\gamma_d}+1+\epsilon G_3)+\epsilon(G_3-1)(2X_t) < 0$ and thus, $Z$ is decreasing in $x_d$. Since at $x_d=1$, $Z < 0$, we should never allocate $x_d^*=1$. We will allocate $x_d=X_t$ if $Z \leq 0$ at $x_d=X_t$ (this condition is satisfied whenever $\epsilon \leq \epsilon_\text{th}$).
%The sign of this term determines the sign of $\mathcal{D}_{x_d}[ ]$. We can show $\mathcal{D}_{x_d}[Z]=-2x_d(\frac{1}{\gamma_d}+1+\epsilon G_3)+\epsilon(G_3-1)(2X_t) < 0$ and thus, $Z$ is decreasing in $x_d$
%At $x_d=1$, $Z < 0$ (hence, $\mathcal{D}_{x_d}[\tSINR^\infty_\text{\tiny CBf,Q} ] =  < 0)$. Therefore, we never allocate $x_d^*=1$ (or $x_d^* < 1$). Moreover, we will allocate $x_d=X_t$ if $Z < 0$ at this point ($x_d=X_t$). So, If $Z<0$, the following holds
%\[
%\epsilon \leq \frac{X_t(\frac{1}{\gamma_d}+1)}{1-G_3-X_t(2-G_3)}, \text{ or } X_t \geq \frac{\epsilon (1-G_3)}{\frac{1}{\gamma_d}+1+\epsilon (2-G_3)}
%\]
%If $\epsilon$ or $X_t$ does not satisfy the conditions above, we set $x_d^*=\breve{x}_d$.
\end{IEEEproof}

Unlike the MCP case where $x_d^*=X_t$ \emph{only when $\epsilon=0$}, in the CBf, it is optimal for a user to allocate all $B_t$ to the direct channel when $0 \leq \epsilon \leq \epsilon_\text{th}$. %Even though, as mentioned in the beginning of this section 
Note that %\footnote{Rusdha, was this mentioned before? Maybe I missed it.}
$x_d^*=X_t$ does not imply that the cooperation breaks down or that both BSs perform single-cell processing. It is easy to check that $\epsilon_\text{th}$ increases when $\bar{B}_t$ or $\gamma_d$ is decreased. This suggests that when the resource for the feedback bits is scarce or the received SNR is low then it is preferable for the user to allocate all the feedback bits to quantize the direct channel. So, in this situation, quantizing the cross channel does more harm to the performance the system.    
However, as $\epsilon$ increases beyond $\epsilon_\text{th}$, quantizing the cross channel will improve the SINR. We can show that $x_d^*$, particularly $X_d$,  is increasing  in $\epsilon$. 
In doing that, we need to take the derivative of $X_d$ over $\epsilon$. It is easy to show that $\Gamma_Q$ is decreasing in $\epsilon$. Then, it follows that $G_3$ is decreasing in $\epsilon$.
Using this fact, we can then show $\frac{\partial X_d}{\partial \epsilon} > 0$. 
So, as in the case of MCP, this suggests that more resources are allocated to  feedback the cross-channel when $\epsilon$ increases.

Once we have the optimal bit allocation, we can find the optimal $\rho_\text{C,Q}$, as we did for the MCP. For that purpose, we can rewrite \eqref{eq:opt_bit} w.r.t $\rho_\text{\tiny C,Q}$ as follows
\[
x_d^*=\begin{cases}
X_t, & \rho_\text{\tiny C,Q} \geq \rho_\text{th} \\
%X_t, & X_t \geq \half, \forall \epsilon \geq 0\\
X_d, & \text{otherwise}.
\end{cases}
\]
where for given $X_t,\epsilon$ and $\gamma_d$,  the threshold $\rho_\text{th}$ satisfies $\epsilon=\epsilon_\text{th}$.
%\be\label{eq:cond_rho_th}
%X_t = \frac{\epsilon (1-G_3)}{\frac{1}{\gamma_d}+1+\epsilon (2-G_3)}, 
%\ee
%or we can rewrite the last equation as
%\[
%G_3=\frac{1}{(1+\epsilon\Gamma_Q)^2}=\frac{X_t\left(\frac{1}{\gamma_d}+1\right)+\epsilon(2X_t-1)}{\epsilon (X_t-1)}=1-\frac{X_t\left(\frac{1}{\gamma_d}+1+\epsilon\right)}{\epsilon(1-X_t)}.
%\]
So, we have $\tSINR^\infty_\text{\tiny CBf,Q} (X_d)$ for $\rho_\text{\tiny C,Q} < \rho_\text{th}$ and $\tSINR^\infty_\text{\tiny CBf,Q} (X_t)$ for other values of $\rho_\text{\tiny C,Q}$. 

Now, let us investigate the optimal $\rho_\text{\tiny C,Q}$ when $x_d^* = X_d$. By evaluating $\frac{\partial \tSINR^\infty_\text{\tiny CBf,Q} (X_d)}{\partial \rho_\text{\tiny C,Q}}=0$, we can determine the stationary point, which is given by
\begin{align*}
\rho^\circ_{X_d}&=\left\{(1-X_d)\left((G_2'+\epsilon G_3')\left[\frac{1}{\gamma_d}+X_d+\epsilon\frac{X_t}{X_d}\right] +\epsilon (G_2G_3'-G_3G_2')\left[-X_d + \frac{X_t}{X_d}\right]\right)  -X_d'(G_2+\epsilon G_3)\gamma_e  \right\} \\
&\quad \times \frac{\beta}{X_d'\gamma_e+(1-X_d)\left((1-X_d)G_2'+\epsilon \left(1-\frac{X_t}{X_d}\right)G_3'\right)},
\end{align*}
where $\gamma_e=\frac{1}{\gamma_d}+1+\epsilon+\epsilon (G_3-1)\left(1-\frac{2X_t}{X_d}+\frac{X_t}{X_d^2}\right)$ and $G_2'=\frac{\partial G_2}{\partial \rho_\text{\tiny C,Q}}$
and $G_3'=\frac{\partial G_3}{\partial \rho_\text{\tiny C,Q}}$. 
%Note: In fact $\rho^\circ_{X_t}$ can be obtained from $\rho^\circ_{x_d}$ by setting $X_d=X_t$ (and therefore $X_d'=0$).
 
We can show that the derivative is positive for $\rho_\text{\tiny C,Q} \in [0,\rho^\circ_{X_d})$ and negative for $\rho_\text{\tiny C,Q} \in (\rho^\circ_{X_d},\infty)$. Since $\tSINR^\infty_\text{\tiny CBf,Q} (X_d)$ is defined for $\rho_\text{\tiny C,Q} \leq \rho_\text{th}$, if $\rho^\circ_{X_d} < \rho_\text{th}$ then  $\tSINR^\infty_\text{\tiny CBf,Q} (X_d)$ is increasing for $\rho_\text{\tiny C,Q} \in [0,\rho^\circ_{X_d}]$ and decreasing for $\rho_\text{\tiny C,Q} \in [\rho^\circ_{X_d},\rho_\text{th})$.  If $\rho^\circ_{X_d} \geq \rho_\text{th}$ then $\tSINR^\infty_\text{\tiny CBf,Q} (X_d)$ is increasing for $\rho_\text{\tiny C,Q} \in [0,\rho_\text{th})$. 

Then, we move to the case when $x_d^*=X_t$. 
%The derivative of $\tSINR^\infty_\text{\tiny CBf,Q} (X_t)$ over $\rho_\text{\tiny C,Q}$ is given by
%\be\label{eq:derSINR_Xt}
%\frac{\partial \tSINR^\infty_\text{\tiny CBf,Q} (X_t)}{\partial \rho_\text{\tiny C,Q}} = \frac{\gamma (1-X_t)\left[\beta(G_2'+\epsilon G_3')(\frac{1}{\gamma_d}+1+\epsilon - (1-X_t))+\beta\epsilon (1-X_t)(G_2G_3' - G_3 G_2')-\rho(1-X_t)G_2'\right]}{\beta^2\left(\frac{1}{\gamma_d}+1+\epsilon+(1-X_t)(G_2-1)\right)^2},
%\ee
%
By setting $\frac{\partial \tSINR^\infty_\text{\tiny CBf,Q} (X_t)}{\partial \rho_\text{\tiny C,Q}}=0$, the stationary point  is then given by
\begin{align*}
\rho^\circ_{X_t} &=%\frac{\beta(G_2'+\epsilon G_3')(1+1/\gamma_d+\epsilon)+\beta(1-X_t)(\epsilon G_2 G_3'-\epsilon G_3 G_2' -(G_2'+\epsilon G_3'))}{(1-X_t)G_2'}\\
\beta\frac{(G_2'+\epsilon G_3')(X_t+1/\gamma_d+\epsilon)+\epsilon(1-X_t)( G_2 G_3'- G_3 G_2')}{(1-X_t)G_2'}.
\end{align*}

We can also show that the derivative is positive for $\rho_\text{\tiny C,Q} \in [0,\rho^\circ_{X_t})$ and negative for $\rho_\text{\tiny C,Q} \in (\rho^\circ_{X_t},\infty)$. Since $\tSINR^\infty_\text{\tiny CBf,Q} (X_t)$ is defined for $\rho_\text{\tiny C,Q} \geq \rho_\text{th}$, if $\rho^\circ_{X_t} > \rho_\text{th}$ then  $\tSINR^\infty_\text{\tiny CBf,Q} (X_t)$ is increasing for $\rho_\text{\tiny C,Q} \in [\rho_\text{th},\rho^\circ_{X_t}]$ and decreasing for $\rho_\text{\tiny C,Q} \in [\rho^\circ_{X_t},\infty)$.  If $\rho^\circ_{X_t} \leq \rho_\text{th}$ then $\tSINR^\infty_\text{\tiny CBf,Q} (X_t)$ is decreasing for $\rho_\text{\tiny C,Q} \in [\rho_\text{th},\infty)$.

In what follows, by knowing the stationary point in both regions of $\rho$, we will investigate how to obtain the optimal $\rho_\text{\tiny C,Q}$, denoted by $\rho_\text{\tiny C,Q}^*$, for $\rho_\text{\tiny C,Q} \in [0,\infty)$. By inspecting $\partial \tSINR^\infty_\text{\tiny CBf,Q} (X_d)/\partial\rho_\text{\tiny C,Q}$ and $\partial \tSINR^\infty_\text{\tiny CBf,Q} (X_t)/\partial\rho_\text{\tiny C,Q}$  we can see that that $\tSINR^\infty_\text{\tiny CBf,Q} (x_d^*)$ is continuously differentiable for the region, $\rho_\text{\tiny C,Q} \in [0,\rho_\text{th})$ and
$\rho_\text{\tiny C,Q} \in [\rho_\text{th},\infty)$, respectively. To show $\tSINR^\infty_\text{\tiny CBf,Q} (x_d^*)$ is continuously differentiable for $\rho_\text{\tiny C,Q} \in [0,\infty)$ we need to establish  $\partial \tSINR^\infty_\text{\tiny CBf,Q} (x_d^*)/\partial\rho_\text{\tiny C,Q}$ to be continuous at $\rho_\text{\tiny C,Q}=\rho_\text{th}$, or equivalently 
\be\label{eq:cont_diff_sinr}
\lim_{\rho_\text{\tiny C,Q} \to \rho_\text{th}^-} \frac{\partial \tSINR^\infty_\text{\tiny CBf,Q} (X_d)}{\partial \rho_\text{\tiny C,Q}} = \lim_{\rho_\text{\tiny C,Q} \to \rho_\text{th}^+} \frac{\partial \tSINR^\infty_\text{\tiny CBf,Q} (X_t)}{\partial \rho_\text{\tiny C,Q}}=\left. \frac{\partial \tSINR^\infty_\text{\tiny CBf,Q} (X_t)}{\partial \rho_\text{\tiny C,Q}}\right|_{\rho_\text{\tiny C,Q}=\rho_\text{th}}
\ee 
When $\rho_\text{\tiny C,Q} \to \rho_\text{th}^-$, $X_d \to X_t$ and therefore the denominator of $\partial \tSINR^\infty_\text{\tiny CBf,Q} (X_d)/\partial\rho_\text{\tiny C,Q}$ and $\partial \tSINR^\infty_\text{\tiny CBf,Q} (X_t)/\partial\rho_\text{\tiny C,Q}$ are equal. Let $\mathcal{N}(f)$ denote the numerator of $f$. As $X_d \to X_t$, we have
\begin{align*}
\lim_{\rho_\text{\tiny C,Q} \to \rho_\text{th}^-}\mathcal{N}\left(\partial \tSINR^\infty_\text{\tiny CBf,Q} (X_d)/\partial\rho_\text{\tiny C,Q} \right) &= \left[\beta(G_2'+\epsilon G_3')(1/\gamma_d+1+\epsilon - (1-X_t))+\beta\epsilon (1-X_t)(G_2G_3' - G_3 G_2') \right. \\
&\qquad \left. -\rho_\text{th}(1-X_t)G_2'\right]\Gamma_Q (1-X_t) - \lim_{\rho_\text{\tiny C,Q} \to \rho_\text{th}^-} X_d'(\beta G_2+\beta\epsilon G_3 +\rho)\gamma_e \\
 &= \mathcal{N}\left(\left.\frac{\partial \tSINR^\infty_\text{\tiny CBf,Q} (X_t)}{\partial\rho_\text{\tiny C,Q}}\right|_{\rho_\text{\tiny C,Q}=\rho_\text{th}}\right)- \lim_{\rho_\text{\tiny C,Q} \to \rho_\text{th}^-} X_d'(\beta G_2+\beta\epsilon G_3 +\rho)\gamma_e,
\end{align*}
 where $X_d'=\partial X_d/\partial \rho_\text{\tiny C,Q}$. We should note that $\lim_{\rho_\text{\tiny C,Q} \to \rho_\text{th}^-} X_d'= -\half\epsilon G_3'\frac{1-X_t}{\frac{1}{\gamma_d}+1+\epsilon}\neq 0$. This shows that $x_d^*$ is not continuously differentiable over $\rho_\text{\tiny C,Q}$. Now let us look at
\begin{align*}
\lim_{\rho_\text{\tiny C,Q} \to \rho_\text{th}^-} \gamma_e &= \frac{1}{\gamma_d}+1+\epsilon+\epsilon (G_3-1)\left(-1+\frac{1}{X_t}\right)\\
&= \frac{1}{X_t}\left[\left(\frac{1}{\gamma_d}+1\right)X_t +\epsilon(2X_t-1)-\epsilon G_3 (X_t-1)\right]=0
\end{align*}
since as $\rho_\text{\tiny C,Q} \to \rho_\text{th}^-$, from the (equivalent) condition $\epsilon=\epsilon_\text{th}$, the term in the bracket becomes 0.
This concludes \eqref{eq:cont_diff_sinr} and therefore  $\tSINR^\infty_\text{\tiny CBf,Q} (x_d^*)$ is continuously differentiable for $\rho_\text{\tiny C,Q} \in [0,\infty)$.

By using the property above and the facts that the  $\tSINR^\infty_\text{\tiny CBf,Q} (X_d)$ and $\tSINR^\infty_\text{\tiny CBf,Q} (X_t)$ are quasi-concave (unimodal), we can determine the optimal $\rho_\text{\tiny C,Q}^*$ and $x_d^*$ jointly as described in Algorithm \ref{alg:comp_rho_cbf_rvq}.
\begin{algorithm}
\caption{Calculate $\rho_\text{\tiny C,Q}^*$ and $x_d^*$} \label{alg:comp_rho_cbf_rvq}
\begin{algorithmic}[1] 
\STATE Compute $\rho_\text{th}$
\IF{$\rho_\text{th} \leq 0$}
	\STATE $x_d^*=X_t$.
	\STATE $\rho_\text{\tiny C,Q}^*=\rho_{X_t}^\circ$
\ELSE
	\STATE Compute $\rho_{X_t}^\circ$
	\IF{$\rho_{X_t}^\circ \geq \rho_\text{th}$}
			\STATE $x_d^*=X_t$
			\STATE $\rho_\text{\tiny C,Q}^*=\rho_{X_t}^\circ$
	\ELSE
			\STATE $x_d^*=X_d$
			\STATE $\rho_\text{\tiny C,Q}^*=\rho_{X_d}^\circ$
	\ENDIF
\ENDIF
\end{algorithmic}
\end{algorithm}
We can verify the steps 6-13 in the algorithm by using the following arguments:  If $\rho^\circ_{X_t} > \rho_\text{th}$, then the derivate of
$\tSINR^\infty_\text{\tiny CBf,Q} (X_t)$ is positive at $\rho_\text{\tiny C,Q}=\rho_\text{th}$ because  $\tSINR^\infty_\text{\tiny CBf,Q} (X_t)$ is quasi-concave. Since the $\tSINR^\infty_\text{\tiny CBf,Q} (x_d^*)$ is continuously differentiable, then the derivative of $\tSINR^\infty_\text{\tiny CBf,Q} (X_d)$ is also positive when $\rho_\text{\tiny C,Q} \to \rho_\text{th}$. Since $\tSINR^\infty_\text{\tiny CBf,Q} (X_d)$ is also quasi-concave, consequently $\tSINR^\infty_\text{\tiny CBf,Q} (X_d)$ is increasing for $\rho_\text{\tiny C,Q} \in  [0,\rho_\text{th})$. This implies that $\rho^*_\text{\tiny C,Q}= \rho^\circ_{X_t}$.
Similar types of arguments can be also used to verify that if  $\rho^\circ_{X_t} < \rho_\text{th}$ then $\rho^*_\text{\tiny C,Q}= \rho^\circ_{X_d}$.

\subsection{Numerical Results}

%\begin{figure}[!t]
%\centering
%\includegraphics[scale=0.75]{finite_sumrate_vs_N_rvq_Bt4} 
%\caption{
%%\subref{fig:a}
%The total sum-rate difference for different system dimensions with $\beta=0.6$, $\epsilon =0.5$ $\gamma_d = 10$ dB and $\bar{B}_t = 4$.}%
%\label{fig:conv_sumrate_rvq}%
%\end{figure}
The first two figures in this section are obtained by using a similar procedure to that followed in the analog feedback case.
Figure \ref{fig:conv_sumrate_rvq} shows how well the limiting sum-rate (equivalently the limiting SINR) approximates the finite-size system sum-rate. The optimal regularization parameter and bit allocation are applied in computing the limiting and average sum-rates.
As $N$ grows, the normalized sum-rate difference become smaller. For $N=60, K=36$, it is arleady about $3.1\%$ and $1.6\%$ for MCP and CBf, respectively. Figure \ref{fig:ThDiff_rvq} shows the average sum-rate difference, with a fixed regularization parameter, between the system that uses $B_{d,\text{FS}}^*$ and $\bar{B_d}^*$ to feed back the direct channel states. $B_{d,\text{FS}}^*$ denotes the optimal bit allocation of the finite-size system. For each channel realization, it is obtained by a grid search. With $N=10, K=6$, the maximum normalized average sum-rate difference reaches $0.22\%$ for MCP. It is about four-times bigger for CBf, which is approximately $0.86\%$.  Thus, from those simulations, similar to the analog feedback case, the conclusions we can reach for the limiting regime are actually useful for the finite system case.

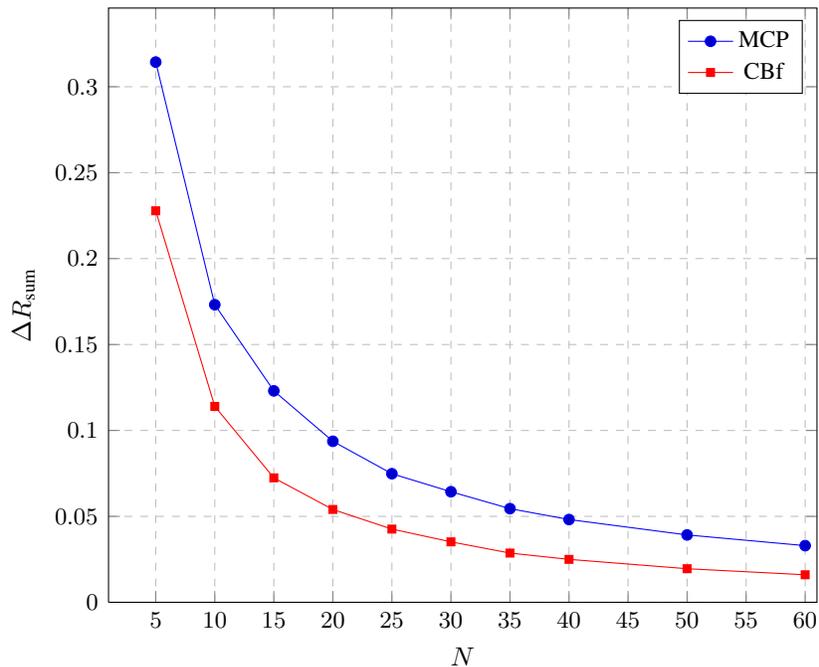
\begin{figure}[t]
\centering
\begin{tikzpicture}
	\pgfplotsset{grid style=dashed}
	\begin{axis}[small, width=11cm, xlabel=$N$, ylabel=$\Delta R_\text{sum}$,
	yticklabel style={/pgf/number format/fixed},
	xmin=1, xmax=61, ymin=0,
	grid=major
	 ]
		\addplot table {av_ThdiffN_MCP_RVQ.dat};
		\addplot[color=red, mark=square*, mark size=1.5] table {av_ThdiffN_CBf_RVQ.dat};
		\legend{\footnotesize MCP, \footnotesize CBf}
	\end{axis}
\end{tikzpicture}
\caption{The total sum-rate difference for different system dimensions with $\beta=0.6$, $\epsilon =0.5$ $\gamma_d = 10$ dB and $\bar{B}_t = 4$.}
\label{fig:conv_sumrate_rvq}
\end{figure}
 
\begin{figure}[t]
\centering
\begin{tikzpicture}
	\pgfplotsset{grid style=dashed}
	\begin{semilogyaxis}[small, width=11cm, xlabel=$\epsilon$, ylabel=$\frac{\mathbb{E}[ R_{\text{sum}}(\bar{B}_{d,\text{FS}}^*) - R_\text{sum}(\bar{B}_d^*)]}{\mathbb{E}[R_{\text{sum}}(\bar{B}_{d,\text{FS}}^*)]}$,
	xmin=0, xmax=1, ymin=0, ymax=0.1,
	minor x tick num=1,
	xmajorgrids, yminorgrids,
	 ]
		\addplot table {av_ThdiffvsEpsil_MCP_RVQ.dat};
		\addplot[color=red, mark=square*, mark size=1.5] table {av_ThdiffvsEpsil_CBf_RVQ.dat};
		\legend{\footnotesize MCP, \footnotesize CBf}
	\end{semilogyaxis}
\end{tikzpicture}
\caption{
The (normalized) average sum-rate difference of the finite-size system by using the $\bar{B}_{d,\text{FS}}^*$ and $\bar{B}_d^*$  with $N=10, \beta=0.6$, $\gamma_d = 10$ dB and $\bar{B}_t=4$.}%
\label{fig:ThDiff_rvq}%
\end{figure}
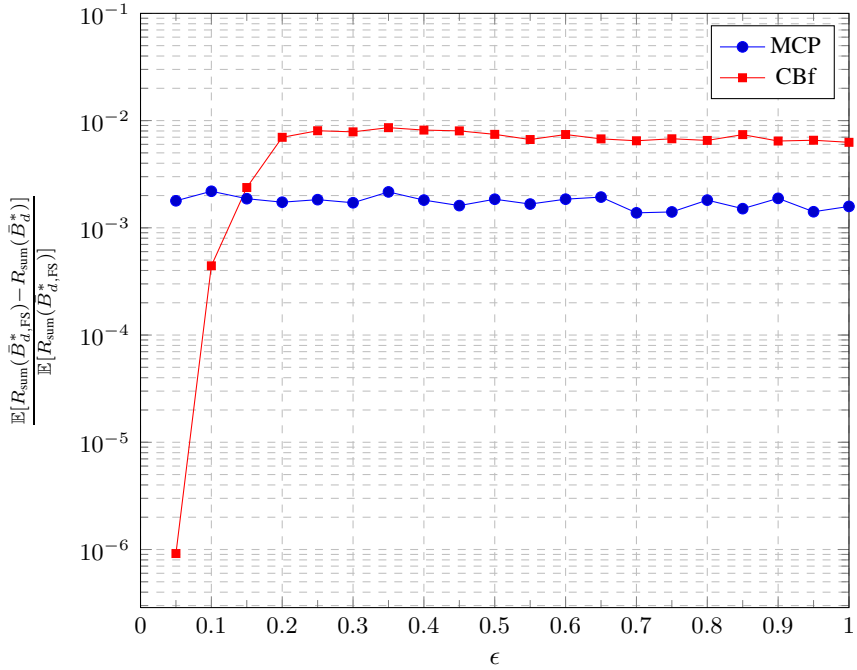

In the following, we present numerical simulations that show the behavior of the limiting SINR and optimal bit allocation for MCP and CBf as $\epsilon$ varies. 
The optimal bit allocation is illustrated in Figure \ref{fig:comparison_DF}. As shown in Section \ref{sec:RVQ_feedback}, the optimal $B_d$ for MCP is decreasing in $\epsilon$ and $B_d^*=B_t$ when $\epsilon=0$. For CBf, $B_d^*=B_t$ when $\epsilon\leq 0.19$, and after that decreases as $\epsilon$ grows. Overall, for given $\epsilon$, %we can see that 
$B_d^*$ for CBf is larger than for MCP, implying the quality of the direct channel information is more important for CBf. % than for network MIMO.

In Figure \ref{fig:comparison_DF}, the optimal values for the regularization parameter and bit allocation are used. From that figure, it is obvious that  $\tSINR^\infty_\text{\tiny CBf,Q}$ decreases as $\epsilon$ increases. In the case of MCP, as predicted by the analysis, the  limiting SINR is decreasing until $\epsilon^*_{\text{\tiny M,RVQ}}\approx 0.72$ and is increasing after that point. By comparing the limiting SINR for both cooperation schemes, it is also interesting to see that  for some values of $\epsilon$, i.e., in the interval when CBf has $\bar{B}_c^*=0$, the CBf slightly outperforms MCP.  We should note that within the current scheme, when $\bar{B}_c^*=0$, CBf and MCP are not the same as single-cell processing (SCP): under RVQ, there is still a quantization vector in the codebook that is used to represent the cross channel (although it is uncorrelated with the actual channel vector being quantized).     

Motivated by the above facts, we investigate whether SCP provides some advantages over MCP and CBf for some (low) values of $\epsilon$. 
In SCP, we use $B_{k,j,j}=B_t$ bits ($\forall k,j$) to quantize the direct channel. The cross channels in the precoder are represented by vectors with zero entries. By following the steps in deriving Theorem \ref{th:sinr_mcp_rvq} and \ref{th:sinr_coord_rvq}, we can show that the limiting SINR is given by 
\begin{align*}
\tSINR^\infty_{\text{\tiny SCP,Q}}=\gamma_e g(\beta,\rho_{\text{\tiny S}}) \frac{1+\frac{\rho_{\text{\tiny S}}}{\beta}(1+g(\beta,\rho_{\text{\tiny S}}))^2}{\gamma_e+(1+g(\beta,\rho_{\text{\tiny S}}))^2},
\end{align*}
where $\rho_{\text{\tiny S}}=N^{-1}\alpha$ and $\gamma_e = \frac{1-2^{-\bar{B}_t}}{2^{-\bar{B}_t}+\epsilon+\frac{1}{\gamma_d}}$.
It follows that the optimal $\rho_{\text{\tiny S}}$ maximizing $\tSINR^\infty_{\text{\tiny SCP,Q}}$  is $\rho_{\text{\tiny S}}^*=\frac{\beta}{\gamma_e}$
%\[
%\rho_{\text{\tiny S}}^*=\frac{\beta}{1-2^{-\bar{B}_t}}\left[2^{-\bar{B}_t}+\epsilon+\frac{1}{\gamma_d}\right],
%\] 
and the corresponding the limiting SINR is $\tSINR^{*,\infty}_{\text{\tiny SCP,Q}}=g(\beta,\rho_{\text{\tiny S}}^*)$.

From Figure \ref{fig:comparison_DF}, it is obvious that the SCP outperforms MCP and CBf for some values of $\epsilon$, that is, $\epsilon \leq 0.13$.
Surprisingly, the CBf is still beaten by  SCP until $\epsilon\approx 0.82$. This means that the SCP still gives advantages over the CBf even in a quite strong interference regime with this level of feedback.

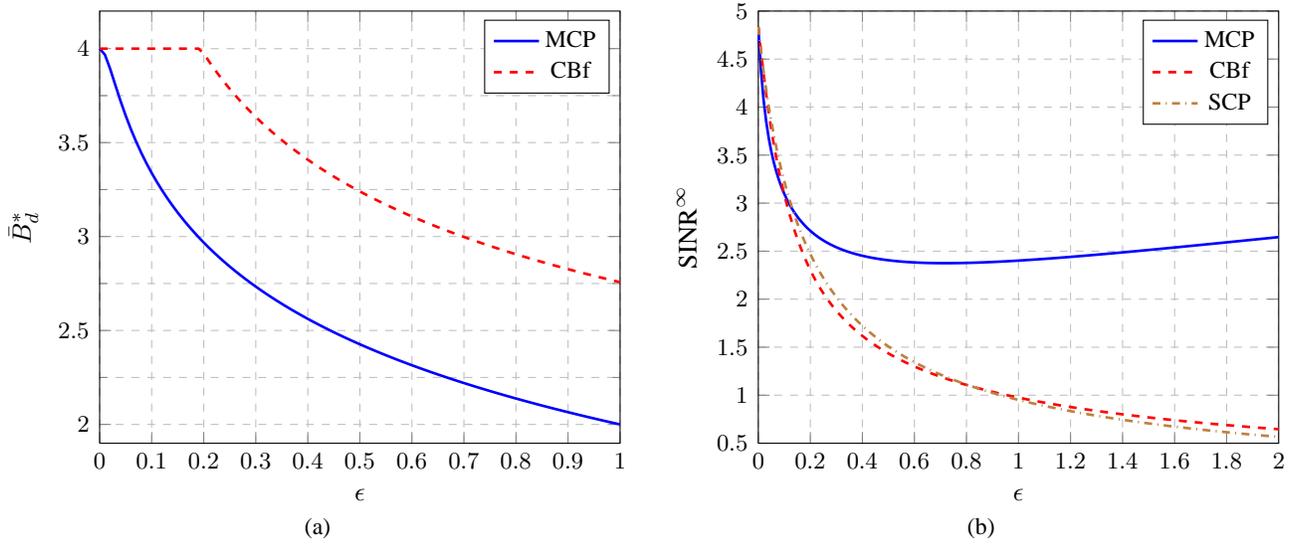
\begin{figure}[t]
\centering
\begin{tabular}{cc}
\begin{tikzpicture}[baseline]
	\pgfplotsset{grid style=dashed}
	\begin{axis}[small, width=8.5cm,  xlabel=$\epsilon$, ylabel=$\bar{B}_d^*$,
	yticklabel style={/pgf/number format/fixed}, 
	xmin=0, xmax=1, ymin=1.9, ymax=4.2,
	minor y tick num=1,
	xmajorgrids,	yminorgrids, ymajorgrids, no markers
	 ]
		\addplot[color=blue, line width=1] table {Bd_vs_epsil_beta06_Xt4Pd10_MCP.dat};
		\addplot[color=red, dashed, line width=1] table {Bd_vs_epsil_beta06_Xt4Pd10_CBf.dat};
		\legend{\footnotesize MCP, \footnotesize CBf}
		%\draw[->,thick,>=stealth] (axis cs:0.17,1.08) -- (axis cs:0.1,1);
		%\node at (axis cs:0.2,1.1) {\footnotesize $\epsilon_\text{\tiny M}^\text{\scriptsize th}$};
		%\draw[->,thick,>=stealth] (axis cs:0.88,1.08) -- (axis cs:0.81,1);
		%\node at (axis cs:0.91,1.1) {\footnotesize $\epsilon_\text{\tiny C}^\text{\scriptsize th}$};
	\end{axis}
\end{tikzpicture}
&
\begin{tikzpicture}[baseline]
	\pgfplotsset{grid style=dashed}
	\begin{axis}[small, width=8.5cm, xlabel=$\epsilon$, ylabel=$\tSINR^\infty$,
	yticklabel style={/pgf/number format/fixed},
	xmin=0, xmax=2, ymin=0.5, ymax=5, ytick={0.5,1,1.5,...,5},
		grid=major, no markers
	 ]
		\addplot[color=blue, line width=1] table {SINR_vs_epsil_beta06_Xt4Pd10_MCP.dat};
		\addplot[color=red, dashed, line width=1] table {SINR_vs_epsil_beta06_Xt4Pd10_CBf.dat};
		\addplot[color=brown, dashdotted, line width=1] table {SINR_vs_epsil_beta06_Xt4Pd10_SCP.dat};
		\legend{\footnotesize MCP, \footnotesize CBf, \footnotesize SCP}
		%\draw[->,thick,>=stealth] (axis cs:0.3,1.5) -- (axis cs:0.25,1.362);
		%\node at (axis cs:0.33,1.55) {\footnotesize $\epsilon_\text{\tiny AF}^\text{\tiny M}$};
	\end{axis}
\end{tikzpicture}\\
$\text{\footnotesize (a)}$ & $\text{\footnotesize (b)}$
\end{tabular}
\caption{(a) Optimal bit allocation vs $\epsilon$  (b) Limiting SINR vs. $\epsilon$. Parameters: $\gamma_d$=10 dB, $\bar{B}_t=4$.}  %
\label{fig:comparison_DF}%
\end{figure}

\section{Analog vs. digital feedback}\label{sec:compare_ana_rvq}
In this section we will compare the performance of the analog and quantized feedback for each cooperation scheme. For the quantized feedback, we follow the approach in \cite{Caire_it10, Koba_comm11,JZhang_arxiv11,Santipach_it10} that  translates feedback bits to symbols for a fair comparison with the analog feedback. In this regard, there are two approaches \cite{JZhang_arxiv11}:
\begin{enumerate}
	\item  By assuming that the feedback channel is error free and transmitted at the uplink rate (even though this assumption could be unrealistic in practice), we can write
\be\label{eq:Bit_to_Symbols_adapt}
\bar{B}_t=\frac{B_t}{N}= 2\kappa\log_2\left(1+(1+\epsilon)\gamma_u\right).
\ee 
This approach is introduced in \cite{Caire_it10, Koba_comm11}. \eqref{eq:Bit_to_Symbols_adapt} is obtained by assuming that each feedback bit is received by both base stations in different cells where the path-gains from a user to its own BS and other BS are different i.e, 1 and $\epsilon$ respectively. We can think the feedback transmission from a user to both BSs as a Single-Input Multi-Output (SIMO) system. The BSs linearly combine the feedback signal from the user and the corresponding maximum SNR is  $(1+\epsilon)\gamma_u$ (see \cite{Larsson_book08}). The pre-log factor $2\kappa N$ for $B_t$ in \eqref{eq:Bit_to_Symbols_adapt} presents the channel uses (symbols) for transmitting the feedback bits which are the same as those for the analog feedback. $\kappa$ follows the discussion in Section \ref{ss:AF_model}. Our approach is different from the approach in \cite{JZhang_arxiv11} in which the user $k$ in cell $j$ sends the feedbacks only to its own BS $j$. In that case,
\eqref{eq:Bit_to_Symbols_adapt} becomes  $\bar{B}_t= 2\kappa\log_2\left(1+\gamma_u\right)$.  

\item Following \cite{Santipach_it10}, the second approach translates the feedback bits to symbols based on the modulation scheme used in the feedback transmission.  In the analog feedback, the feedback takes $2\kappa N$ channel uses per user. Let $\eta$ be a conversion factor that links the bits and symbols and it depends on the modulation scheme. As an example, for the binary phase shift keying (BPSK), $\eta=1$.
Thus, we can write (see also \cite{JZhang_arxiv11})
\be\label{eq:Bit_to_Symbols_fixed}
\eta B_t= 2\kappa N.
\ee
We should note that using this approach, for a fixed $\kappa$ there is no link between $\bar{B}_t$ and $\gamma_u$ as we can see in \eqref{eq:Bit_to_Symbols_adapt}. 
\end{enumerate}

  Let us assume that $\kappa=1$. Thus, with the first approach, we have $X_t=2^{-\bar{B}_t}=\frac{1}{\left(1+(1+\epsilon)\gamma_u\right)^2}$. The comparison of the limiting SINR based on the analog and quantized feedback for MCP and CBf can be seen in Figure \ref{fig:comparison_AFandDF}(a). It shows that the quantized feedback beats the analog feedback in both MCP and CBf for $\epsilon$ less than about 1. A similar situation still occurs for CBf even for $\epsilon \in [0,2]$. The opposite happens for MCP when $\epsilon$ is above 1.5.  
 The comparison of the analog and quantized feedback with the second approach, also with $\kappa=1$, is illustrated in Figure \ref{fig:comparison_AFandDF}(b). Similar to the previous, one can see that the quantized feedback outperforms the analog feedback if $\epsilon$ is below a certain threshold. Otherwise, the analog feedback gives better performance. Those observations can be explained by verifying whether the feedback scheme that provides better CSIT will give a better performance. This is easier to check by looking at the MCP scheme because from our discussions in Section \ref{sec:ana_feedback} and \ref{sec:RVQ_feedback}, its performance can be measured by the total CSIT quality, i.e., $\omega_c+\omega_d$ in  the analog feedback and $(1+\epsilon)d^2$ in the digital feedback. Plotting those over $\epsilon$, not shown here, will give the same behaviors for the MCP as we observed in Figure \ref{fig:comparison_AFandDF}. Thus, from our simulations above, the CSIT quality of the quantized feedback is better than that of analog feedback when the cross channel gain is below a certain threshold. The above plots also confirm that more feedback resources will increase the system performance: for a fixed $\gamma_u=0$ dB, $\bar{B}_t$ in the left plot is larger than that in the right plot and hence gives a higher (limiting) SINR for the quantized feedback scheme. 

%
%\begin{figure}
%\centering
%%\setlength{\tabcolsep}{-4pt}
%\begin{tabular}{cc}
%\input{SINRVsEpsil_beta06_Pup0_compareAFvsDF_epsil_02}&
%\input{SINRVsEpsil_beta06_Pup0_compareAFvsDF_app2} \\%[-5pt]
%{\footnotesize (a) $\bar{B}_t= 2\log_2\left(1+(1+\epsilon)\gamma_u\right)$} & {\footnotesize (b) $\bar{B}_t=2$}
%\end{tabular}
%\caption{The comparison of the limiting SINR of the analog and quantized feedback for different cooperation schemes (MCP:solid, CBf:dash). Parameters: $\beta=0.6$, $\gamma_d=10$ dB, $\gamma_u=0$ dB}  %
%\label{fig:comparison_AFandDF}%
%\end{figure}

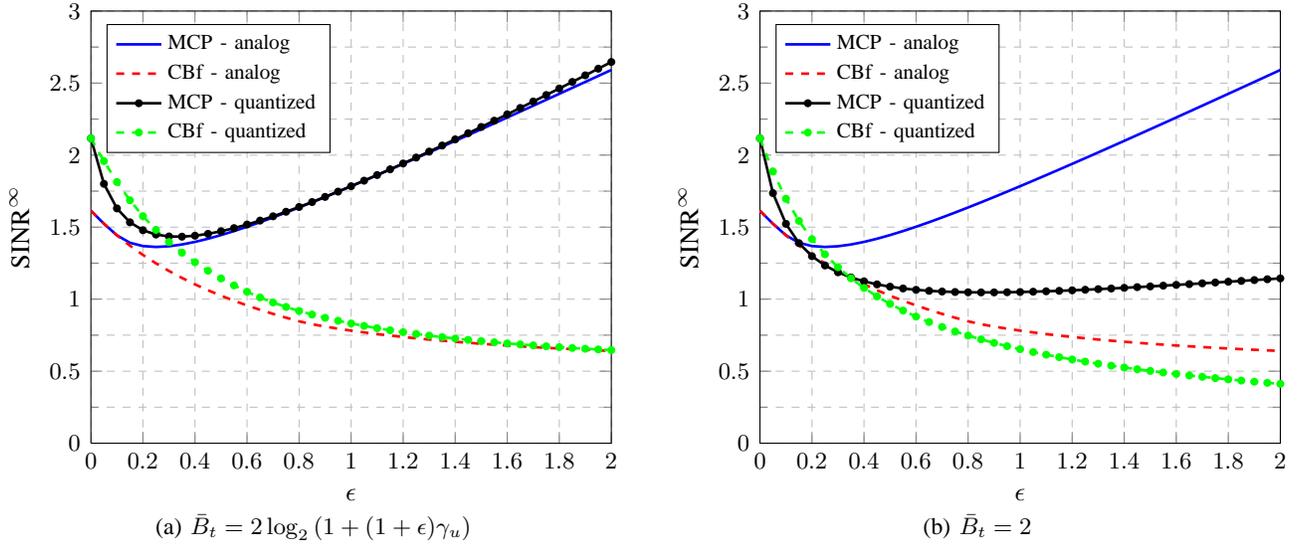
\begin{figure}
\centering
\begin{tabular}{cc}
\begin{tikzpicture}[baseline]
	\pgfplotsset{grid style=dashed, /tikz/every mark/.append style={solid}}
	\begin{axis}[small, width=8.5cm,  xlabel=$\epsilon$, ylabel=$\tSINR^\infty$,
	yticklabel style={/pgf/number format/fixed}, 
	xmin=0, xmax=2, ymin=0, ymax=3,
	minor y tick num=1,
	xmajorgrids,	yminorgrids, ymajorgrids,
	legend pos = north west,
	legend cell align = left
	 ]
		\addplot[color=blue, line width=1] table[x=epsil, y=sinr_ana_mcp] {dataSINRvsEpsil_compareAFvsDF_appr1.dat};
		\addplot[color=red, dashed, line width=1] table[x=epsil, y=sinr_ana_cbf] {dataSINRvsEpsil_compareAFvsDF_appr1.dat};
		\addplot[mark=*, mark size=1, color=black, line width=1] table[x=epsil, y=sinr_dig_mcp] {dataSINRvsEpsil_compareAFvsDF_appr1.dat};
		\addplot[mark=*, mark size=1, color=green, dashed, line width=1] table[x=epsil, y=sinr_dig_cbf] {dataSINRvsEpsil_compareAFvsDF_appr1.dat};
		\legend{\scriptsize MCP - analog, \scriptsize CBf - analog, \scriptsize MCP - quantized, \scriptsize CBf - quantized}
		%\legend{\footnotesize MCP, \footnotesize CBf}
		%\draw[->,thick,>=stealth] (axis cs:0.17,1.08) -- (axis cs:0.1,1);
		%\node at (axis cs:0.2,1.1) {\footnotesize $\epsilon_\text{\tiny M}^\text{\scriptsize th}$};
		%\draw[->,thick,>=stealth] (axis cs:0.88,1.08) -- (axis cs:0.81,1);
		%\node at (axis cs:0.91,1.1) {\footnotesize $\epsilon_\text{\tiny C}^\text{\scriptsize th}$};
	\end{axis}
\end{tikzpicture}
&
\pgfplotsset{grid style=dashed, /tikz/every mark/.append style={solid}}
\begin{tikzpicture}[baseline]
	\begin{axis}[small, width=8.5cm, xlabel=$\epsilon$, ylabel=$\tSINR^\infty$,
	yticklabel style={/pgf/number format/fixed},
	xmin=0, xmax=2, ymin=0, ymax=3, %ytick={0.5,1,1.5,...,5},
		minor y tick num=1,
	xmajorgrids,	yminorgrids, ymajorgrids,
	legend pos = north west,
	legend cell align = left
	 ]
		\addplot[no markers, color=blue, line width=1] table[x=epsil, y=sinr_ana_mcp] {dataSINRvsEpsil_compareAFvsDF_appr2.dat};
		\addplot[no markers, color=red, dashed, line width=1] table[x=epsil, y=sinr_ana_cbf] {dataSINRvsEpsil_compareAFvsDF_appr2.dat};
		\addplot[mark=*, mark size=1, color=black, line width=1] table[x=epsil, y=sinr_dig_mcp] {dataSINRvsEpsil_compareAFvsDF_appr2.dat};
		\addplot[mark=*, mark size=1, color=green, dashed, line width=1] table[x=epsil, y=sinr_dig_cbf] {dataSINRvsEpsil_compareAFvsDF_appr2.dat};
		\legend{\scriptsize MCP - analog, \scriptsize CBf - analog, \scriptsize MCP - quantized, \scriptsize CBf - quantized}
		%\draw[->,thick,>=stealth] (axis cs:0.3,1.5) -- (axis cs:0.25,1.362);
		%\node at (axis cs:0.33,1.55) {\footnotesize $\epsilon_\text{\tiny AF}^\text{\tiny M}$};
	\end{axis}
\end{tikzpicture}\\
{\footnotesize (a) $\bar{B}_t= 2\log_2\left(1+(1+\epsilon)\gamma_u\right)$} & {\footnotesize (b) $\bar{B}_t=2$}
\end{tabular}

\caption{The comparison of the limiting SINR of the analog and quantized feedback for different cooperation schemes. Parameters: $\beta=0.6$, $\gamma_d=10$ dB, $\gamma_u=0$ dB}  %
\label{fig:comparison_AFandDF}%
\end{figure}

Figure \ref{fig:comparison_AFandDF_vsBt} depicts the limiting SINR of the analog and quantized feedback for different values of feedback rate. For the analog feedback, the values of feedback rate/bit is converted by using the previous approaches: $\kappa=\frac{\bar{B}_t}{2\log_2(1+(1+\epsilon)\gamma_u)}$ and $\kappa=\bar{B}_t/2$ respectively. 
For MCP, we can see that initially the analog feedback scheme outperforms the quantized feedback in both plots. However, after a certain value (threshold) of $\bar{B}_t$, the opposite happens. A similar observation also holds for the CBf scheme. The explanations for those phenomena follow the discussions for Figure \ref{fig:comparison_AFandDF}. We should note that in generating the figures, the values for $\bar{B}_t$ are already determined. So, the limiting SINRs for the digital feedback are the same in both sub-figures. For the analog feedback, since $\kappa$ with the approach \eqref{eq:Bit_to_Symbols_fixed} is larger (with $\gamma_u=0$ dB) than that with the approach \eqref{eq:Bit_to_Symbols_adapt}, then the training period in the former is  longer and  will result in a better CSIT. Thus,
the limiting SINRs for the analog feedback in Figure \ref{fig:comparison_AFandDF_vsBt}(b) are larger compared to those in \ref{fig:comparison_AFandDF_vsBt}(a). 
       
%\begin{figure}
%\centering
%%\setlength{\tabcolsep}{-4pt}
%\begin{tabular}{cc}
%\input{SINRvsBt_MCPandCBf_adaptBt}&
%\input{SINRvsBt_MCPandCBf_fixedBt} \\%[-5pt]
%{\footnotesize (a) $\bar{B}_t= 2\kappa\log_2\left(1+(1+\epsilon)\gamma_u\right)$} & {\footnotesize (b) $\bar{B}_t=2\kappa$}
%\end{tabular}
%\caption{The comparison of the limiting SINR of the analog and quantized feedback for different cooperation schemes (MCP:solid, CBf:dash) vs. the feedback rates. Parameters: $\beta=0.6$, $\epsilon=0.6$,$\gamma_d=10$ dB, $\gamma_u=0$ dB}  %
%\label{fig:comparison_AFandDF_vsBt}%
%\end{figure}

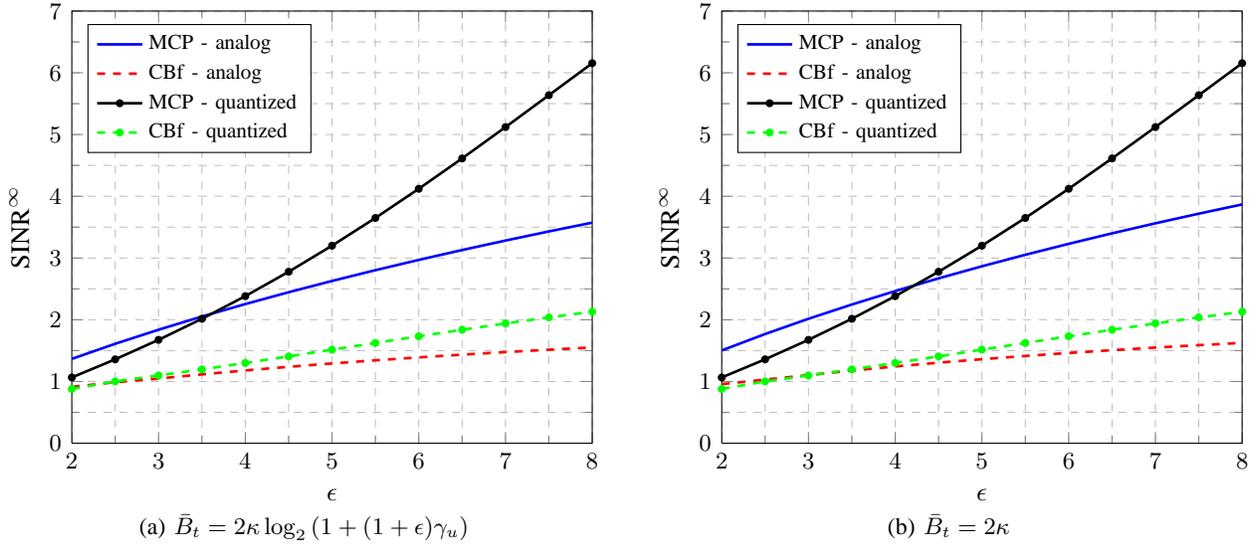
\begin{figure}
\centering
\begin{tabular}{cc}
\begin{tikzpicture}[baseline]
	\pgfplotsset{grid style=dashed, /tikz/every mark/.append style={solid}}
	\begin{axis}[small, width=8.5cm,  xlabel=$\epsilon$, ylabel=$\tSINR^\infty$,
	yticklabel style={/pgf/number format/fixed}, 
	xmin=2, xmax=8, ymin=0, ymax=7,
	minor x tick num=1,
	minor y tick num=1,
	xmajorgrids,	xminorgrids, yminorgrids, ymajorgrids,
	legend pos = north west,
	legend cell align = left
	 ]
		\addplot[color=blue, line width=1] table[x=epsil, y=sinr_ana_mcp] {dataSINRvsBt_adapt.dat};
		\addplot[color=red, dashed, line width=1] table[x=epsil, y=sinr_ana_cbf] {dataSINRvsBt_adapt.dat};
		\addplot[mark=*, mark size=1, color=black, line width=1] table[x=epsil, y=sinr_dig_mcp] {dataSINRvsBt_adapt.dat};
		\addplot[mark=*, mark size=1, color=green, dashed, line width=1] table[x=epsil, y=sinr_dig_cbf] {dataSINRvsBt_adapt.dat};
		\legend{\scriptsize MCP - analog, \scriptsize CBf - analog, \scriptsize MCP - quantized, \scriptsize CBf - quantized}
		%\legend{\footnotesize MCP, \footnotesize CBf}
		%\draw[->,thick,>=stealth] (axis cs:0.17,1.08) -- (axis cs:0.1,1);
		%\node at (axis cs:0.2,1.1) {\footnotesize $\epsilon_\text{\tiny M}^\text{\scriptsize th}$};
		%\draw[->,thick,>=stealth] (axis cs:0.88,1.08) -- (axis cs:0.81,1);
		%\node at (axis cs:0.91,1.1) {\footnotesize $\epsilon_\text{\tiny C}^\text{\scriptsize th}$};
	\end{axis}
\end{tikzpicture}
&
\pgfplotsset{grid style=dashed, /tikz/every mark/.append style={solid}}
\begin{tikzpicture}[baseline]
	\begin{axis}[small, width=8.5cm, xlabel=$\epsilon$, ylabel=$\tSINR^\infty$,
	yticklabel style={/pgf/number format/fixed},
	xmin=2, xmax=8, ymin=0, ymax=7, %ytick={0.5,1,1.5,...,5},
		minor y tick num=1,
		minor x tick num=1,
	xmajorgrids,	xminorgrids, yminorgrids, ymajorgrids,
	legend pos = north west,
	legend cell align = left
	 ]
		\addplot[no markers, color=blue, line width=1] table[x=epsil, y=sinr_ana_mcp] {dataSINRvsBt_fixed.dat};
		\addplot[no markers, color=red, dashed, line width=1] table[x=epsil, y=sinr_ana_cbf] {dataSINRvsBt_fixed.dat};
		\addplot[mark=*, mark size=1, color=black, line width=1] table[x=epsil, y=sinr_dig_mcp] {dataSINRvsBt_fixed.dat};
		\addplot[mark=*, mark size=1, color=green, dashed, line width=1] table[x=epsil, y=sinr_dig_cbf] {dataSINRvsBt_fixed.dat};
		\legend{\scriptsize MCP - analog, \scriptsize CBf - analog, \scriptsize MCP - quantized, \scriptsize CBf - quantized}
		%\draw[->,thick,>=stealth] (axis cs:0.3,1.5) -- (axis cs:0.25,1.362);
		%\node at (axis cs:0.33,1.55) {\footnotesize $\epsilon_\text{\tiny AF}^\text{\tiny M}$};
	\end{axis}
\end{tikzpicture}\\
{\footnotesize (a) $\bar{B}_t= 2\kappa\log_2\left(1+(1+\epsilon)\gamma_u\right)$} & {\footnotesize (b) $\bar{B}_t=2\kappa$}
\end{tabular}

\caption{The comparison of the limiting SINR of the analog and quantized feedback for different cooperation schemes vs. the feedback rates. Parameters: $\beta=0.6$, $\epsilon=0.6$,$\gamma_d=10$ dB, $\gamma_u=0$ dB}  %
\label{fig:comparison_AFandDF_vsBt}%
\end{figure}

%\begin{figure}
%\centering
%\setlength{\tabcolsep}{-4pt}
%\begin{tabular}{cc}
%\includegraphics[scale=0.6]{ex_finiteVSasymp_MCP_analog}&
%\includegraphics[scale=0.6]{ex_finiteVSasymp_MCP_RVQ} \\[-5pt]
%{\footnotesize (a) MCP with analog feedback} & {\footnotesize (b) MCP with quantized feedback}
%\end{tabular}
%\caption{The comparison of finite-size and limiting SINR for different SNRs. Parameters: $\beta=0.6$, $\epsilon=0.6$,$\gamma_d=10$ dB, $\gamma_u=0$ dB (for analog feedback) and $\bar{B}_t=2$ (for quantized feedback).}  %
%\label{fig:finiteVsasymp_MCP_analog}%
%\end{figure}

\section{Conclusion}\label{sec:conclusion}
In this paper, we perform feedback optimization for the analog and quantized feedback schemes in a symmetric two-cell network with different levels of cooperation between base stations. In both cooperation schemes, it is shown that more resources, uplink transmit power in the case of analog feedback or feedback bits in the case of quantized feedback, are allocated to feeding back the interfering channel information as the interfering channel gain increases. Moreover, if the interfering channel gain is below a certain threshold, the conventional network with no cooperation between base stations is preferable. Our analysis also shows that the limiting SINR for MCP, in both analog and quantized feedback, improves in $\epsilon$ if $\epsilon$ is above certain threshold. This also implies that above that threshold the (total) quality of the channel at the base stations is also getting better. Although our analysis is performed in the asymptotic regime, our numerical results hint to their validity in the finite-size system cases.  Future works could consider a more general channel model such as analog feedback through MAC channels. Furthermore, feedback reduction problems in which the users or groups of users have different path-loss gains can be interesting to explore.

%%%%%%%%%%%%%%%%%%%%%%%%%%%%%%%%%%%%%%%%%%%%%%%%%%%%%%%%%%%%%%
\appendices
\section{Some Results in Random Matrix theory}
For the clarity in presentation, in this  section we will list some results in random matrix theory that will be used to derive the large system results in this work. 
%\begin{lemma}[Lemma 5.1, \cite{Liang_it07}]\label{th:Lemma51_LiangIT07}
%Consider a vector $\bx_j=[x_{1j} \cdots x_{Nj}]^T$. Suppose that for all $i$ and $j$, $x_{ij}$'s are i.i.d random variables with $\mathbb{E}[x_{11}]=0, \mathbb{E}[|x_{11}|^2]=1$, and $\mathbb{E}[|x_{11}|^4] < \infty$ and $K/N \to \beta$ as $K,N \to \infty$. Moreover, suppose that random matrices $\{\bA_{jk}, j\leq K, k\leq N^\eta\}$ are independent of $\{x_{1j,\cdots,x_{Nj}}\}$ for each $j$ and have a uniform upper bound in the spectral norm, $\|\bA_{jk}\| \leq C_0$, where $\eta, C_0 > 0$. Then
%\[
%\underset{j,k}{\max \ }\left|\bx_j\bA_{kj}\bx_j^H-\frac{1}{N}\Tr\bA_{jk}\right| \as 0.
%\] 
%\end{lemma}

\begin{lemma}[{\cite[Lemma 1]{Evans_it00}}]\label{th:Lemma1_jse}
Let $\mathbf{A}$ be a deterministic  $N\times N$ complex matrix with uniformly bounded spectral radius for all $N$. Let $\mathbf{q}=\frac{1}{\sqrt{N}}[q_1,q_2,\cdots,q_N]^T$ where the $q_i$'s are i.i.d with zero mean, unit variance and finite eight moment. Let $\mathbf{r}$ be a similar vector independent of $\mathbf{q}$. Then, we have
\[
\mathbf{qAq}^H - \frac{1}{N}\TR{\mathbf{A}}  \as 0, \text{ and } \mathbf{qAr}^H \as 0.
\]
\end{lemma}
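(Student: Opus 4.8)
The plan is to establish the almost sure convergence by the classical route for such trace lemmas: identify the mean, bound the fourth central moment by $O(N^{-2})$, and then invoke Markov's inequality together with the Borel--Cantelli lemma. The finite eighth moment hypothesis is exactly what is needed to make the fourth--moment estimate go through. Throughout I write the entries of $\mathbf{q}$ as $\frac{1}{\sqrt N}q_i$, and I use that, by the hypothesis on $\mathbf{A}$, the entries $A_{ij}$ are uniformly bounded in $i,j,N$ and $\|\mathbf{A}\|_F^2 = \sum_{i,j}|A_{ij}|^2 = O(N)$.

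\emph{The quadratic form.} First I would compute the mean: since $\mathbf{q}\mathbf{A}\mathbf{q}^H = \frac{1}{N}\sum_{i,j}q_i A_{ij}\overline{q_j}$ and $\mathbb{E}[q_i\overline{q_j}]=\delta_{ij}$, we get $\mathbb{E}[\mathbf{q}\mathbf{A}\mathbf{q}^H]=\frac{1}{N}\TR{\mathbf{A}}$. Writing the centred quantity as
\[
Y_N := \mathbf{q}\mathbf{A}\mathbf{q}^H - \tfrac{1}{N}\TR{\mathbf{A}} = \frac{1}{N}\sum_{i}(|q_i|^2-1)A_{ii} + \frac{1}{N}\sum_{i\neq j}q_i A_{ij}\overline{q_j} =: D_N + O_N,
\]
I would bound $\mathbb{E}[|Y_N|^4]$ termwise. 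For $D_N$: it is $N^{-1}$ times a sum of independent, zero--mean variables, so a Rosenthal--type inequality gives $\mathbb{E}[|D_N|^4] \le C\,N^{-4}\big[(\sum_i|A_{ii}|^2)^2 + \sum_i|A_{ii}|^4\,\mathbb{E}(|q_i|^2-1)^4\big]=O(N^{-2})$, where the first bracket term is $O(N^2)$, the second is $O(N)$, and finiteness of $\mathbb{E}(|q_i|^2-1)^4$ uses the eighth moment. For $O_N$: expanding $\mathbb{E}[|O_N|^4]$ and using that the $q_i$ have zero mean and $i\neq j$ in each factor, only index--coincidence patterns in which every $q$-index is matched survive; a routine enumeration of the (boundedly many) such patterns, in which each surviving pattern contributes at most $O(N^{-2})$ — the worst case being two decoupled trace--type sums of the form $\sum_{i,j}|A_{ij}|^2$ or $\TR{(\mathbf{A}\mathbf{A}^H)}$, each $O(N)$, against an $N^{-4}$ prefactor — gives $\mathbb{E}[|O_N|^4]=O(N^{-2})$. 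Hence $\mathbb{E}[|Y_N|^4]\le C/N^2$, so $\sum_N \mathbb{P}(|Y_N|>\varepsilon)\le \sum_N C/(\varepsilon^4 N^2)<\infty$ for every $\varepsilon>0$, and Borel--Cantelli yields $Y_N \as 0$.

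\emph{The bilinear form.} Here $\mathbf{q}\mathbf{A}\mathbf{r}^H = \frac{1}{N}\sum_{i,j}q_i A_{ij}\overline{r_j}$ has mean zero, since $\mathbb{E}[\overline{r_j}]=0$ and $\mathbf{r}$ is independent of $\mathbf{q}$. Conditioning on $\mathbf{q}$ turns it into $N^{-1}$ times a sum of independent zero--mean variables in the $r_j$, so a Rosenthal--type bound gives $\mathbb{E}[|\mathbf{q}\mathbf{A}\mathbf{r}^H|^4\mid \mathbf{q}]\le C\,N^{-4}\big[(\sum_j |(\mathbf{q}\mathbf{A})_j|^2)^2 + \sum_j |(\mathbf{q}\mathbf{A})_j|^4\,\mathbb{E}|r_j|^4\big]$; taking expectations over $\mathbf{q}$, using $\mathbb{E}[\sum_j|(\mathbf{q}\mathbf{A})_j|^2]=\frac1N\TR{(\mathbf{A}\mathbf{A}^H)}=O(1)$ together with crude bounds for the quartic term (which need only fourth moments of the entries), yields $\mathbb{E}[|\mathbf{q}\mathbf{A}\mathbf{r}^H|^4]=O(N^{-2})$. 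Borel--Cantelli then gives $\mathbf{q}\mathbf{A}\mathbf{r}^H \as 0$.

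\emph{Main obstacle.} The only genuinely delicate bookkeeping is the fourth--moment expansion of the off--diagonal quadratic term $O_N$ (and, to a lesser degree, the quartic term in the bilinear case): one must enumerate the ways the eight index slots can coincide so as to give a nonzero expectation, respect the constraint $i\neq j$ inside each factor, and verify that the associated sums of products of entries of $\mathbf{A}$ are $O(N^2)$ rather than $O(N^3)$ or $O(N^4)$, so that after the $N^{-4}$ prefactor one is left with $O(N^{-2})$. The eighth moment hypothesis enters only to guarantee that all scalar moments appearing in these expansions (up to $\mathbb{E}|q_i|^8$) are finite; everything else is a matter of organising the combinatorics.
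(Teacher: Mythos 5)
Your proposal is correct, and it is essentially the argument behind the cited result: the paper itself does not prove this lemma (it is quoted from \cite[Lemma 1]{Evans_it00}), and the standard proof there is exactly your route — center the quadratic form, bound the fourth moment by $O(N^{-2})$ using the finite eighth moment and $\TR{\mathbf{A}\mathbf{A}^H}=O(N)$, and conclude by Markov plus Borel--Cantelli, with the bilinear form handled by conditioning. Two small points to tidy: the hypothesis has to be read as a uniform bound on the spectral \emph{norm} (for non-normal $\mathbf{A}$ a bounded spectral radius alone would not give bounded entries, nor would the lemma itself hold), and in the bilinear-form display your $N^{-4}$ prefactor corresponds to taking $(\mathbf{q}\mathbf{A})_j$ without the $1/\sqrt{N}$ normalization while the subsequent identity $\mathbb{E}\big[\sum_j|(\mathbf{q}\mathbf{A})_j|^2\big]=\frac{1}{N}\TR{\mathbf{A}\mathbf{A}^H}=O(1)$ uses the normalized one — with consistent bookkeeping either way you still land on the claimed $O(N^{-2})$ bound.
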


\begin{theorem}[\cite{Hanly_it01}]\label{th:Girko}
Let $\bH$ be a $\lfloor cN \rfloor \times \lfloor dN \rfloor $ random matrix with independent entries $[\bH]_{ij}$ which are zero mean and variance $\mathbb{E}\left[|[\bH]_{ij}|^2\right]=N^{-1}\mathsf{P}_{ij}$, such that $\mathsf{P}_{ij}$ are uniformly bounded from above. For each $N$, let
\[
v_N(x,y) : [0,c]\times [0,d] \to \mathbb{R}
\]
be the \textit{variance profile} function given by
\[
v_N(x,y)=\mathsf{P}_{ij}, \quad \frac{i}{N} \leq x \leq \frac{i+1}{N},  \quad \frac{j}{N} \leq y \leq \frac{j+1}{N}.
\]
Suppose that $v_N(x,y)$ converges uniformly to a limiting bounded function $v(x,y)$. Then, for each $a,b \in [0,c], a<b$, and $z\in \mathbb{C}^+$
\[
\frac{1}{N}\sum_{i=\lfloor aN \rfloor}^{\lfloor bN \rfloor} \left[\left(\bH\bH^H - z \bI\right)^{-1}\right]_{ii} \ip \int_a^b u(x,z)\ dz
\] 
where $u(x,z)$ satisfies 
\[
u(x,z)=\frac{1}{-z+\int_0^{d} \frac{v(x,y) dy}{1+\int_0^c u(w,z)v(w,y)dw}}
\]
for every $x \in [0,c]$. The solution always exists and is unique in the class of functions $u(x,z) \geq 0$, analytic on $z \in \mathbb{C}^+$ and continuous on $x\in [0,c]$.  
Moreover, almost surely, the empirical eigenvalue distribution of $\bH\bH^H$ converges weakly to a limiting distribution whose Stieltjes transform is given by $\int_0^1 u(x,z)\ dx$
\end{theorem}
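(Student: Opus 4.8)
The plan is to prove the result by the resolvent (Stieltjes transform) method: derive a pair of coupled fixed-point equations for the diagonal entries of the two resolvents $\bQ(z)=(\bH\bH^{H}-z\bI)^{-1}$ and $\widetilde{\bQ}(z)=(\bH^{H}\bH-z\bI)^{-1}$, and then pass to the large-$N$ limit. Fix $z\in\mathbb{C}^{+}$; then $\|\bQ(z)\|\le 1/\Im z$ and $\|\widetilde{\bQ}(z)\|\le 1/\Im z$ uniformly in $N$, and this uniform boundedness controls every error term. Write $\bh_i^{T}$ for the $i$-th row of $\bH$ (length $\lfloor dN\rfloor$, independent entries of variance $N^{-1}\mathsf{P}_{ij}$) and let $\bH_{(i)}$ be $\bH$ with its $i$-th row removed. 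Block inversion of $\bH\bH^{H}-z\bI$ (Schur complement on row/column $i$), combined with the push-through identity $\bH_{(i)}^{H}(\bH_{(i)}\bH_{(i)}^{H}-z\bI)^{-1}\bH_{(i)}=\bI+z(\bH_{(i)}^{H}\bH_{(i)}-z\bI)^{-1}$, gives the exact identity
\[
[\bQ(z)]_{ii}=\frac{-1}{z\left(1+\bh_i^{T}\,\widetilde{\bQ}_{(i)}(z)\,\overline{\bh_i}\right)},
\qquad
\widetilde{\bQ}_{(i)}(z)=(\bH_{(i)}^{H}\bH_{(i)}-z\bI)^{-1}.
\]

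Next I would replace the random quadratic form by its conditional mean. A variance-profile version of the trace lemma (Lemma~\ref{th:Lemma1_jse}, applied after normalising each coordinate of $\bh_i$ by its standard deviation) gives $\bh_i^{T}\widetilde{\bQ}_{(i)}\overline{\bh_i}-N^{-1}\sum_{j}\mathsf{P}_{ij}[\widetilde{\bQ}_{(i)}]_{jj}\ip 0$, and a rank-one perturbation bound (the resolvents $\widetilde{\bQ}_{(i)}$ and $\widetilde{\bQ}$ differ by a rank-one matrix, so the trace of the difference against a bounded-norm matrix is $O(1/N)$) then lets me replace $\widetilde{\bQ}_{(i)}$ by $\widetilde{\bQ}$. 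The symmetric computation for $\bH^{H}\bH$ produces the companion relation $[\widetilde{\bQ}(z)]_{jj}\approx -[z(1+N^{-1}\sum_{i}\mathsf{P}_{ij}[\bQ]_{ii})]^{-1}$. Using the hypothesised uniform convergence $v_N\to v$ and a compactness argument over $x\in[0,c]$ to make these estimates uniform in the index, the limiting diagonal profiles $u(x,z)=\lim[\bQ(z)]_{\lfloor xN\rfloor\lfloor xN\rfloor}$ and $\widetilde{u}(y,z)=\lim[\widetilde{\bQ}(z)]_{\lfloor yN\rfloor\lfloor yN\rfloor}$ satisfy $u(x,z)=-[z(1+\int_0^{d}v(x,y)\widetilde{u}(y,z)\,dy)]^{-1}$ and $\widetilde{u}(y,z)=-[z(1+\int_0^{c}v(w,y)u(w,z)\,dw)]^{-1}$. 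Substituting the second into the first collapses the system to the single fixed-point equation for $u(x,z)$ in the statement, and integrating $[\bQ(z)]_{ii}\to u(i/N,z)$ over $i\in[\lfloor aN\rfloor,\lfloor bN\rfloor]$ yields $\frac{1}{N}\sum_{i=\lfloor aN\rfloor}^{\lfloor bN\rfloor}[\bQ(z)]_{ii}\ip\int_a^b u(x,z)\,dx$.

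It remains to (i) establish existence and uniqueness of $u$ in the stated class, (ii) upgrade to almost-sure convergence, and (iii) deduce weak convergence of the spectral distribution. For (i), the map sending $(u,\widetilde{u})$ to the pair of right-hand sides preserves, for each fixed $x$ (resp.\ $y$), the Nevanlinna class of functions of $z$ (analytic on $\mathbb{C}^{+}$, nonnegative imaginary part, $z u(x,z)\to -1$ as $z\to\infty$), and on this class it is a contraction for a suitable metric --- e.g.\ the hyperbolic (Carath\'eodory) metric on the half-plane, with contraction constant governed by $|z|$ and $\Im z$ --- so a Banach fixed-point / monotone-iteration argument (starting from $u_0=\widetilde{u}_0=-1/z$) gives a unique solution. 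For (ii), revealing the rows of $\bH$ one at a time, the Doob martingale $\mathbb{E}[N^{-1}\Tr\bQ(z)\mid\mathcal{F}_k]$ has increments $O(1/N)$ by the same rank-one bound, so Azuma--Hoeffding (applied to real and imaginary parts) gives exponential concentration of $N^{-1}\Tr\bQ(z)$, and of the partial sums, about their means; Borel--Cantelli along a countable dense set of $z$, extended by analyticity, then gives the almost-sure statements. For (iii), the Stieltjes transform of the empirical eigenvalue distribution of $\bH\bH^{H}$ equals $\lfloor cN\rfloor^{-1}\Tr\bQ(z)$, which by (ii) with $a=0$, $b=c$ converges to $c^{-1}\int_0^{c}u(x,z)\,dx$, i.e.\ (after rescaling the index variable to $[0,1]$) the claimed $\int_0^{1}u(x,z)\,dx$; since this limit is the Stieltjes transform of a probability measure --- tightness follows because $\lfloor cN\rfloor^{-1}\Tr(\bH\bH^{H})$ is bounded, $\mathsf{P}_{ij}$ being uniformly bounded --- the continuity theorem for Stieltjes transforms yields weak convergence.

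The step I expect to be the main obstacle is the passage from the exact identity to the coupled system: one must justify, uniformly in $i$ and $N$, the replacement of $\bh_i^{T}\widetilde{\bQ}_{(i)}\overline{\bh_i}$ by $N^{-1}\sum_j\mathsf{P}_{ij}[\widetilde{\bQ}]_{jj}$, and show the denominators $1+\bh_i^{T}\widetilde{\bQ}_{(i)}\overline{\bh_i}$ stay bounded away from $0$ so that $w\mapsto -1/(zw)$ is Lipschitz on the relevant region --- this is exactly where $\Im z>0$ enters, forcing $\Im(z(1+\bh_i^{T}\widetilde{\bQ}_{(i)}\overline{\bh_i}))$ to be bounded below. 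The contraction estimate underlying (i) is the other genuinely technical ingredient.
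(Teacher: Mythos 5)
The paper does not actually prove this statement: Theorem~\ref{th:Girko} is imported verbatim from the cited reference \cite{Hanly_it01} and is used only as a tool in the appendices, so there is no in-paper proof to compare against. Your proposal is, in essence, a faithful reconstruction of the standard proof of this Girko-type variance-profile result in that literature: the Schur-complement identity $[\bQ]_{ii}=-\bigl[z\bigl(1+\bh_i^{T}\widetilde{\bQ}_{(i)}\overline{\bh_i}\bigr)\bigr]^{-1}$ is correct, the weighted trace lemma plus rank-one perturbation step is the right way to decouple, the coupled system for $u$ and $\widetilde u$ does collapse (after substituting $z\widetilde u(y,z)=-\bigl(1+\int_0^c v(w,y)u(w,z)\,dw\bigr)^{-1}$) to exactly the fixed-point equation in the statement, and the Nevanlinna-class contraction argument for existence/uniqueness together with Azuma--Hoeffding concentration and the Stieltjes continuity theorem is how the almost-sure and weak-convergence claims are obtained.

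Two caveats worth making explicit if you were to write this out in full. First, the trace lemma you invoke (Lemma~\ref{th:Lemma1_jse}) assumes finite eighth moments, whereas the theorem as stated only assumes finite variance with bounded $\mathsf{P}_{ij}$; the standard fix is a truncation-and-centralization step on the entries before applying the quadratic-form concentration, and your sketch silently skips it. Second, your observation about normalization is correct and should not be glossed over: the Stieltjes transform of the empirical spectral distribution of $\bH\bH^{H}$ is $\lfloor cN\rfloor^{-1}\TR{\bQ(z)}$, which converges to $c^{-1}\int_0^{c}u(x,z)\,dx$, so the expression $\int_0^{1}u(x,z)\,dx$ appearing in the statement is only literally the Stieltjes transform after rescaling the row index to $[0,1]$ (or when $c=1$, which is how the paper uses it in Appendix~\ref{App:SINR_Coord}); as written in the statement it is a normalization slip (as is the $dz$ that should read $dx$ in the partial-trace display), and your proof correctly identifies and repairs it.
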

In the theorem above $x$-axis and $y$-axis refer to the rows and columns of the matrix $\bH$, respectively.

\section{Large system results for the Network MIMO}
First, we will expand the SINR expression in \eqref{eq:sinr_mcp}. Let $\bPhi_{k,j}=\text{diag}\{\phi_{k,j,1}, \phi_{k,j,2}\}$. Based on \eqref{eq:ch_model} we can write $\bh_{k,j}=\hh_{k,j}\bPhi_{k,j}^\half+\ih_{k,j}$.
Consequently, the SINR can be expressed as
\be\label{eq:sinr_mcp_app}
\tSINR_{k,j}=\frac{c^2\left|(\hh_{k,j}\bPhi_{k,j}^\half+\ih_{k,j})\left(\hH^H\hH+\alpha \mathbf{I}\right)^{-1}\hh^H_{k,j}\right|^2}{c^2(\hh_{k,j}\bPhi^\half_{k,j}+\ih_{k,j})\left(\hH^H\hH+\alpha \mathbf{I}\right)^{-1}\hH^H_{k,j}\hH_{k,j}\left(\hH^H\hH+\alpha \mathbf{I}\right)^{-1}(\hh_{k,j}\bPhi^\half_{k,j}+\ih_{k,j})^H + \sigma^2_d}.
\ee
It holds that $\left(\hH^H\hH+\alpha \mathbf{I}\right)^{-1}=\left(\hH^H_{k,j}\hH_{k,j}+ \hh_{k,j}^H\hh_{k,j} + \alpha \mathbf{I}\right)^{-1}$.
By applying the matrix inversion lemma (MIL) to the RHS of the (previous) equation, we obtain
\be\label{eq:inversion_mcp}
\left(\hH^H\hH+\alpha \mathbf{I}\right)^{-1}=\left(\hH^H_{k,j}\hH_{k,j}+\alpha \mathbf{I}\right)^{-1} - \frac{\left(\hH^H_{k,j}\hH_{k,j}+\alpha \mathbf{I}\right)^{-1}\hh_{k,j}^H\hh_{k,j}\left(\hH^H_{k,j}\hH_{k,j}+\alpha \mathbf{I}\right)^{-1}}{1+\hh_{k,j}\left(\hH^H_{k,j}\hH_{k,j}+\alpha \mathbf{I}\right)^{-1}\hh_{k,j}^H}.
\ee
Let $\rho=\frac{\alpha}{N}$ and $\mathbf{O}_{k,j} = \left(\frac{1}{N}\hH^H_{k,j}\hH_{k,j}+\rho \mathbf{I}\right)^{-1}$. Then, we can write \eqref{eq:inversion_mcp} as $\frac{1}{N}\mb{Z}_{k,j}$, where 
\[
\mb{Z}_{k,j}=\mathbf{O}_{k,j}-\frac{\mathbf{O}_{k,j}\left(\frac{1}{N}\hh_{k,j}^H\hh_{k,j}\right)\mathbf{O}_{k,j}}{1+\frac{1}{N}\hh_{k,j}\mb{O}_{k,j}\hh_{k,j}^H}.
\] 
Thus, \eqref{eq:sinr_mcp_app} can be expressed as 
\be\label{eq:sinr_mcp_expand}
\tSINR_{k,j}=\frac{c^2\left|\displaystyle \frac{\breve{A}_{k,j}+F_{k,j}}{1+A_{k,j}}\right|^2}{c^2\left(B_{k,j}+2\Re\left[D_{k,j}\right]+E_{k,j} \right) + \sigma^2_d},
\ee
where
\begin{align*}
\breve{A}_{k,j} &= \frac{1}{N}\hh_{k,j}\bPhi_{k,j}^\half\mb{O}_{k,j}\hh^H_{k,j}\\
%\frac{1}{N}\hh_{k,j}\bPhi_{k,j}^\half\left(\frac{1}{N}\hH^H_{k,j}\hH_{k,j}+\rho \mathbf{I}\right)^{-1}\hh^H_{k,j}\\
A_{k,j} &= =\frac{1}{N}\hh_{k,j}\mb{O}_{k,j}\hh^H_{k,j}\\
%\frac{1}{N}\hh_{k,j}\left(\frac{1}{N}\hH^H_{k,j}\hH_{k,j}+\rho \mathbf{I}\right)^{-1}\hh^H_{k,j}\\
F_{k,j} &= =\frac{1}{N}\ih_{k,j}\bPhi_{k,j}^\half\mb{O}_{k,j}\hh^H_{k,j}\\
%\frac{1}{N}\ih_{k,j}\left(\frac{1}{N}\hH^H_{k,j}\hH_{k,j}+\rho \mathbf{I}\right)^{-1}\hh^H_{k,j}\\
B_{k,j} &= \frac{1}{N}\hh_{k,j}\bPhi_{k,j}^\half\mathbf{Z}_{k,j}\left(\frac{1}{N}\hH^H_{k,j}\hH_{k,j}\right)\mathbf{Z}_{k,j}\bPhi_{k,j}^\half\hh_{k,j}^H\\
D_{k,j} &= \frac{1}{N}\hh_{k,j}\bPhi_{k,j}^\half\mathbf{Z}_{k,j}\left(\frac{1}{N}\hH^H_{k,j}\hH_{k,j}\right)\mathbf{Z}_{k,j}\ih_{k,j}^H\\
E_{k,j} &= \frac{1}{N}\ih_{k,j}\mathbf{Z}_{k,j}\left(\frac{1}{N}\hH^H_{k,j}\hH_{k,j}\right)\mathbf{Z}_{k,j}\ih_{k,j}^H.
\end{align*}
%with $\rho=\frac{\alpha}{N}$, $\mathbf{O}_{k,j} = \left(\frac{1}{N}\hH^H_{k,j}\hH_{k,j}+\rho \mathbf{I}\right)^{-1}$ and $\mathbf{Z}_{k,j}=\left[\mathbf{O}_{k,j}-\frac{\mathbf{O}_{k,j}\left(\frac{1}{N}\hh_{k,j}^H\hh_{k,j}\right)\mathbf{O}_{k,j}}{1+A_{k,j}}\right]$. 
Note that for the analog feedback, $\bPhi_{k,j}=\bI, \forall k,j$.
In the following subsections, the large system limit of each term in the SINR \eqref{eq:sinr_mcp_expand}, for the analog and quantized feedback cases, will be derived. For brevity in the following presentation, we also denote $\mathbf{Q}_{k,j} = \mathbf{O}_{k,j} \left(\frac{1}{N}\hH^H_k\hH_{k,j}\right)\mathbf{O}_{k,j}$.

\subsection{Proof of Theorem 1: Analog Feedback case}\label{App:SINR_MCP}
%To prove the theorem, we will derive the large system limit for each term in the SINR expression \eqref{eq:sinr_mcp_expand}. 
In the analog feedback case, as mentioned in Section \ref{sec:ana_feedback}, $\hh_{k,i,j}\sim\mathcal{CN}(0,\omega_{ij}\bI_N)$ and $\ih_{k,i,j}\sim\mathcal{CN}(0,\delta_{ij}\bI_N)$ are independent.  We can rewrite those vectors respectively as 
\[
\hh_{k,j}=\bg_{k,j}\sG_{k,j}^\half  \text{ and } \ih_{k,j}=\bd_{k,j} \sD_{k,j}^\half,
\]
where $\bg_{k,j} \sim \mathcal{CN}(0,\bI_{2N})$ and $\bd_{k,j} \sim \mathcal{CN}(0,\bI_{2N})$ are independent. The diagonal matrices $\sG_{k,j}$ and $\sD_{k,j}$ are given by
$\sG_{k,j}=\text{diag}\{\omega_{j1}\bI_N,\omega_{j2}\bI_N\} \text{ and }  \sD_{k,j}=\text{diag}\{\delta_{j1}\bI_N,\delta_{j2}\bI_N\}$, respectively. 

Since in the analysis below,  we heavily use Theorem \ref{th:Girko}, let us define the asymptotic variance profile for the matrix $\frac{1}{N}\hH^H$ which is a $2N \times 2\beta N$ complex random matrix. Following Theorem \ref{th:Girko}, in our case, we have $x \in [0,2]$, $y \in [0,2\beta]$ and the asymptotic variance profile is given by 
\[
v(x,y)=\begin{cases}
\omega_d & 0 \leq x < 1, 0 \leq y < \beta \\
\omega_c & 1 \leq x < 2, 0 \leq y < \beta \\
\omega_c & 0 \leq x < 1, \beta \leq y < 2\beta \\
\omega_d & 1 \leq x < 2, \beta \leq y < 2\beta. 
\end{cases}
\]  

In what follows, we will derive the large system limit for the terms in \eqref{eq:sinr_mcp_expand}.

\subsubsection{\texorpdfstring{$\breve{A}_{k,j}$}{LSA for breve Akj}}
 It can rewritten as $\frac{1}{N}\bg_{k,j}\sG_{k,j}^\half\bPhi_{k,j}^\half \bO_{k,j} \sG_{k,j}^\half\bg_{k,j}^H.$
Applying Lemma \ref{th:Lemma1_jse} yields 
\[
\breve{A}_{k,j} - \frac{1}{N}\Tr\left(\sG_{k,j}\bPhi_{k,j}^\half\bO_{k,j}\right)\as 0.
\]
The second term in the LHS can be written as
$
\frac{1}{N}\left[ \sum_{i=1}^N \omega_{j1}\sqrt{\phi_{k,j,1}}[\bO_{k,j}]_{ii} +	\sum_{i=N+1}^{2N} \omega_{j2}\sqrt{\phi_{k,j,2}}[\bO_{k,j}]_{ii}\right]. 
$
By applying Theorem \ref{th:Girko}, it converges in probability to
\[
	\omega_{j1}\sqrt{\phi_{k,j,1}}\int_0^1 u(x,-\rho)\ dx + \omega_{j2}\sqrt{\phi_{k,j,2}}\int_1^2 u(x,-\rho)\ dx,
\]
where for $0 \leq x \leq 1$,
\[
		u(x,-\rho)=u_1=\frac{1}{\rho+\frac{\beta\omega_d }{1+u_1\omega_d +u_2\omega_c }+ \frac{\beta\omega_c }{1+u_1\omega_c + u_2\omega_d }}
\]
and for $1 < x \leq 2$,
\[
	u(x,-\rho)=u_2=\frac{1}{\rho+\frac{\beta\omega_c }{1+u_1\omega_d +u_2\omega_c}+ \frac{\beta\omega_d }{1+u_1\omega_c  + u_2\omega_{d}}}.
\]
The solution of the equations above is $u_1=u_2=u$ where $u$ is the positive solution of 
\[
	u=\frac{1}{\rho+\frac{\beta(\omega_d + \omega_c)}{1+ u(\omega_d+\omega_c)}}.
\]

Let $g(\beta,\rho)$,  be the solution of $g(\beta,\rho)=\left(\rho+\frac{\beta}{1+g(\beta,\rho)}\right)^{-1}$.	
Then, we can express $u$ in term of $g(\beta,\rho)$ as
\[
		u=\frac{1}{\omega_d+ \omega_c}g(\beta,\bar{\rho}), \quad\quad \bar{\rho}=\frac{\rho}{\omega_d+ \omega_c}.
\]
We should note that $\omega_{j1}+ \omega_{j2}=\omega_d+ \omega_c,\ \forall j$. Thus,
\[
		\breve{A}_{k,j} - \breve{A}^\infty_{k,j} \ip 0, \text{ with } \breve{A}^\infty_{k,j}=\frac{\sqrt{\phi_{k,j,1}}\omega_{j1}+ \sqrt{\phi_{k,j,2}}\omega_{j2}}{\omega_d+ \omega_c}g(\beta,\bar{\rho}).
\]
Since in the current feedback scheme $\bPhi_{k,j}=\bI_{2N}$  then $\breve{A}^\infty_{k,j}=g(\beta,\bar{\rho})$.	Moreover, $\breve{A}^\infty_{k,j}=\breve{A}^\infty$ is the same for all users in both cells.
				%Applying some results above, we have
					%\begin{align}
					%A_{k,t} \to \frac{1}{2}\left[\sqrt{\phi_d}\omega_d\int_0^1 u \ dx + \sqrt{\phi_c}\omega_c\int_1^2 u \ dx  \right]=\half u(\sqrt{\phi_d}\omega_d+\sqrt{phi_c}\omega_c)=\frac{\sqrt{\phi_d}\omega_d+\sqrt{\phi_c}\omega_c}{\omega_d+\omega_c}g(\beta,\bar{\rho}).
					%\end{align}
\subsubsection{\texorpdfstring{$A_{k,j}$}{LSA for Akj}}
This term is $\breve{A}_{k,j}$ with $\bPhi_{k,i}=\bI_{2N}$. Thus, it follows that $A_{k,j} - g(\beta,\bar{\rho}) \ip 0$.

\subsubsection{\texorpdfstring{$F_{k,j}$}{LSA for Fkj}}\label{ss:Fkj}
 It can be rewritten as $\frac{1}{N}\bd_{k,j}\sD_{k,j}^\half\bO_{k,j}\sG_{k,j}^\half\bg_{k,j}^H.$ Conditioning on $\hH_{k,j}$, it is obvious that $\bd_{k,j}$, $\bg_{k,j}$ and $\sD_{k,j}^\half\bO_{k,j}\sG_{k,j}^\half$ are independent of each other. By Lemma 1, it follows that $F_{k,j}\as 0$.

\subsubsection{\texorpdfstring{$D_{k,j}$}{LSA for Dkj}}
Expanding $D_{k,j}$, we have
\begin{align}
D_{k,j} &= \frac{1}{N}\hh_{k,j}\bPhi^\half\bQ_{k,j}\ih_{k,j}^H + 
\frac{\breve{A}_{k,j}F_{k,j}^*\left(\frac{1}{N}\hh_{k,j}\bQ_{k,j}\hh_{k}^H\right)}{(1+A_{k,j})^2} -\frac{F_{k,j}^*\left(\frac{1}{N}\hh_{k,j}\bPhi^\half\bQ_{k,j}\hh_{k,j}^H\right)}{1+A_{k,j}}
         -\frac{\breve{A}_{k,j}\left(\frac{1}{N}\hh_{k,j}\bQ_{k,j}\ih_{k,j}^H\right)}{1+A_{k,j}} \notag\\
					&=D_{k,j}^{(1)} + D_{k,j}^{(2)} - D_{k,j}^{(3)} - D_{k,j}^{(4)}. \label{eq:Dkj_MCP_AF}
\end{align}
Following the arguments in $\textit{4)}$, it can be checked that  $D_{k,j}^{(1)} \as 0$.
Similarly, $D_{k,j}^{(4)}\as 0$. Moreover,  since $ F_{k,j} \as 0$ then it follows that
$D_{k,j}^{(2)}\as 0$ and $D_{k,j}^{(3)}\as 0$. Combining the results yields $D_{k,j} \as 0$.

\subsubsection{\texorpdfstring{$B_{k,j}$}{LSA for Bkj}}
It can be rewritten as 
\begin{align}
B_{k,j}&=\frac{1}{N}\hh_{k,j}\bPhi_{k,j}^\half\bQ_{k,j}\bPhi_{k,j}^\half\hh_{k,j}^H + \frac{|\breve{A}_{k,j}|^2\left(\frac{1}{N}\hh_{k,j}\bQ_{k,j}\hh_{k,j}^H\right)}{(1+A_{k,j})^2}-\frac{2\Re\left[\breve{A}_{k,j}^*\left(\frac{1}{N}\hh_{k,j}\bPhi_{k,j}^\half\bQ_{k,j}\hh_{k,j}^H\right)\right]}{1+A_{k,j}} \notag\\
&=B_{k,j}^{(1)}+\frac{|\breve{A}_{k,j}|^2B_{k,j}^{(2)}}{(1+A_{k,j})^2}-\frac{2\Re\left[\breve{A}_{k,j}^*B_{k,j}^{(3)}\right]}{1+A_{k,j}}. \label{eq:Bkj_MCP_AF}
\end{align}
From Lemma \ref{th:Lemma1_jse}, we can show $B_{k,j}^{(1)} -\frac{1}{N}\TR{\sG_{k,j}\bPhi_{k,j}\bQ_{k,j}} \as 0$.
%
%The second term in the LHS equals to
%\begin{align}
%& \frac{1}{N} \left\{ \sum_{i=1}^N \omega_{j1}\left[\left(\frac{1}{N}\hH^H_{k,j}\hH_{k,j}+\rho \mathbf{I}\right)^{-1}\frac{1}{N}\hH^H_{k,j}\hH_{k,j}\left(\frac{1}{N}\hH^H_{k,j}\hH_{k,j}+\rho \mathbf{I}\right)^{-1}\right]_{ii} \right. \notag\\
	%& \qquad\qquad \left. +\sum_{i=N+1}^{N} \omega_{j2}\left[\left(\frac{1}{N}\hH^H_{k,j}\hH_{k,j}+\rho \mathbf{I}\right)^{-1}\frac{1}{N}\hH^H_{k,j}\hH_{k,j}\left(\frac{1}{N}\hH^H_{k,j}\hH_{k,j}+\rho \mathbf{I}\right)^{-1}\right]_{ii}\right\}. \label{eq:TraceBkj_MCP_AF}
%\end{align} 
It can be shown that the following equality holds 
\begin{align}
(\bX\bX^H+\zeta \bI)^{-1}\bX\bX^H(\bX\bX^H+\zeta \bI)^{-1}
%&=(\bX\bX^H+\zeta \bI)^{-1}(\bX\bX^H+\zeta\bI-\zeta\bI)(\bX\bX^H+\zeta \bI)^{-1}\\
%&=(\bX\bX^H+\zeta \bI)^{-1}\left[\bI-\zeta(\bX\bX^H+\zeta \bI)^{-1}\right]\\
%&=(\bX\bX^H+\zeta \bI)^{-1}-\zeta(\bX\bX^H+\zeta \bI)^{-2}\\
=(\bX\bX^H+\zeta \bI)^{-1}+\zeta\frac{\partial}{\partial \zeta}(\bX\bX^H+\zeta \bI)^{-1}. \label{eq:Qkj_expand}
\end{align}
Thus, we have $\bQ_{k,j}=\bO_{k,j}+\rho\frac{\partial }{\partial \rho}\bO_{k,j}$. By applying Theorem \ref{th:Girko}, we obtain
\[
\frac{1}{N}\sum_{i=\lfloor aN \rfloor}^{\lfloor bN \rfloor} \left[\bQ_{k,j}\right]_{ii} \ip \int_a^b u(x,-\rho) \ dx + 	\rho\frac{\partial}{\partial \rho}\int_a^b u(x,-\rho) \ dx.
\]
%\begin{align*}
%\frac{1}{N}\Tr\left\{\sG_{k,j}\bPhi_{k,j}\bQ_{k,j}\right\} & \ip \omega_{j1}\phi_{k,j,1}\left(\int_0^1 u(x,-\rho) \ dx + 	\rho\frac{\partial}{\partial \rho}\int_0^1 u(x,-\rho) \ dx\right) \\
%&\qquad + \omega_{j2}\phi_{k,j,2}\left(\int_1^2 u(x,-\rho) \ dx + 	\rho\frac{\partial}{\partial \rho}\int_1^2 u(x,-\rho) \ dx \right).
%\end{align*}
Consequently, we can show $B_{k,j}^{(1)} - B_{k,j}^{(1),\infty} \ip 0$, where
\begin{align}
B_{k,j}^{(1),\infty} 
%&= \frac{\omega_{j1}+ \omega_{j2}}{2}\left[\frac{2}{\omega_{j1}+\omega_{j2}}g(\beta,\bar{\rho}) + \frac{2\rho}{\omega_{j1}+\omega_{j2}}\frac{\partial}{\partial \rho}g(\beta,\bar{\rho}) \right]  \notag\\
    %&=\frac{\omega_d+ \omega_c}{2}\left[\frac{2}{\omega_d+\omega_c}g(\beta,\bar{\rho}) +  \frac{2\rho}{\omega_d+\omega_c}\frac{\partial \bar{\rho}}{\partial \rho}\frac{\partial}{\partial \bar{\rho}}g(\beta,\bar{\rho}) \right] \notag\\
		&=\frac{\omega_{j1}\phi_{k,j,1}+\omega_{j2}\phi_{k,j,2}}{\omega_d+\omega_c}\left[g(\beta,\bar{\rho}) + \bar{\rho}\frac{\partial}{\partial \bar{\rho}}g(\beta,\bar{\rho})\right]\ .	
\end{align}
Similarly, we can also show that $B_{k,j}^{(2)} - B_{k,j}^{(2),\infty} \ip 0$ and $B_{k,j}^{(3)} - B_{k,j}^{(3),\infty} \ip 0$, where
\begin{align*}
B_{k,j}^{(2),\infty} 
		&=g(\beta,\bar{\rho}) + \bar{\rho}\frac{\partial}{\partial \bar{\rho}}g(\beta,\bar{\rho}),	\\
B_{k,j}^{(3),\infty} 
		&=\frac{\omega_{j1}\sqrt{\phi_{k,j,1}}+\omega_{j2}\sqrt{\phi_{k,j,2}}}{\omega_d+\omega_c}\left[g(\beta,\bar{\rho}) + \bar{\rho}\frac{\partial}{\partial \bar{\rho}}g(\beta,\bar{\rho})\right]\ .	
\end{align*}
 For the analog feedback case, i.e., $\bPhi_{k,j}=\bI_{2N}$,  it follows that $B_{k,j}^{(1),\infty}=B_{k,j}^{(2),\infty}=B_{k,j}^{(3),\infty}$.
Therefore, $B_{k,j}$ converges in probability to 
 \[
\frac{1}{(1+g(\beta,\bar{\rho}))^2}\left[g(\beta,\bar{\rho}) + \bar{\rho}\frac{\partial}{\partial \bar{\rho}}g(\beta,\bar{\rho})\right].
\]

\subsubsection{\texorpdfstring{$E_{k,j}$}{LSA for Ekj}}
Expanding this term gives
\begin{align}
E_{k,j}&=\frac{1}{N}\ih_{k,j}\bQ_{k,j}\ih_{k,j}^H - 2\Re\left[ \frac{F_{k,j}\left(\frac{1}{N}\hh_{k,j}\bQ_{k,j}\ih_{k,j}^H\right)}{1+A_{k,j}} \right] +	\frac{\left|F_{k,j}\right|^2\frac{1}{N}\hh_{k,j}\bQ_{k,j}\hh_{k,j}^H}{(1+A_{k,j})^2} \notag \\
	&=E_{k,j}^{(1)}-E_{k,j}^{(2)}+E_{k,j}^{(3)}. \label{eq:Ekj_MCP_AF}
\end{align}
By using the previous results, we can show that $E_{k,j}^{(2)} \as 0$ and $E_{k,j}^{(3)}\ip 0$. From Lemma \ref{th:Lemma1_jse}, it follows that $E_{k,j}^{(1)} - \frac{1}{N}\TR{\sD_{k,j}\bQ_{k,j}} \as 0$.
Following the steps in obtaining $B_{k,j}^{(1),\infty}$, it is easy to show that $E_{k,j}^{(1)}- E_{k,j}^{(1),\infty} \ip 0 $, where
\[
E_{k,j}^{(1),\infty} = \frac{\delta_{k,j,1}+\delta_{k,j,2}}{\omega_d+\omega_c}\left[g(\beta,\bar{\rho}) + \bar{\rho}\frac{\partial}{\partial \bar{\rho}}g(\beta,\bar{\rho})\right].
\]
Thus, we have  $E_{k,j}- E_{k,j}^{(1),\infty} \ip 0 $.

\subsubsection{\texorpdfstring{$c^2$}{LSA for c2}}
The denominator of $c^2$ can be written as follows
\begin{align*}
\frac{1}{N}\TR{\left(\frac{1}{N}\hH^H\hH+\rho \mathbf{I}\right)^{-2}\frac{1}{N}\hH^H\hH}=\int \frac{\lambda}{(\lambda+\rho)^2}\ dF_{\hH^H\hH}(\lambda),
\end{align*}
where $F_{\hH^H\hH}$ is the empirical eigenvalue distribution of $\hH^H\hH$. From Theorem \ref{th:Girko}, $F_{\hH^H\hH}$
converges almost surely to a limiting distribution $G^*$ whose Stieltjes transform $m(z)=\int_0^\infty \frac{1}{\lambda-z}\ dG^*(\lambda)=\int_0^1 u(x,z)\ dx$. 
Therefore, 
\begin{align}
\int \frac{\lambda}{(\lambda+\rho)^2}\ dF_{\hH^H\hH}(\lambda) &= \int \frac{1}{\lambda+\rho} + \rho \frac{\partial}{\partial \rho} \frac{1}{\lambda+\rho}\ dF_{\hH^H\hH}(\lambda) \notag\\
&\as m(-\rho)+\rho \frac{\partial}{\partial \rho}m(-\rho) \notag\\
&=\int_0^1 u(x,-\rho) +\rho \frac{\partial}{\partial \rho} u(x,-\rho)\ dx. \label{eq:lsa_c2_mcp_analog} 
\end{align}
Previously, we have shown that $\int_0^1 u(x,-\rho)=2(\omega_d+\omega_c)^{-1}g(\beta,\bar{\rho})$ with $\bar{\rho}=2\rho(\omega_d+\omega_c)^{-1}$. Hence, \eqref{eq:lsa_c2_mcp_analog} is equal to $2(\omega_d+\omega_c)^{-1}\left( g(\beta,\bar{\rho}) + \bar{\rho}\frac{\partial}{\partial \bar{\rho}}g(\beta,\bar{\rho})\right)$ and the large system result for $c^2$ is given by
\[
c^2 - \frac{\half (\omega_d+\omega_c)P_t}{g(\beta,\bar{\rho}) + \bar{\rho}\frac{\partial}{\partial \bar{\rho}}g(\beta,\bar{\rho})} \as 0.
\]

The large system results in $\textit{1) - 3)}$ and $\textit{7)}$ show that the signal strength, i.e, the numerator of \eqref{eq:sinr_mcp_expand}, converges to 
\be\label{eq:lim_ss_MCP_AF}
\frac{(\omega_d+\omega_c)Pg^2(\beta,\bar{\rho})}{(1+g(\beta,\bar{\rho}))^2\left(g(\beta,\bar{\rho})+\bar{\rho}\frac{\partial}{\partial \bar{\rho}}g(\beta,\bar{\rho})\right)},
\ee
where we already substitute $P=\half P_t$. Similarly, it follows that the interference (energy) converges to
\be\label{eq:lim_int_MCP_AF}
\frac{P}{\left(1+g(\beta,\bar{\rho})\right)^2}\left(\omega_d+\omega_c - (\delta_d+\delta_c)\left(1+g(\beta,\bar{\rho})\right)^2\right).
\ee
To simplify the expression for the limiting $\tSINR_{k,j}$, we need the following result (see \cite{RusdhaPrep})
\[
g(\beta,\bar{\rho}) + \bar{\rho}\frac{\partial}{\partial \bar{\rho}}g(\beta,\bar{\rho})=\frac{\beta g(\beta,\bar{\rho})}{\beta+\bar{\rho}(1+g(\beta,\bar{\rho}))^2}.
\]
By combining the large system results  and also denoting $\rho_\text{\tiny M,AF}=\bar{\rho}$,
we can express the limiting SINR as in \eqref{eq:lim_sinr_mcp}. This completes the proof.
%\begin{align}
%\tSINR_k^\infty &= \frac{\left(\frac{\sqrt{\phi_d}\omega_d+ \sqrt{\phi_c}\omega_c}{\omega_d+\omega_c}\right)^2g^2(\beta,\bar{\rho})}{\left[1+g(\beta,\bar{\rho})\right]^2\left[g(\beta,\bar{\rho}) + \bar{\rho}\frac{\partial}{\partial \bar{\rho}}g(\beta,\bar{\rho})\right]\left[\frac{\phi_d\omega_d+ \phi_c\omega_c }{\omega_d+\omega_c}-\left(\frac{\sqrt{\phi_d}\omega_d+ \sqrt{\phi_c}\omega_c}{\omega_d+\omega_c}\right)^2\frac{g(\beta,\bar{\rho})(2+g(\beta,\bar{\rho}))}{(1+g(\beta,\bar{\rho}))^2}+\left(\frac{\delta_d+\delta_c+\frac{2}{\gamma_d}}{\omega_d+\omega_c}\right)\right]}%\notag\\
							%&=\frac{\vartheta_d+ \vartheta_c }{2(\delta+\eta)}\gamma_e g(\beta,\bar{\rho})\frac{1+\frac{\bar{\rho}}{\beta}(1+g(\beta,\bar{\rho}))^2}{\gamma_e+\left(1+\frac{\vartheta_d+ \vartheta_c }{2(\delta+\eta)}g(\beta,\bar{\rho})\right)^2}\notag\\
							%&=\frac{\vartheta_d+ \vartheta_c }{\omega_d + \omega_c}\gamma_e g(\beta,\bar{\rho})\frac{1+\frac{\bar{\rho}}{\beta}(1+g(\beta,\bar{\rho}))^2}{\gamma_e+\left(1+\frac{\vartheta_d+ \vartheta_c }{\omega_d+\omega_c}g(\beta,\bar{\rho})\right)^2}
%\end{align}

\subsection{Proof of Theorem \ref{th:sinr_mcp_rvq}: Quantized feedback (via RVQ) case}\label{App:limSINR_mcp_rvq}
Since the $\phi_{k,j,i}$ is a random variable, while it is not in the analog feedback,  the derivation for the limiting SINR is quite different from the previous subsection. For a given  $2N \times 2N$ matrix $\bX$, we can partition it as follows 
\[
\bX_{k,j}=\ba{cc}
\bX_{k,j}^{11} & \bX_{k,j}^{12} \\
\bX_{k,j}^{21} & \bX_{k,j}^{22}
\ea,
\]
where  $\bX_{k,j}^{11}=[\bX_{k,j}]_{lm}, l=1,\cdots,N, m=1,\cdots,N$, $\bX_{k,j}^{12}=[\bX_{k,j}]_{lm}, l=1,\cdots,N, m=N+1,\cdots,2N$, $\bX_{k,j}^{21}=[\bX_{k,j}]_{lm}, l=N+1,\cdots,2N, m=1,\cdots,N$, and $\bX_{k,j}^{22}=[\bX_{k,j}]_{lm}, l=N+1,\cdots,2N, m=N+1,\cdots,2N$. 

In the following, we derive the large system limit for each term in the SINR.
\subsubsection{\texorpdfstring{$\breve{A}_{k,j}$}{LSA for brAkj}} 
We can write $\breve{A}_{k,j}$ as
\be\label{eq:brAkj}
\breve{A}_{k,j}=\frac{1}{N}\left(\phi_{k,j,1}^\half\hh_{k,j,1}\bO_{k,j}^{11}\hh_{k,j,1}^H+ \phi_{k,j,1}^\half\hh_{k,j,1}\bO_{k,j}^{12}\hh_{k,j,2}^H+\phi_{k,j,2}^\half\hh_{k,j,2}\bO_{k,j}^{21}\hh_{k,j,1}^H + \phi_{k,j,2}^\half\hh_{k,j,2}\bO_{k,j}^{22}\hh_{k,j,2}^H\right).
\ee

Since $\hh_{k,j,1}$ and $\hh_{k,j,2}$ are independent then the second and third terms converge almost surely to 0. For the first term
\[
\frac{1}{N}\hh_{k,j,1}\bO_{k,j}^{11}\hh_{k,j,1}^H - \frac{\omega_{j1}}{N}\TR{\bO_{k,j}^{11}} \as 0
\]
or equivalently,
\[
\frac{1}{N}\hh_{k,j,1}\bO_{k,j}^{11}\hh_{k,j,1}^H - \frac{\omega_{j1}}{N}\sum_{i=1}^{N} [\bO_{k,j}]_{ii} \as 0.
\]
By using the Girko's result (Theorem \ref{th:Girko}), we have
\[
\frac{1}{N}\sum_{i=1}^{N} [\bO_{k,j}]_{ii} \ip \frac{1}{1+\epsilon}g(\beta,\bar{\rho}),
\]
where $\bar{\rho}=\frac{\frac{1}{N}\alpha}{1+\epsilon}$. By using the same techniques as in \cite{Santipach_it09}, we can show that
\[
\sqrt{\phi_{k,j,i}} = \sqrt{1-\tau^2_{k,j,i}} \ms \begin{cases} \sqrt{1-2^{-\bar{B}_d}} & j=i \\  \sqrt{1-2^{-\bar{B}_c}} & \text{otherwise.}\end{cases}
\]
We should also note that the convergence in mean square sense implies the convergence in probability. By doing the same steps as for the last term of \eqref{eq:brAkj}, we have
\[
\breve{A}_{k,j} - \frac{\sqrt{1-2^{-\bar{B}_d}} + \epsilon\sqrt{1-2^{-\bar{B}_c}}}{1+\epsilon} g(\beta,\bar{\rho}) \ip 0.
\]

\subsubsection{\texorpdfstring{$A_{k,j}$}{LSA for Akj}} This term is $\breve{A}_{k,j}$ with $\bPhi_{k,j}=\bI_{2N}$. Hence, $A_{k,j} - g(\beta,\bar{\rho}) \ip 0$.

\subsubsection{\texorpdfstring{$F_{k,j}$}{LSA for Fkj}} We can expand the term as follows
\be\label{eq:Fkj}
F_{k,j}=\frac{1}{N}\left(\ih_{k,j,1}\bO_{k,j}^{11}\hh_{k,j,1}^H+ \ih_{k,j,1}\bO_{k,j}^{12}\hh_{k,j,2}^H+\ih_{k,j,2}\bO_{k,j}^{21}\hh_{k,j,1}^H + \ih_{k,j,2}\bO_{k,j}^{22}\hh_{k,j,2}^H.\right)
\ee
By using the same steps as in deriving \eqref{eq:LSA_ih_hh}, it follows that each term on the RHS of the equation above converges in probability to 0. Hence,
$F_{k,j} \ip 0$.

\subsubsection{\texorpdfstring{$B_{k,j}$}{LSA for Bkj}} 
%To derive the large system limit for $B_{k,j}$, we can rewrite the term as follows
%\begin{align}
%B_{k,j}&=\hh_{k,j}\bPhi_{k,j}^\half\bQ_{k,j}\bPhi_{k,j}^\half\hh_{k,j}^H+\frac{|\breve{A}_{k,j}|^2\left(\hh_{k,j}\bQ_{k,j}\hh_{k,j}^H\right)}{(1+A_{k,j})^2}-\frac{2\Re\left\{\breve{A}_{k,j}^*\left(\hh_{k,j}\bPhi_{k,j}^\half\bQ_{k,j}\hh_{k,j}^H\right)\right\}}{1+A_{k,j}} \notag\\
%&=B_{k,j}^{(1)}+\frac{|\breve{A}_{k,j}|^2B_{k,j}^{(2)}}{(1+A_{k,j})^2}-\frac{2\Re\{\breve{A}_{k,j}^*B_{k,j}^{(3)}\}}{1+A_{k,j}}, \label{eq:Bkj_MCP_AF}
%\end{align}
%where we denote $\bQ_{k,j}=\bO_{k,j}\hH_{k,j}^H\hH_{k,j}\bO_{k,j}$. 
By following the representation \eqref{eq:Bkj_MCP_AF} for $B_{kj}$ , we can express the first term on the RHS as
\[
B_{k,j}^{(1)}=\frac{1}{N}\left(\phi_{k,j,1}\hh_{k,j,1}\bQ_{k,j}^{11}\hh_{k,j,1}^H+ \phi_{k,j,1}\hh_{k,j,1}\bQ_{k,j}^{12}\hh_{k,j,2}^H+\phi_{k,j,2}\hh_{k,j,2}\bQ_{k,j}^{21}\hh_{k,j,1}^H + \phi_{k,j,2}\hh_{k,j,2}\bQ_{k,j}^{22}\hh_{k,j,2}^H\right).
\]
The second term and the third term of the equation above converge (almost surely) to 0. For the first term,
\[
\frac{1}{N}\hh_{k,j,1}\bQ_{k,j}^{11}\hh_{k,j,1}^H - \frac{\omega_{j1}}{N}\TR{\bQ_{k,j}^{11}} \as 0. 
\]
From \eqref{eq:Qkj_expand}, we have $\bQ_{k,j}=\bO_{k,j}+\rho\frac{\partial }{\partial \rho}\bO_{k,j}$. Hence, we can show 
\[
\frac{1}{N}\TR{\bQ_{k,j}^{11}}=\frac{1}{N}\sum_{i=1}^{ N } \left[\bQ_{k,j}^{11}\right]_{ii} \ip \int_0^1 u(x,-\rho) \ dx + 	\rho\frac{\partial }{\partial \rho}\int_0^1 u(x,-\rho) \ dx=\frac{1}{1+\epsilon}\left[g(\beta,\bar{\rho}) + \bar{\rho}\frac{\partial}{\partial \bar{\rho}}g(\beta,\bar{\rho})\right].
\]
By doing the same steps for the last term of $B_{k,j}^{(1)}$, it follows that
\[
 B_{k,j}^{(1)} - \frac{1-2^{-\bar{B}_d}+\epsilon(1-2^{-\bar{B}_c})}{1+\epsilon}\left[g(\beta,\bar{\rho}) + \bar{\rho}\frac{\partial}{\partial \bar{\rho}}g(\beta,\bar{\rho})\right] 	\ip 0.
\]
Similarly, we can also show that
\begin{align*}
& B_{k,j}^{(2)} 
		- g(\beta,\bar{\rho}) + \bar{\rho}\frac{\partial}{\partial \bar{\rho}}g(\beta,\bar{\rho}) \ip 0,	\\
& B_{k,j}^{(3)} 
		- \frac{\sqrt{1-2^{-\bar{B}_d}}+\epsilon\sqrt{1-2^{-\bar{B}_c}}}{1+\epsilon}\left[g(\beta,\bar{\rho}) + \bar{\rho}\frac{\partial}{\partial \bar{\rho}}g(\beta,\bar{\rho})\right]	 \ip 0.
\end{align*}
Combining the results together, we obtain
\[
 B_{k,j} -  \left(\frac{1-2^{-\bar{B}_d}+\epsilon(1-2^{-\bar{B}_c})}{1+\epsilon} - \frac{d^2(2+g(\beta,\bar{\rho}))g(\beta,\bar{\rho})}{(1+g(\beta,\bar{\rho}))^2}\right)\left[g(\beta,\bar{\rho}) + \bar{\rho}\frac{\partial}{\partial \bar{\rho}}g(\beta,\bar{\rho})\right]	 \ip 0,
\]
where
\[
d=\frac{\sqrt{1-2^{-\bar{B}_d}}+\epsilon\sqrt{1-2^{-\bar{B}_c}}}{1+\epsilon}.
\]

\subsubsection{\texorpdfstring{$D_{k,j}$}{LSA for Dkj}}
We can expand $D_{k,j}$ as expressed in \eqref{eq:Dkj_MCP_AF}.
Using the previous results, it follows immediately that $D_{k,j}^{(2)}$ and $D_{k,j}^{(3)}$ converge to 0. We can expand the first term  as follows
\[
D_{k,j}^{(1)} =\frac{1}{N}\left(\phi_{k,j,1}^\half\hh_{k,j,1}\mathbf{Q}_{k,j}^{11}\ih_{k,j,1}^H+ \phi_{k,j,1}^\half\hh_{k,j,1}\mathbf{Q}_{k,j}^{12}\ih_{k,j,2}^H+\phi_{k,j,2}^\half\hh_{k,j,2}\mathbf{Q}_{k,j}^{21}\ih_{k,j,1}^H + \phi_{k,j,2}^\half\hh_{k,j,2}\mathbf{Q}_{k,j}^{22}\ih_{k,j,2}^H\right).
\]
Again, by following the same steps as in deriving \eqref{eq:LSA_ih_hh}, each term converges in probability to 0 and hence $D_{k,j}^{(1)} \ip 0$. Similarly, $D_{k,j}^{(4)} \ip 0$. 
Therefore,  $D_{k,j}\ip 0$.

\subsubsection{\texorpdfstring{$E_{k,j}$}{LSA for Ekj}} The expansion of $E_{kj}$ follows \eqref{eq:Ekj_MCP_AF}. 
By applying the previous results, $E_{k,j}^{(2)}$ and $E_{k,j}^{(3)}$ converge in probability to 0. $E_{k,j}^{(1)}$ can rewritten as
\[
E_{k,j}^{(1)}=\frac{1}{N}\left(\ih_{k,j,1}\mathbf{Q}_{k,j}^{11}\ih_{k,j,1}^H+ \ih_{k,j,1}\mathbf{Q}_{k,j}^{12}\ih_{k,j,2}^H+\ih_{k,j,2}\mathbf{Q}_{k,j}^{21}\ih_{k,j,1}^H + \ih_{k,j,2}\mathbf{Q}_{k,j}^{22}\ih_{k,j,2}^H\right).
\]
Since $\ih_{k,j,1}$ and $\ih_{k,j,2}$ are independent then the second and third term converge to 0. The first term in the equation above can be written as
\begin{align*}
\frac{1}{N}\ih_{k,j,1}\mathbf{Q}_{k,j}^{11}\ih_{k,j,1}^H&=\frac{\tau^2_{k,j,1}\|\bh_{k,j,1}\|^2}{N\|\bv_{k,j,1}\bPi_{\hh_{k,j,1}}^\bot\|^2}\left(\bv_{k,j,1} - \frac{(\bv_{k,j,1}\hh^H_{k,j,1})\hh_{k,j,1}}{\|\hh_{k,j,1}\|^2}\right)\bQ_{k,j}^{11}\left(\bv_{k,j,1} - \frac{(\bv_{k,j,1}\hh^H_{k,j,1})\hh_{k,j,1}}{\|\hh_{k,j,1}\|^2}\right)^H \\
&=\frac{\tau^2_{k,j,1}\|\bh_{k,j,1}\|^2}{\|\bv_{k,j,1}\bPi_{\hh_{k,j,1}}^\bot\|^2}\left(\frac{1}{N}\bv_{k,j,1}\bQ_{k,j}^{11}\bv_{k,j,1}^H + \frac{|\frac{1}{N}\bv_{k,j,1}\hh^H_{k,j,1}|^2\frac{1}{N}\hh_{k,j,1}\bQ_{k,j}^{11}\hh_{k,j,1}^H }{\frac{1}{N^2}\|\hh_{k,j,1}\|^4} \right.\\
&\qquad\qquad\left. -2\Re\left[\frac{(\frac{1}{N}\bv_{k,j,1}\hh^H_{k,j,1})^*\frac{1}{N}\bv_{k,j,1}\bQ_{k,j}^{11}\hh_{k,j,1}^H }{\frac{1}{N}\|\hh_{k,j,1}\|^2} \right]\right).
\end{align*}
Since $\bv_{k,j,1}$ and $\hh_{k,j,1}$ are independent,  the second and third term in the bracket converge to 0. It can be shown that 
\[
\frac{1}{N}\bv_{k,j,1}\bQ_{k,j}^{11}\bv_{k,j,1}^H - \frac{1}{1+\epsilon}\left[g(\beta,\bar{\rho}) + \bar{\rho}\frac{\partial}{\partial \bar{\rho}}g(\beta,\bar{\rho})\right] \ip 0.	
\]
The large system limit for the last term of $E_{k,j}$ can be done in the same way. Thus,
\[
E_{k,j} - \frac{2^{-\bar{B}_d}+\epsilon 2^{-\bar{B}_c}}{1+\epsilon}\left[g(\beta,\bar{\rho}) + \bar{\rho}\frac{\partial}{\partial \bar{\rho}}g(\beta,\bar{\rho})\right] \ip 0.	
\]

\subsubsection{\texorpdfstring{$c^2$}{LSA for c2}} We can show that
\[
c^2 - \frac{\half P_t(1+\epsilon)}{g(\beta,\bar{\rho}) + \bar{\rho}\frac{\partial}{\partial \bar{\rho}}g(\beta,\bar{\rho})} \as 0.
\]
Since $\as$ implies $\ip$ then $c^2$ also converges weakly to the same quantity as above.

Combining the results, it follows that the signal strength and the interference converge to
\be\label{eq:lim_ss_mcp_rvq}
\frac{P(1+\epsilon)d^2g^2(\beta,\bar{\rho})}{(1+g(\beta,\bar{\rho}))^2\left(g(\beta,\bar{\rho})+\bar{\rho}\frac{\partial}{\partial \bar{\rho}}g(\beta,\bar{\rho})\right)}
\ee
and
\be\label{eq:lim_int_mcp_rvq}
P(1+\epsilon)\left(1-\frac{d^2g(\beta,\bar{\rho})(2+g(\beta,\bar{\rho}))}{(1+g(\beta,\bar{\rho}))^2}\right),
\ee
respectively. Therefore, \eqref{eq:limSINR_MCP_RVQ}  follows immediately with $\rho_\text{\tiny M,Q}=\bar{\rho}$.

\section{Large System Results for the Coordinated Beamforming}\label{App:SINR_Coord}
For brevity in the proofs, we define the following (see also \cite{Zakhour_it12}) 
\begin{align*}
\bA_{j}  &= \left(\rho + \frac{1}{N}\sum_{m=1}^2\sum_{l=1}^K \hh_{l,m,j}^H\hh_{l,m,j}\right)^{-1} \\
\bA_{kj} &= \left(\rho + \frac{1}{N}\sum_{(l,m)\neq(k,j)} \hh_{l,m,j}^H\hh_{l,m,j}\right)^{-1} \\
\bA_{kj,k'j',j} &= \left(\rho + \frac{1}{N}\sum_{(l,m)\neq(k,j),(k',j')} \hh_{l,m,j}^H\hh_{l,m,j}\right)^{-1}. 			
\end{align*}
From the definitions above, we can write the numerator of the $\tSINR_{k,j}$ \eqref{eq:sinr_coord} excluding $c_j^2$, as follows
\begin{align*}
|\bh_{k,j,j}\bw_{kj}|^2 
%&=\left|\frac{1}{N}\bh_{k,j,j}\bA_{kj}\hh_{k,j,j}^H\right|^2= \left|\frac{1}{N}(\sqrt{\phi_{k,j,j}}\hh_{k,j,j}+\ih_{k,j,j})\bA_{kj}\hh_{k,j,j}^H\right|^2 \\
												&= \left|\frac{\sqrt{\phi_{k,j,j}}}{N}\hh_{k,j,j}\bA_{kj}\hh_{k,j,j}^H\right|^2 + \left|\frac{1}{N}\ih_{k,j,j}\bA_{kj}\hh_{k,j,j}^H\right|^2 \\
												&\quad+ 2\Re\left[\frac{1}{N^2}(\ih_{k,j,j}\bA_{kj}\hh_{k,j,j}^H)(\hh_{k,j,j}\bA_{kj}\hh_{k,j,j}^H)\right] = \phi_{k,j,j}\left|S_{kj}^{(1)}\right|^2 + |S_{kj}^{(2)}|^2 +S_{kj}^{(3)}.														
\end{align*}

In denominator, let us consider the term $|\bh_{k,j,j'}\bw_{k'j'}|^2$ which can be expanded as follows
\begin{align}
|\bh_{k,j,j'}\bw_{k'j'}|^2
%&=\frac{1}{N^2}\bh_{k,j,j'}\bA_{k'j'}\hh_{k',j',j'}^H\hh_{k',j',j'}\bA_{k'j'}\bh_{k,j,j'}^H \\
													&=\frac{1}{N^2}\phi_{k,j,j'}\hh_{k,j,j'}\bA_{k'j'}\hh_{k',j',j'}^H\hh_{k',j',j'}A_{k'j'}\hh_{k,j,j'}^H	+ \frac{1}{N^2}\ih_{k,j,j'}\bA_{k'j'}\hh_{k',j',j'}^H\hh_{k',j',j'}A_{k'j'}\ih_{k,j,j'}^H \\
													&\qquad -2\Re\left[\frac{\sqrt{\phi_{k,j,j'}}}{N^2}\ih_{k,j,j'}\bA_{k'j'}\hh_{k',j',j'}^H\hh_{k',j',j'}\bA_{k'j'}\hh_{k,j,j'}^H\right]\\
&=\frac{1}{N}I_{kj,k'j'}^{(1)} + \frac{1}{N}I_{kj,k'j'}^{(2)} - \frac{1}{N}I_{kj,k'j'}^{(3)}=\frac{1}{N}\mathcal{I}. \label{eq:int_cbf}													
\end{align}

Now, we are going to derive the large system limit for $\frac{1}{N}\TR{\bA_j}$ since it will be used frequently in this section. Let $\widehat{\mathsf{H}}_j=[\hh_{1,1,j}\ \cdots\ \hh_{K,1,j} \ \hh_{1,2,j}\ \cdots\ \hh_{K,2,j}]^T$ and $\hh_{k,i,j} \sim \mathcal{CN}(0,\omega_{ij}\bI)$. Then, $\bA_j=\left(\rho + \frac{1}{N}\widehat{\mathsf{H}}_j^H\widehat{\mathsf{H}}_j\right)^{-1}$ and
\[
\frac{1}{N}\TR{\bA_j}=\int \frac{1}{\lambda+\rho}\ dF_{\widehat{\mathsf{H}}_j^H\widehat{\mathsf{H}}_j}
\]
where $F_{\widehat{\mathsf{H}}_j^H\widehat{\mathsf{H}}_j}$ is the empirical eigenvalue distribution of ${\widehat{\mathsf{H}}_j^H\widehat{\mathsf{H}}_j}$.
From Theorem \ref{th:Girko}, this distribution converges almost surely to a limiting distribution $\mathsf{F}$ whose Stieltjes transform $m_\mathsf{F}(z)$. It can be shown that
%\[
%m_\mathsf{F}(z)=\int_0^\infty \frac{1}{\lambda-z} = \int_0^1 u(x,z)\ dx.
%\] 
%Thus,
\[
\frac{1}{N}\TR{\bA_j} \as m_\mathsf{F}(-\rho) = \int_0^1 u(x,-\rho)\ dx.
\] 
where
\[
u(x,-\rho)=u(-\rho)=\frac{1}{\rho+\frac{\beta\omega_{1j}}{1+\omega_{1j}u(-\rho)}+\frac{\beta\omega_{2j}}{1+\omega_{2j}u(-\rho)}}=\frac{1}{\rho + \frac{\beta\omega_d}{1+\omega_du(-\rho)}+\frac{\beta\omega_c}{1+\omega_cu(-\rho)}}.
\]
for $0\leq x \leq 1$. Let $\Gamma=u(-\rho)$, then $\frac{1}{N}\TR{\bA_j} \as \Gamma$.

\subsection{Analog Feedback}\label{App:limSINR_CBf_AF}
Based on the channel model \eqref{eq:ch_model}, we have $\phi_{k,j,j}=\phi_{k,j,j'}=1$. The definitions for other terms such as $\omega_{\bullet}$ and $\delta_{\bullet}$ can be seen in Section \ref{ss:AF_model}. Now, let us first derive the large system limit for the numerator of the $\tSINR_{kj}$. We start with the term $S_{kj}^{(1)}$. From Lemma \ref{th:Lemma1_jse} and by applying \cite[Lemma 5.1]{Liang_it07}, we can show that
\[
\underset{j=1,2,k \leq K}{\max}\ \left|S_{kj}^{(1)}  - \frac{\omega_d}{N}\TR{\bA_{kj}} \right| \as 0. 
\]
By applying rank-1 perturbation lemma (see e.g., \cite[Lemma 3]{Zakhour_it12}, \cite[Lemma 14.3]{Couillet_book11}), we have
\[
\underset{j=1,2,k \leq K}{\max}\ \left|S_{kj}^{(1)} - \frac{\omega_d}{N}\TR{\bA_{j}} \right| \as 0 
\]
where $\frac{1}{N}\TR{\bA_j} \as \Gamma$. 

%Similarly, by following the same steps, we can also show that
%%\[
%%\underset{j=1,2,k \leq K}{\max}\ \left|\frac{1}{N}\ih_{k,j,j}\bA_{kj}\hh_{k,j,j}^H\hh_{k,j,j}\bA_{kj}\ih_{k,j,j}^H  - \frac{\delta_d}{N}\hh_{k,j,j}\bA_{kj}^2\hh^H_{k,j,j} \right| \as 0. 
%%\]
%%Since
%\[
%\underset{j=1,2,k \leq K}{\max}\ \left|S_{kj}^{(2)}  - \frac{\delta_d\omega_d}{N}\Tr\bA_j^2\right| \as 0
%\]
%and $\frac{1}{N}\Tr\bA_{j}^2=-\frac{\partial}{\partial \rho} \frac{1}{N}\Tr\bA_{j}^{-1} \as -\frac{\partial \Gamma}{\partial \rho}$,
%we have
%\[
%\underset{j=1,2,k \leq K}{\max}\ \left|S_{kj}^{(2)}  - \left(\delta_d\omega_d\frac{\partial \Gamma}{\partial \rho}\right) \right| \as 0. 
%\]

Since $\hh_{k,j,j}$, $\bA_{kj}$ and $\ih_{k,j,j}$ are independent then it follows that  
\[
\underset{j=1,2,k \leq K}{\max}\ \left|\ih_{k,j,j}\bA_{kj}\hh_{k,j,j}^H \right| \as 0. 
\]
Consequently,
\[
\underset{j=1,2,k \leq K}{\max}\ \left|S_{kj}^{(2)}\right| \as 0 \text{ and }\ \underset{j=1,2,k \leq K}{\max}\ \left|S_{kj}^{(3)}\right| \as 0.
\]
In summary,
\[
\underset{j=1,2,k \leq K}{\max}\ \left||\bh_{k,j,j}\bw_{kj}|^2 - \omega_d^2\Gamma^2 \right| \as 0. 
\]

Now, let us move in analyzing the interference term. By using the matrix inversion lemma, we can rewrite $I_{kj,k'j'}^{(1)}$ as
\[
I_{kj,k'j'}^{(1)}=\frac{\frac{1}{N}\phi_{k,j,j'}\hh_{k,j,j'}\bA_{k'j',kj,j'}\hh_{k',j',j'}^H\hh_{k',j',j'}\bA_{k'j',kj,j'}\hh_{k,j,j'}^H}{\left(1+\frac{1}{N}\hh_{k,j,j'}\bA_{k'j',kj,j'}\hh_{k,j,j'}^H\right)^2}.
\]  
By applying Lemma \ref{th:Lemma1_jse}, \cite[Lemma 5.1]{Liang_it07} and  rank-1 perturbation lemma (R1PL) twice, we can show
%\[
%\underset{j,j'=1,2,k,k' \leq K,(k,j)\neq(k',j')}{\max}\ \left|\frac{1}{N}\hh_{k,j,j'}\bA_{k'j',kj,j'}\hh_{k,j,j'}^H - \frac{\omega_{jj'}}{N}\Tr\bA_{k'j',kj,j'} \right| \as 0 
%\]
%and after applying \cite[Lemma 3]{Zakhour_it12} twice, we have
\[
\underset{j,j'=1,2,k,k' \leq K,(k,j)\neq(k',j')}{\max}\ \left|\frac{1}{N}\hh_{k,j,j'}\bA_{k'j',kj,j'}\hh_{k,j,j'}^H - \frac{\omega_{jj'}}{N}\TR{\bA_{j'}} \right| \as 0. 
\]
Similarly,
%\[
%\underset{j,j'=1,2,k,k' \leq K,(k,j)\neq(k',j')}{\max}\ \left|\frac{1}{N}\hh_{k,j,j'}\bA_{k'j',kj,j'}\hh_{k',j',j'}^H\hh_{k',j',j'}\bA_{k'j',kj,j'}\hh_{k,j,j'}^H - \frac{\omega_{jj'}\omega_d}{N}\Tr\bA^2_{k'j',kj,j'} \right| \as 0 
%\]
%and after applying \cite[Lemma 3]{Zakhour_it12} twice, we have
\[
\underset{j,j'=1,2,k,k' \leq K,(k,j)\neq(k',j')}{\max}\ \left|\frac{1}{N}\hh_{k,j,j'}\bA_{k'j',kj,j'}\hh_{k',j',j'}^H\hh_{k',j',j'}\bA_{k'j',kj,j'}\hh_{k,j,j'}^H - \frac{\omega_{jj'}\omega_d}{N}\TR{\bA^2_{j'}} \right| \as 0. 
\]
Since $\frac{1}{N}\TR{\bA_{j'}} \to \Gamma$ and $\frac{1}{N}\TR{\bA_{j'}^2} \to -\frac{\partial \Gamma}{\partial \rho}$, we have
\[
\underset{j,j'=1,2,k,k' \leq K,(k,j)\neq(k',j')}{\max}\ \left|I_{kj,k'j'}^{(1)} - \omega_d\left(-\frac{\omega_{jj'}}{(1+\omega_{jj'}\Gamma)^2}\frac{\partial \Gamma}{\partial \rho}\right) \right| \as 0. 
\]
 
By following the same steps, we obtain
\[
\underset{j,j'=1,2,k,k' \leq K,(k,j)\neq(k',j')}{\max}\ \left|I_{kj,k'j'}^{(2)}   - \left(-\delta_{jj'}\omega_d\frac{\partial \Gamma}{\partial \rho}\right)\right| \as 0. 
\]
and 
%\[
%\underset{j,j'=1,2,k,k' \leq K,(k,j)\neq(k',j')}{\max}\ \left|\frac{1}{N}\ih_{k,j,j'}\bA_{k'j'}\hh_{k',j',j'}^H\right| \as 0. 
%\]
%Thus 
\[
\underset{j,j'=1,2,k,k' \leq K,(k,j)\neq(k',j')}{\max}\ \left|I_{kj,k'j'}^{(3)}\right| \as 0.
\]
Combining the results, we have the large system limit for $\mathcal{I}$ in \eqref{eq:int_cbf}
\be\label{eq:lim_CompInt_CBf}
\underset{j,j'=1,2,k,k' \leq K,(k,j)\neq(k',j')}{\max}\ \left|\mathcal{I}-\omega_d\left(-\frac{\omega_{jj'}}{(1+\omega_{jj'}\Gamma)^2}-\delta_{jj'} \right)\frac{\partial \Gamma}{\partial \rho} \right| \as 0. 
\ee
Using \eqref{eq:lim_CompInt_CBf}, the large system result for the interference term can be written as follows
\begin{align*}
\sum_{(k',j')\neq (k,j)} |\bh_{k,j,j'}\bw_{k'j'}|^2 &= \sum_{l=1,l\neq k}^K  |\bh_{k,j,j}\bw_{lj}|^2 + \sum_{l=1}^K  |\bh_{k,j,\bar{j}}\bw_{l\bar{j}}|^2\\
   & \as -\beta\omega_d\left(\frac{\omega_{d}}{(1+\omega_{d}\Gamma)^2}+\frac{\omega_{c}}{(1+\omega_{c}\Gamma)^2}+\delta_{d} + \delta_{c} \right)\frac{\partial \Gamma}{\partial \rho} .
\end{align*}
Now, we just need to derive the large system limit for 
$c_j^2=P\left(\sum_{k=1}^K \|\bw_{kj}\|^2\right)^{-1}$, where we can express $\|\bw_{kj}\|^2=\frac{1}{N^2}\hh_{k,j,j}\bA_{kj}^2\hh_{k,j,j}^H$.
We can show that
\[
\underset{j=1,2,k \leq K}{\max}\ \left|\frac{1}{N^2}\hh_{k,j,j}\bA_{kj}^2\hh_{k,j,j}^H -\frac{\omega_d}{N}\TR{\bA_{j}^2} \right| \as 0. 
\]
Thus,
\[
c_j^2 \as \frac{P}{-\beta\omega_d\frac{\partial \Gamma}{\partial \rho} },
\]
where we can show that
\be\label{eq:der_gamma_rho_CBf_AF}
-\frac{\partial \Gamma}{\partial \rho}=-\Gamma'=\frac{\Gamma}{\rho+\frac{\beta\omega_c}{(1+\omega_c\Gamma)^2}+\frac{\beta\omega_d}{(1+\omega_d\Gamma)^2}}.
\ee

To sum up, from the analyses above, we can express the limiting signal energy as
\be\label{eq:lim_ss_cbf_af}
\frac{1}{\beta} P\omega_d \Gamma  \left(\rho+\frac{\beta\omega_c}{(1+\omega_c\Gamma)^2}+\frac{\beta\omega_d}{(1+\omega_d\Gamma)^2}\right)
\ee
and the limiting interference energy as
\be\label{eq:lim_int_cbf_af}
P\left(\frac{\omega_{d}}{(1+\omega_{d}\Gamma)^2}+\frac{\omega_{c}}{(1+\omega_{c}\Gamma)^2}+\delta_{d} + \delta_{c} \right)
\ee
Finally, the limiting SINR can be expressed as \eqref{eq:limSINR_coord_AF}, with $\Gamma_A=\Gamma$ and $\rho_\text{\tiny C,AF}=\rho$.
%\begin{align}%\label{eq:limSINR_coord}
%\tSINR^\infty = -\frac{\omega_d\Gamma^2}{\beta\left(\frac{\sigma^2}{P_t}+\frac{\omega_{d}}{(1+\omega_{d}\Gamma)^2}+\frac{\omega_{c}}{(1+\omega_{c}\Gamma)^2}+\delta_{d} + \delta_{c}\right)\frac{\partial \Gamma}{\partial \rho}}
%\end{align}

\subsection{Proof of Theorem \ref{th:sinr_coord_rvq}: Quantized Feedback via RVQ}\label{App:limSINR_CBf_RVQ}
In the derivation of the large system limit SINR in this section, we use some of the results presented in the previous section. Here, we have $\omega_{jj}=\omega_d=1$ and $\omega_{j\bar{j}}=\omega_c=\epsilon$. From \eqref{eq:ch_model}, we have $\phi_{k,j,i}=1-\tau^2_{k,j,i}$.
 
First, let us consider the numerator of the SINR. By using the result from previous section, we have
\[
\underset{j=1,2,k \leq K}{\max}\ \left|S_{kj}^{(1)} - \frac{1}{N}\TR{\bA_{j}} \right| \as 0 
\]
where $\frac{1}{N}\TR{\bA_j} \as \Gamma$ and $\Gamma$ is the solution of 
\[
\Gamma=\frac{1}{\rho + \frac{\beta}{1+\Gamma}+\frac{\beta\epsilon}{1+\epsilon\Gamma}}.
\]
As stated in \cite{Santipach_it09}, we have,
\be\label{eq:Ed_MCP_RVQ}
\phi_{k,j,j}\ms 1-2^{-\bar{B}_d}.  
\ee 
 Since almost sure convergence and convergence in mean square imply the convergence in probability  then 
\[
\phi_{k,j,j}|S_{kj}^{(1)}|^2 -(1-2^{-\bar{B}_d})\Gamma^2 \ip 0.
\]

By using \eqref{eq:chmodel_rvq}, the term $\ih_{k,j,j}\bA_{kj}\hh_{k,j,j}^H$ in $S_{kj}^{(3)}$ can be rewritten as
\begin{align*}
\frac{1}{N}\ih_{k,j,j}\bA_{kj}\hh_{k,j,j}^H&=\frac{\tau_{k,j,j}\|\bh_{k,j,j}\|}{\|\bv_{k,j,j}\bPi_{\hh_{k,j,j}}^\bot\|}\left(\frac{1}{N}\bv_{k,i,j}\bPi_{\hh_{k,i,j}}^\bot\hh_{k,j,j}^H\right) \\
&=\frac{\tau_{k,j,j}\|\bh_{k,j,j}\|}{\|\bv_{k,j,j}\bPi_{\hh_{k,j,j}}^\bot\|}\left(\frac{1}{N}\bv_{k,j,j}\bA_{kj}\hh_{k,j,j}^H - \frac{(\frac{1}{N}\bv_{k,j,j}\hh^H_{k,j,j})\hh_{k,j,j}\bA_{kj}\hh_{k,j,j}^H}{\|\hh_{k,j,j}\|^2}\right).
\end{align*}
Since $\bv_{k,j,j}$ and $\hh^H_{k,j,j}$ are independent, then
\[
\underset{j=1,2,k \leq K}{\max}\ \left|\frac{1}{N}\bv_{k,j,j}\hh_{k,j,j}^H \right|\as 0, \text{ and } \underset{j=1,2,k \leq K}{\max}\ \left|\frac{1}{N}\bv_{k,j,j}\bA_{kj}\hh_{k,j,j}^H \right|\as 0.
\]
It can also be shown that
\[
\underset{j=1,2,k \leq K}{\max}\ \left|\frac{1}{N}\|\bh_{k,j,j}\|^2 - 1\right| \as 0, \text{ and }  \underset{j=1,2,k \leq K}{\max}\ \left|\frac{1}{N}\|\bv_{k,j,j}\bPi_{\hh_{k,j,j}}^\bot\|^2 - 1\right| \as 0.
\]
Hence,
\be\label{eq:LSA_ih_hh}
\frac{1}{N}\ih_{k,j,j}\bA_{kj}\hh_{k,j,j}^H \ip 0,
\ee
and thus,
\[
S_{kj}^{(2)} \ip 0, \text{ and }  S_{kj}^{(3)} \ip 0. 
\]

Putting the results together, we have the following for the numerator
\[
|\bh_{k,j,j}\bw_{kj}|^2 - (1-2^{-\bar{B}_d})\Gamma^2 \ip 0.
\]
Now, let us consider the interference terms. By using the same steps as in the previous section, we can show the followings
\[
\underset{j,j'=1,2,k,k' \leq K,(k,j)\neq(k',j')}{\max}\ \left|I_{kj,k'j'}^{(1)} - \left(-\frac{(1-2^{-\bar{B}_{jj'}})\omega_{jj'}}{(1+\omega_{jj'}\Gamma)^2}\frac{\partial \Gamma}{\partial \rho}\right)\right| \ip 0, 
\]
\[
\underset{j,j'=1,2,k,k' \leq K,(k,j)\neq(k',j')}{\max}\ \left|I_{kj,k'j'}^{(2)}   - \epsilon_{jj'}2^{-\bar{B}_{jj'}}\frac{\partial \Gamma}{\partial \rho}\right| \ip 0, 
\]
and 
\[
\underset{j,j'=1,2,k,k' \leq K,(k,j)\neq(k',j')}{\max}\ |I_{kj,k'j'}^{(3)}|\ip 0,
\] 
where $\bar{B}_{jj'}=\bar{B}_d$ when $j=j'$ and otherwise $\bar{B}_{jj'}=\bar{B}_c$.

Combining the results, we have
\be\label{eq:lim_CompInt_CBf_rvq}
\underset{j,j'=1,2,k,k' \leq K,(k,j)\neq(k',j')}{\max}\ \left|\mathcal{I}-\left(-\frac{(1-2^{-\bar{B}_{jj'}})\omega_{jj'}}{(1+\omega_{jj'}\Gamma)^2}-\epsilon_{jj'}2^{-\bar{B}_{jj'}}\right)\frac{\partial \Gamma}{\partial \rho} \right| \ip 0. 
\ee
Using \eqref{eq:lim_CompInt_CBf_rvq}, the large system result for the interference term can be written as follows
\begin{align*}
\sum_{(k',j')\neq (k,j)} |\bh_{k,j,j'}\bw_{k'j'}|^2 &= \sum_{l=1,l\neq k}^K  |\bh_{k,j,j}\bw_{lj}|^2 + \sum_{l=1}^K  |\bh_{k,j,\bar{j}}\bw_{l\bar{j}}|^2\\
   & \ip -\beta\left(\frac{1-2^{-\bar{B}_d}}{(1+\Gamma)^2}+\frac{\epsilon(1-2^{-\bar{B}_c})}{(1+\epsilon\Gamma)^2}+2^{-\bar{B}_d} + \epsilon 2^{-\bar{B}_c} \right)\frac{\partial \Gamma}{\partial \rho} .
\end{align*}

By using the result from the previous results straightforwardly, we have $c_j^2 \as \frac{P}{-\beta\frac{\partial \Gamma}{\partial \rho} }$. 
Putting all the large system results for each term, we can show that the limiting signal strength is
\be\label{eq:lim_ss_cbf_rvq}
 \frac{1}{\beta} P\phi_d \Gamma  \left(\rho+\frac{\beta\epsilon}{(1+\epsilon\Gamma)^2}+\frac{\beta}{(1+\Gamma)^2}\right)
\ee
and the limiting interference energy becomes
\be\label{eq:lim_int_cbf_rvq}
P\left(\frac{\phi_d}{(1+\Gamma)^2}+\frac{\epsilon\phi_c}{(1+\epsilon\Gamma)^2} + \delta_d+\delta_c\right).
\ee
Let $\rho_\text{\tiny C,Q}=\rho$ and $\Gamma_Q=\Gamma$. Then, we can obtain the limiting SINR given by \eqref{eq:limSINR_CBf_RVQ} from \eqref{eq:lim_ss_cbf_rvq} and \eqref{eq:lim_int_cbf_rvq} straightforwardly.

\bibliographystyle{IEEEtran}
\bibliography{multicell}

\end{document}